\def\today{\number\day\ \ifcase\month\or
 January\or February\or March\or April\or May\or June\or
 July\or August\or September\or October\or November\or December\fi
 \space \number\year}
\def\gobble#1#2{}
\def\shortdate{\number\day/\number\month/\expandafter\gobble\number\year}
\def\ifrac#1#2{#1/#2}
\def\N{\mathbb{N}}
\def\R{\mathbb{R}}
\def\p{Pain\-lev\'e}
\def\peqs{\p\ equations}
\def\dPI{\mbox{\rm dP$_{\rm I}$}}
\def\det{\mathop{\rm det}\nolimits}
\def\ds{\displaystyle}
\def\Ai{\mathop{\rm Ai}\nolimits}
\def\Bi{\mathop{\rm Bi}\nolimits}
\newcommand{\HyperpFq}[2]{{}_{#1}F_{#2}}
\newcommand\mycom[2]{\genfrac{}{}{0pt}{}{#1}{#2}}
\def\pFq#1#2#3#4#5{\HyperpFq{#1}{#2}\!\left(\mycom{#3}{#4};#5\right)}
\def\Fpq#1#2#3#4#5{\HyperpFq{#1}{#2}\left({#3;#4};#5\right)}
\def\Fpqq#1#2#3#4#5{\HyperpFq{#1}{#2}\!\left({#3;#4};#5\right)}
\def\rmd{\mathrm{d}}\def\rme{\mathrm{e}}\def\rmi{\mathrm{i}}
\def\a{\alpha}
\def\b{\beta}
\def\ga{\gamma}
\def\la{\lambda}
\def\k{\kappa}
\def\w{\omega}
\def\ep{\varepsilon}
\def\vth{\vartheta}
\def\ph{\varphi}
\def\vph{\varphi}
\def\A{\mathcal{A}}
\def\B{\mathcal{B}}
\def\O{\mathcal{O}}
\newcommand{\Real}{\mathbb{R}}
\newcommand{\Com}{\mathbb{C}}
\newcommand{\Integer}{\mathbb{Z}}
\def\Z{\Integer}
\def\dx{\rmd x}\def\dy{\rmd y}
\newcommand{\deriv}[3][]{\frac{\rmd^{#1}{#2}}{{\rmd{#3}}^{#1}}}
\newcommand{\pderiv}[3][]{\frac{\partial^{#1}{#2}}{{\partial{#3}}^{#1}}}
\newtheorem{theorem}{Theorem}[section]
\newtheorem{lemma}[theorem]{Lemma}
\newtheorem{corollary}[theorem]{Corollary}
\theoremstyle{definition}
\newtheorem{example}[theorem]{Example}
\newtheorem{remark}[theorem]{Remark}
\newtheorem{remarks}[theorem]{Remarks}
\newtheorem{conjecture}[theorem]{Conjecture}
\numberwithin{figure}{section}
\numberwithin{equation}{section}
\numberwithin{table}{section}
\newcommand{\comment}[1]{}
\def\imp{\int_{-\infty}^{\infty}}
\def\beq{\begin{equation}}
\def\eeq{\end{equation}}
\def\ds{\displaystyle}
\definecolor{dkg}{rgb}{0,0.7,0}
\definecolor{dkp}{rgb}{0.8,0.05,0.2}
\definecolor{dkr}{rgb}{0.9,0,0}
\definecolor{dkb}{rgb}{0,0,0.7}
\definecolor{purple}{rgb}{0.5,0,0.7}
\def\blue#1{\textcolor{blue}{#1}}
\definecolor{gold}{rgb}{0.83, 0.69, 0.22}
\def\etal{\textit{et al.}}
\def\red#1{\textcolor{red}{#1}}
\def\purple#1{\textcolor{purple}{#1}}
\def\green#1{\textcolor{dkg}{#1}}
\def\Tr{\mathop{\rm Tr}\nolimits}
\def\fig#1#2{\includegraphics[width=#1in]{Figs/#2}}
\def\tfig#1#2{\includegraphics[width=#1cm]{Figs/#2}}
\def\ww#1{\w(#1;\tau,t)}
\begin{document}

\title{Symmetric Sextic Freud Weight}

\author{Peter A. Clarkson$^{1}$, Kerstin Jordaan$^{2}$ and Ana Loureiro$^{1}$\\[5pt]
$^{1}$ School of Engineering, Mathematics and Physics,\\ University of Kent, Canterbury, CT2 7NF, UK\\[2.5pt]
$^{2}$ Department of Decision Sciences,\\
University of South Africa, Pretoria, 0003, South Africa 
\\ Email: {P.A.Clarkson@kent.ac.uk}, {jordakh@unisa.ac.za}, {A.Loureiro@kent.ac.uk}%\\
%ORCID iDs: PAC: 0000-0002-8777-5284;\enskip KJ: 0000-0002-1675-5366;\enskip AL: 0000-0002-4137-8822}
}
\maketitle

\noindent{\textbf{Keywords}: Freud weight; semi-classical orthogonal polynomials; recurrence coefficients; generalised hypergeometric functions; discrete Painlev\'{e} equations; Hermitian random matrices}

\smallskip\noindent{\textbf{Mathematics Subject Classification (2020)}: 33C47 42C05 65Q30}

\begin{abstract}
In this study we are concerned with the properties of the sequence of coefficients $(\b_n)_{n\geq0}$ in the recurrence relation satisfied by the sequence of monic symmetric polynomials, orthogonal with respect to the symmetric sextic Freud weight
$$\omega(x;\tau, t) = \exp\!\big(\!-x^6 +\tau x^4 + t x^2\big), \qquad x \in \mathbb{R},$$
with real parameters $\tau$ and $t$. 
It is known that the recurrence coefficients $\b_n$ for the symmetric sextic Freud weight satisfy a fourth-order nonlinear discrete equation, which is a special case of the second member of the discrete Painlev\'{e} I hierarchy, often known as the ``string equation".
The recurrence coefficients have been studied in the context of Hermitian one-matrix models and random symmetric matrix ensembles with researchers in the 1990s observing ``chaotic, pseudo-oscillatory" behaviour. More recently, this ``chaotic phase" was described as a dispersive shockwave in a hydrodynamic chain.

Our emphasis is a comprehensive study of the behaviour of the recurrence coefficients as the parameters $\tau$ and $t$ vary. Extensive
computational analysis is carried out, using Maple, for critical parameter ranges, and graphical plots are presented to illustrate the behaviour of the recurrence coefficients as well as the complexity of the associated Volterra lattice hierarchy. The corresponding symmetric sextic Freud polynomials are shown to satisfy a second-order differential equation with rational coefficients. The moments of the weight are examined in detail, including their integral representations, differential equations, and recursive structure. Closed-form expressions for moments are obtained in several special cases in terms of generalised hypergeometric functions and modified Bessel functions, and asymptotic expansions for the recurrence coefficients are given. The results highlight the rich algebraic and analytic structures underlying the Freud weight and its connections to integrable systems.

\end{abstract}

{\tableofcontents}
%\listoffigures
 
\section{Introduction}\label{Intro}
The main goal of this paper is to analyse the behaviour of the sequence $(\b_n)_{n\geq 0}$, where $\b_n:= \b_n(\tau,t)$ are the recurrence coefficients in the second order recurrence relation \eqref{eq:srr} satisfied by the sequence of monic orthogonal polynomials with respect to the \textit{symmetric sextic Freud weight}
\beq \label{Freud642}
\w(x;\tau,t)=\exp\{-U(x)\},
\qquad U(x)=x^6-\tau x^4-tx^2,\qquad x\in\R,\eeq
under the assumption that $\tau$ and $t$ are real parameters. 
Much is known regarding the asymptotic behaviour of these coefficients for large $n$. Our goal is to discuss the behaviour of these parameters for finite $n$ whilst discussing hidden structures as the pair of parameters $\tau$ and $t$ vary.
Such coefficients $\b_n$ are known to be solutions of a fourth-order discrete equation \cite{refIK90,refJurk91,refLech92,refLechRR,refSen92}, which is
the second member of the discrete \p\ I hierarchy (dP$_{\rm I}^{(2)}$), as discussed in \S\ref{sec:rec}. The description of their rich structure is the main aim of this paper, where throughout sections \S\ref{Sec:NumComp} and \S\ref{sec:2D} we describe the behaviour of the sequence depending on the choice for the regions of the pair of parameters $\tau$ and $t$. The asymptotic behaviour of $\b_n$ when $n$ is large is described in Theorem \ref{FC} via a cubic, which naturally depends on the pair $\tau$ and $t$. This behaviour is expected and remains consistent regardless of the relationship the values of $\tau$ and $t$. However, the sequence $(\b_n)_{n\geq 0}$ changes for initial values of $n$, up to a critical region, depending on how the ratio $\k :=-\ifrac{t}{\tau^2}$ varies. The parameter values of primary interest are when $\tau>0$ and $-\frac23\leq \k \leq \frac25$ and we analyse these critical regions in \S\ref{Sec:NumComp}. Further illustrations of the {comparative} behaviour of two consecutive terms is {given} in \S\ref{sec:2D}. The initial conditions of the recurrence relation satisfied by $(\b_n)_{n\geq 0}$ depend on the first elements of the moment sequence associated with the weight function \eqref{Freud642} on the real line. Such moments are themselves special functions and solutions to a linear third order recurrence relation of hypergeometric type -- see \S\ref{sec:moments} for properties and \S\ref{sec:closed moments} for explicit expressions. 

The problem we are addressing arises in a variety of contexts, other than orthogonal polynomials. Most notably, this study has a clear motivation within the context of random matrix theory. 
It is known that there is a deep connection between random matrix theory and orthogonal polynomials, providing a framework for understanding the statistical properties of random matrices, cf.~\cite{refBleher,refBleherIts99,refDeift,refMehta,refWVA2020}. Specifically, certain random matrix ensembles have their eigenvalue distributions described by determinants related to orthogonal polynomials.
In the following, we briefly report on that connection and review previous and relevant studies linked to this problem.

An equivalent weight function to \eqref{Freud642} is
\beq \label{Freud642g}
w(x)=\exp\{-N V(x)\},\qquad V(x)=g_6x^6 + g_4x^4 + g_2 x^2,\qquad x\in\R,\eeq
with parameters $N$, $g_2$, $g_4$ and $g_6>0$, {which is the weight that arises in random matrix theory}. The terms of the sequence of monic orthogonal polynomials $\big(P_n(x)\big)_{n=0}^{\infty}$ with respect to this weight can be described via the $n$-fold Heine integral 
\[
P_n(x) =\frac{1}{n!\Delta_n} \int_{\Real^n} \prod_{j=1}^n (x-\la_j) \prod_{1\leq j<k\leq n} (\la_j-\la_k)^2 \prod_{\ell=1}^n\exp\big\{-N V(\la_\ell)\big\}\, \rmd\la_1\,\rmd\la_2 \ldots \rmd\la_n, 
\]
where $\Delta_n$ is the well-known Heine formula
\beq\label{heine}
\Delta_n =\frac{1}{n!} \int_{\Real^n} \prod_{1\leq j<k\leq n} (\la_j-\la_k)^2 \prod_{\ell=1}^n\exp\big\{-N V(\la_\ell)\big\}\, \rmd\la_1\,\rmd\la_2 \ldots \rmd\la_n, 
\eeq
under the assumption that $\la_1< \la_2 <\ldots <\la_n$. 
These polynomials are closely related with the study of unitary ensembles of random matrices associated with a family of probability measures of the form 
\[
\rmd \mu(\mathbf{M}) =\frac{1}{\mathcal{Z}_N} \exp\{-N\Tr [V(\mathbf{M}) ] \}\, \rmd\mathbf{M},
\]
on the space of $N\times N$ Hermitian matrices, $\mathcal{H}_N$. The scalar polynomial function $V(x)$ is referred to as the polynomial of the external field and $\mathcal{Z}_N$ is a normalisation factor in the unitary ensemble measures. Consider the change of variables $\mathbf{M}\mapsto (\boldsymbol{\Lambda},\mathbf{U})$, where $\mathbf{U}$ is a unitary matrix and $\boldsymbol{\Lambda}= \mathrm{diag}(\la_1,\la_2,\ldots,\la_N)$ with $\la_1< \la_2 <\ldots <\la_N$ representing the eigenvalues of the Hermitian matrix $\mathbf{M}=\mathbf{U} \boldsymbol{\Lambda} \mathbf{U}^*$. For any function $f: \mathcal{H}_N \rightarrow \mathcal{C} $ such that $f(\mathbf{U} \mathbf{M} \mathbf{U}^*) = f(\mathbf{M})$, by the Weyl integration formula for class functions, one has 
\[
\int f(\mathbf{M})\, \rmd\mathbf{M} = \pi^{N(N-1)/2}\prod_{j=1}^N\frac{1}{j!} \int_{\Real^N} f(\la_1,\la_2,\ldots,\la_N) \prod_{1\leq j<k\leq n} (\la_j-\la_k)^2\,\rmd\la_1\,\rmd\la_2 \ldots \rmd\la_N. 
\]
Hence, the normalisation factor $\mathcal{Z}_N$, known as partition function, is given by 
\begin{align*}
\mathcal{Z}_N &= \int_{\mathcal{H}_N} \exp\{-N\Tr [V(\mathbf{M}) ] \}\, \rmd\mathbf{M} \\
&= \pi^{N(N-1)/2} \prod_{j=1}^N\frac{1}{j!} \int_{\Real^N} \prod_{\ell=1}^N\exp\big\{-N V(\la_\ell)\big\}
\prod_{1\leq j<k\leq n} (\la_j-\la_k)^2\,\rmd\la_1\,\rmd\la_2 \ldots \rmd\la_N\\
 &=\pi^{N(N-1)/2} \prod_{j=1}^{N-1}\frac{1}{j!}\,\Delta_N. 
\end{align*}
All correlation functions between the eigenvalues $(\la_1,\la_2,\ldots,\la_N)$ can be expressed as determinants
in terms of the standard Christoffel-Darboux kernel formed from them 
\[
\mathcal{K}_N(x,y) :=\sum_{j=0}^{N-1}\frac{1}{h_j}P_j(x) P_j(y) \exp\{-\tfrac{1}{2}N [V(x)+V(y)]\},
\]
where $h_j = \int_\Real P_j^2(x) \exp\{-NV(x)\}\,\rmd x$. 

The simplest choice of $ V(x)=\tfrac{1}{2} x^2$, the classical Gaussian weight, implies the free matrix elements are i.i.d.\ normal variables $\mathcal{N}(0,1/(4N))$ if $j<k$ and i.i.d.\ normal variables $\mathcal{N}(0,1/(2N))$ if $j=k$. This leads to the Gaussian Unitary Ensemble (GUE). In this case, the eigenvalues of a random matrix in GUE form a determinantal point process with the Christoffel-Darboux kernel of Hermite polynomials (modulo a scale). The average characteristic polynomial, $\mathbb{E}(\det(x\mathbf{I}-\mathbf{M}))$ are essentially scaled Hermite polynomials (and therefore, on the average, the eigenvalues behave as the zeros of Hermite polynomials). 

The quartic model 
\beq V(x) = g_2 x^2 + g_4 x^4,\label{Freud4} \eeq
with $g_2$ and $g_4>0$ parameters, was first studied by Br\'ezin \etal\ \cite{refBIPZ} and Bessis \etal\ \cite{refBessis,refBIZ}. Subsequently there have been numerous studies, e.g.~\cite{refBarhoumi,refBT15,refBT16,refBGMcL,refBleherIts99,refBleherIts03,refCGMcL,refDK,refSV14,refWongZhang09}.

The sextic model 
\beq V(x) = g_2 x^2 + g_4 x^4+ g_6 x^6,\label{Freud6}\eeq
with $g_2$, $g_4$ and $g_6>0$ parameters, which gives \eqref{Freud642g}, is the simplest weight to give rise to multi-critical points, cf.~\cite{refBMN,refBK90,refCicuta,refCMM,refDoug,refDSS,refDS,refGMiga}.

Br\'ezin \etal\ \cite{refBMP} consider the weight \eqref{Freud642g} with $g_2=90$, $g_4=-15$ and $g_6=1$, which is a critical case and discussed in \S\ref{casei}.
Fokas \etal\ \cite{refFIK91,refIK90} investigated the weight \eqref{Freud642g}
with %\[g_2=\tfrac{1}{2},\qquad g_4<0,\qquad 0\leq 5g_6<4g_4^2,\] 
$g_2=\tfrac{1}{2}$, $g_4<0$ and $0\leq 5g_6<4g_4^2$
in their study of the continuous limit for the Hermitian matrix model in connection with the nonperturbative theory of two-dimensional quantum gravity. 

The behaviour of the recurrence coefficients $\b_n$ {for the weight \eqref{Freud642g}} was studied, primarily by numerical methods in the early 1990s by
Demeterfi \etal\ \cite{refDDJT}, Jurkiewicz \cite{refJurk90}, Lechtenfeld \cite{refLech92,refLech92b,refLechRR}, Sasaki and Suzuki \cite{refSS91} and S\'en\'echal \cite{refSen92}.
The conclusion was that behaviour of the recursion coefficients was ``chaotic”, 
e.g.~Jurkiewicz \cite{refJurk90} states that the recurrence coefficients ``show a chaotic, pseudo-oscillatory behaviour".
As explained in \S\ref{Sec:NumComp}, Jurkiewicz \cite{refJurk91}, Sasaki and Suzuki \cite{refSS91} and S\'en\'echal \cite{refSen92} used one method to numerically compute the recurrence coefficients, whilst Demeterfi \etal\ \cite{refDDJT} and Lechtenfeld \cite{refLech92,refLech92b,refLechRR} used a different approach.
Bonnet \etal\ \cite{refBDE00} in a subsequent study 
concluded that ``in the two-cut case the behaviour is always periodic or quasi-periodic and never chaotic (in the mathematical sense)". Further studies of the quasi-periodic asymptotic behaviour of the recurrence coefficients include
Eynard \cite{refEynard09}, Eynard and Marino \cite{refEynM11} and Borot and Guionnet \cite{refBG24}.
Recently Benassi and Moro \cite{refBM20}, see also \cite{refDellAtti}, interpreted Jurkiewicz's ``chaotic phase" as a dispersive shock propagating through the chain in the thermodynamic limit and explain the complexity of its phase diagram in the context of dispersive hydrodynamics. 

Deift \etal\ \cite{refDKMVZa,refDKMVZb}
discuss the asymptotics of orthogonal polynomials with respect to the weight $\exp\{-Q(x)\}$, where $Q(x)$ is a polynomial of even degree with positive leading coefficient, using a Riemann-Hilbert approach; see also \cite{refEMcL}.
Bertola \etal\ \cite{refBEH03,refBEH06} discuss the relationship between partition functions for matrix models, semi-classical orthogonal polynomials and associated isomonodromic tau functions.

Hermitian random matrix models have a wide variety of physical and mathematical applications including two-dimensional quantum gravity \cite{refDiFGZJ,refWitten}, 
phase transitions \cite{refCicuta,refMFOR},
probability theory \cite{refAGZ,refKonig}, 
graph enumeration \cite{refBGMcL,refBIPZ,refEMcL,refELT,refEMcLV,refZvon} and
number theory, for example the distribution of the zeros of the Riemann $\zeta$-function on the critical line \cite{refKeatSn1,refKeatSn2}.
Recently, Hermitian random matrices were used to study graph enumeration for the sextic model \eqref{Freud6}, with $g_4=0$ \cite{refGLL}; previous studies had considered the quartic model \eqref{Freud4}.

In previous work, we studied the quartic Freud weight \cite{refCJ18,refCJK}
\beq \w(x;\rho,t)=|x|^{\rho}\exp(-x^4+tx^2),\qquad x\in\R,
\label{Freud42}\eeq
with parameters $\rho>-1$ and $t\in\R$,
where the recurrence coefficients are expressed in terms of parabolic cylinder functions $D_{\nu}(z)$,
the sextic Freud weight \cite{refCJ21a,refCJ21b}
\beq \w(x;\rho,t)=|x|^{\rho}\exp(-x^6+tx^2),\qquad x\in\R,
\label{genFreud62}\eeq
with parameters $\rho>-1$ and $t\in\R$,
where the recurrence coefficients are expressed in terms of the hypergeometric functions $\Fpq{1}{2}{a_1}{b_1,b_2}{z}$
and the higher-order Freud weight \cite{refCJL23}
\beq \w(x;\rho,t)=|x|^{\rho}\exp(-x^{2m}+tx^2),\qquad x\in\R,\label{genFreud2n}\eeq
with parameters $\rho>-1$, $t\in\R$ and $m=2,3,\ldots\,$,
where the recurrence coefficients are expressed in terms of the generalised hypergeometric functions $\Fpq{1}{m-1}{a_1}{b_1,b_2,\ldots,b_{m-1}}{z}$.
{Whilst for the weights \eqref{Freud42}, \eqref{genFreud62} and \eqref{genFreud2n}, the recurrence coefficients are expressed in terms of special functions, for the symmetric sextic weight \eqref{Freud642}, we are only able to do so in three cases.}

The novelty of this work is to offer a comprehensive explanation of the intricate structure exhibited by the recurrence coefficients $\b_n$ for the symmetric Freud weight \eqref{Freud642}. In doing so, we reveal some hidden and fundamental properties and structures which are new to the theory. 

The outline of this document is as follows.
After a brief review of some properties of orthogonal polynomials with respect to symmetric weights in \S\ref{sec:op}, we focus on the recurrence relation coefficients associated with the weight \eqref{Freud642} in \S\ref{sec:rec}, presenting key recurrence and differential-difference equations satisfied by these coefficients, including their connection to the discrete \p\ I hierarchy and their asymptotic behaviour. In \S\ref{sec:polynomials}, we investigate the symmetric sextic Freud polynomials themselves, deriving differential-difference and linear second-order differential equations they satisfy. In \S\ref{sec:moments} we consider the moments of the weight \eqref{Freud642}, exploring their properties, differential equations and integral representations. In \S\ref{sec:closed moments} we give closed-form expressions for these moments under specific parameter constraints, with particular attention given to special cases and corresponding series expansions. In \S\ref{Sec:NumComp}, the discussion returns to the recurrence coefficients, going beyond asymptotic properties, to present extensive numerical computations, covering a wide range of parameter values and illustrating the behaviour of the recurrence coefficients under different parameter regimes. In \S\ref{sec:2D} we supplement this with two-dimensional plots to visually interpret the numerical findings, 
in particular, illustrating what happens for large $n$, which differ as the parameters $\tau$ and $t$ vary.
Finally, in \S\ref{sec:Volterra}, we connect the recurrence coefficients to integrable systems by demonstrating how they satisfy equations in the Volterra lattice hierarchy, thus linking the theory to broader mathematical structures.

\section{\label{sec:op}Orthogonal polynomials with symmetric weights}
The sequence of monic polynomials $\big\{P_n(x)\big\}_{n=0}^{\infty}$ of exact degree $n\in\N$ is orthogonal with respect to a positive weight $\w(x)$ on 
the real line $\R$ if
\[\imp P_m(x)P_n(x)\,\w(x)\,\rmd x = h_n\delta_{m,n},\qquad h_n>0,\]%\label{eq:norm}\eeq
where $\delta_{m,n}$ denotes the Kronecker delta, see, for example \cite{refChihara78,refIsmail,refSzego,refWVAbk}. 
Monic orthogonal polynomials $P_n(x)$, $n\in\N$, satisfy a three-term recurrence relationship of the form
\[%\beq \label{eq:3trr}
P_{n+1}(x)=xP_n(x)-\a_nP_n(x)-\b_nP_{n-1}(x),
\]
where the coefficients $\a_n$ and $\b_n$ are given by the integrals
\[%\beq\label{anbn}
\a_n =\frac{1}{h_n}\imp xP_n^2(x)\,\w(x)\,\rmd x,\qquad \b_n =\frac{1}{h_{n-1}}\imp xP_{n-1}(x)P_n(x)\,\w(x)\,\rmd x,
\]
with $P_{-1}(x)=0$ and $P_{0}(x)=1$. 

The weights of classical orthogonal polynomials satisfy a first-order ordinary differential equation, the \textit{Pearson equation}
\begin{equation}\label{eq:Pearson}
\deriv{}{x}[\sigma(x)\,\w(x)]=\vth(x)\,\w(x),
\end{equation}
where $\sigma(x)$ is a monic polynomial of degree at most $2$ and $\vth(x)$ is a polynomial with degree $1$. However for \textit{semi-classical} orthogonal polynomials, the weight function $\w(x)$ satisfies the Pearson equation (\ref{eq:Pearson}) with either deg$(\sigma)>2$ or deg$(\vth)\neq 1$, cf.~\cite{refHvR,refMaroni,refWVAbk}.
For example, the sextic Freud weight \eqref{Freud642}
satisfies the Pearson equation \eqref{eq:Pearson} with \[\sigma(x)=x,\qquad\vth(x)= {-6x^6+4\tau x^4+2tx^2 +1},\]
so is a semi-classical weight.
 
For an orthogonality weight that is symmetric, i.e.\ when $\w(x)=\w(-x)$, it follows that $\a_n \equiv0$ and the monic orthogonal polynomials $P_n(x)$, $n\in\N$, satisfy the simplified three-term recurrence relation
\beq\label{eq:srr}
P_{n+1}(x)=xP_n(x)-\b_nP_{n-1}(x).
\eeq
The $k$th moment, $\mu_k$, associated with the weight $\w(x)$ is given by the integral
\[\mu_k=\imp x^k\w(x)\,\rmd x,\] 
while the determinant of moments, known as Hankel determinant, is \beq\label{eq:detsDn}
\Delta_n=\det\big[\mu_{j+k}\big]_{j,k=0}^{n-1}=\left|\begin{matrix} \mu_0 & \mu_1 & \ldots & \mu_{n-1}\\
\mu_1 & \mu_2 & \ldots & \mu_{n}\\
\vdots & \vdots & \ddots & \vdots \\
\mu_{n-1} & \mu_{n} & \ldots & \mu_{2n-2}\end{matrix}\right|,\qquad n\geq1,\eeq
where $\Delta_0=1$ and $\Delta_{-1}=0$, {which corresponds to the Heine formula \eqref{heine}}. The recurrence coefficient $\b_n$ in \eqref{eq:srr} can be expressed in terms of the Hankel determinant as
\beq\label{def:bn}
\b_n =\frac{\Delta_{n+1}\Delta_{n-1}}{\Delta_{n}^2}.\eeq
 
For symmetric weights $\mu_{2k-1}\equiv 0$, for $k=1,2,\ldots\ $, and so
it is possible to write the Hankel determinant $\Delta_n$ in terms of the product of two Hankel determinants, as given in the following lemma. The decomposition depends on whether $n$ is even or odd.
\begin{lemma}%\label{lem:Deltan}
Suppose that 
$\A_n$ and $\B_n$ are the Hankel determinants given by
\begin{align}\label{def:AnBn} \A_n &%=\det\big[\mu_{2j+2k} \big]_{j,k=0}^{n-1}
=\left|\begin{matrix} 
\mu_0 & \mu_2 & \ldots & \mu_{2n-2}\\
\mu_2 & \mu_4 & \ldots & \mu_{2n} \\
\vdots & \vdots & \ddots & \vdots \\
 \mu_{2n-2} & \mu_{2n}& \ldots & \mu_{4n-4}
\end{matrix}\right|,\qquad
\B_n =\left|\begin{matrix} 
\mu_2 & \mu_4 & \ldots & \mu_{2n}\\
\mu_4 & \mu_6 & \ldots & \mu_{2n+2} \\
\vdots & \vdots & \ddots & \vdots \\
 \mu_{2n} & \mu_{2n+2}& \ldots & \mu_{4n-2}
\end{matrix}\right|.\end{align}
Then the determinant $\Delta_n$ \eqref{eq:detsDn} is given by
\beq \Delta_{2n}=\A_n\B_n,\qquad \Delta_{2n+1}=\A_{n+1}\B_n.\label{res:lemma21}\eeq
\end{lemma}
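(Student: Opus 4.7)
The plan is to exploit the fact that for a symmetric weight all odd moments vanish, $\mu_{2k-1}\equiv 0$, so the Hankel matrix $[\mu_{j+k}]$ has a \emph{checkerboard} pattern: the $(j,k)$ entry is zero whenever $j+k$ is odd. A simultaneous permutation of rows and columns that gathers the even-indexed rows/columns first and the odd-indexed ones second will then produce a block-diagonal matrix, after which the determinant factorises as a product of the two diagonal blocks.

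Concretely, for $\Delta_{2n}$ I would apply the permutation $\pi$ that reorders $(0,1,2,\ldots,2n-1)$ into $(0,2,4,\ldots,2n-2,\ 1,3,\ldots,2n-1)$ to both the rows and the columns of the matrix $[\mu_{j+k}]_{j,k=0}^{2n-1}$. Because the same permutation is applied on both sides, the two sign factors cancel, $\mathrm{sgn}(\pi)^2=1$, so the determinant is unchanged. In the permuted matrix, the top-left $n\times n$ block has entries $\mu_{2j+2k}$ for $0\le j,k\le n-1$, which is exactly $\A_n$; the bottom-right $n\times n$ block has entries $\mu_{(2j+1)+(2k+1)}=\mu_{2j+2k+2}$ for $0\le j,k\le n-1$, which is $\B_n$; and the two off-diagonal blocks consist entirely of $\mu_{\text{odd}}=0$. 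Block-triangularity (in fact block-diagonality) then yields $\Delta_{2n}=\A_n\B_n$.

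The case of $\Delta_{2n+1}$ is entirely analogous, with the only change being the sizes of the two blocks. The even indices $\{0,2,\ldots,2n\}$ supply $n+1$ rows/columns giving a top-left block of entries $\mu_{2j+2k}$ for $0\le j,k\le n$, which is $\A_{n+1}$; the odd indices $\{1,3,\ldots,2n-1\}$ supply $n$ rows/columns giving a bottom-right block of entries $\mu_{2j+2k+2}$ for $0\le j,k\le n-1$, which is $\B_n$. The same sign argument gives $\Delta_{2n+1}=\A_{n+1}\B_n$.

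There is no genuine obstacle here; the only point that needs a little care is bookkeeping the sign of the permutation and verifying that, when applied simultaneously to rows and columns, it contributes a factor of $+1$, together with the verification that the off-diagonal blocks are identically zero (which is immediate from the parity of $j+k$ and the vanishing of odd moments).
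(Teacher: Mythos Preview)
Your proposal is correct and is precisely the argument the paper has in mind: its proof reads simply ``The result is easily obtained by matrix manipulation interchanging rows and columns,'' and you have spelled out exactly that manipulation---the simultaneous even/odd permutation that block-diagonalises the checkerboard Hankel matrix, with the sign factor $\mathrm{sgn}(\pi)^2=1$ and the identification of the diagonal blocks as $\A_n,\B_n$ (resp.\ $\A_{n+1},\B_n$).
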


\begin{proof}{The result is easily obtained by matrix manipulation interchanging rows and columns.}
\end{proof}
{\begin{remark}{\rm The expression of the Hankel determinant $\Delta_n$ for symmetric weights as a product of two determinants is given in \cite{refCHL,refLCF}.
}\end{remark}}%
\begin{corollary}\label{Cor:2.3}%Consequently 
For a symmetric weight, the recurrence relation coefficient $\b_n$ is given by
\[ \b_{2n} =\frac{\A_{n+1}\B_{n-1}}{\A_n\B_n},\qquad
\b_{2n+1}=\frac{\A_{n}\B_{n+1}}{\A_{n+1}\B_n},\]%\label{def:betan}\eeq
where $\A_n$ and $\B_n$ are the Hankel determinants given by \eqref{def:AnBn}, with $\A_0=\B_0=1$.
\end{corollary}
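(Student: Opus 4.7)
The plan is a direct substitution: combine the Hankel determinant formula \eqref{def:bn} for $\b_n$ with the factorisation \eqref{res:lemma21} from the preceding lemma, splitting into the even and odd cases.

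First I would treat the even index. Taking $n\mapsto 2n$ in \eqref{def:bn} gives
\[
\b_{2n} = \frac{\Delta_{2n+1}\Delta_{2n-1}}{\Delta_{2n}^{\,2}}.
\]
Applying \eqref{res:lemma21} to each factor, namely $\Delta_{2n+1}=\A_{n+1}\B_n$, $\Delta_{2n-1}=\A_n\B_{n-1}$ and $\Delta_{2n}=\A_n\B_n$, the numerator collapses to $\A_{n+1}\A_n\B_n\B_{n-1}$ and the denominator to $\A_n^{\,2}\B_n^{\,2}$, so after cancellation I obtain the claimed expression for $\b_{2n}$.

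Next I would handle the odd index in the same way. From \eqref{def:bn},
\[
\b_{2n+1} = \frac{\Delta_{2n+2}\Delta_{2n}}{\Delta_{2n+1}^{\,2}},
\]
and using $\Delta_{2n+2}=\A_{n+1}\B_{n+1}$, $\Delta_{2n}=\A_n\B_n$ and $\Delta_{2n+1}=\A_{n+1}\B_n$ from the lemma, the right-hand side simplifies to $\A_n\B_{n+1}/(\A_{n+1}\B_n)$, as required. The initial values $\A_0=\B_0=1$ are consistent with $\Delta_0=1$, so the formulas extend to $n=0$ (with the usual convention $\B_{-1}=0$ absorbed into $\b_0$, noting that $\b_0$ carries no meaning beyond its role in the recurrence relation \eqref{eq:srr}).

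There is no substantive obstacle here: the proof is a two-line calculation once the lemma is in hand, and the only thing to be careful about is matching indices correctly in the even/odd split and checking that the cancellation of the $\A$'s and $\B$'s is done consistently in both cases.
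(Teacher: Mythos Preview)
Your proposal is correct and follows exactly the same approach as the paper's own proof, which simply states that substituting \eqref{res:lemma21} into \eqref{def:bn} gives the result. You have just written out that substitution explicitly.
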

\begin{proof}
{Substituting \eqref{res:lemma21} into \eqref{def:bn} gives the result.} 
\end{proof}%
 
\begin{lemma}\label{lem:34} Let $\w_0(x)$ be a symmetric positive weight on the real line and suppose that \[\w(x;\tau,t)=\exp(tx^2+\tau x^4)\,\w_0(x),\qquad x\in\R,\] 
is a weight such that all the moments also exist. Then the recurrence coefficient $\b_{n}(\tau,t)$ satisfies the Volterra, or the Langmuir lattice, equation
\beq \pderiv{\b_{n}}{t} = \b_{n}(\b_{n+1}-\b_{n-1}),\label{eq:langlat}\eeq
and the differential-difference equation
\beq\pderiv{\b_n}{\tau} =\b_n\left[(\b_{n+2}+\b_{n+1}+\b_n)\b_{n+1} - (\b_{n}+\b_{n-1}+\b_{n-2})\b_{n-1}\right]. \label{eq:langlat4}\eeq
\end{lemma}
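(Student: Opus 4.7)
The plan is to prove both differential-difference equations by working directly from the definition $\b_n = h_n/h_{n-1}$, where $h_n = \int_{-\infty}^{\infty} P_n^2(x)\,\w(x;\tau,t)\,\rmd x$, and computing $\partial_t \log h_n$ and $\partial_\tau \log h_n$ explicitly; then the Volterra equation and its higher analogue both follow by telescoping
\[\pderiv{}{s}\log\b_n = \pderiv{}{s}\log h_n - \pderiv{}{s}\log h_{n-1},\qquad s\in\{t,\tau\}.\]

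First I would compute $\partial_t h_n$. Since $\w(x;\tau,t)=\exp(tx^2+\tau x^4)\w_0(x)$, we have $\partial_t \w = x^2\w$, so
\[\pderiv{h_n}{t} = \imp 2 P_n\,\pderiv{P_n}{t}\,\w\,\rmd x + \imp x^2 P_n^2\,\w\,\rmd x.\]
Because $P_n$ is monic of degree $n$, $\partial_t P_n$ has degree at most $n-1$, hence it is orthogonal to $P_n$ and the first integral vanishes. Using the symmetric recurrence \eqref{eq:srr} twice gives
\[x^2 P_n = P_{n+2} + (\b_{n+1}+\b_n)P_n + \b_n\b_{n-1}P_{n-2},\]
and orthogonality then yields $\partial_t\log h_n = \b_{n+1}+\b_n$. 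Subtracting the same identity at level $n-1$ gives $\partial_t\log\b_n = \b_{n+1}-\b_{n-1}$, which is \eqref{eq:langlat}.

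For the second equation, I would use $\partial_\tau\w = x^4\w$ and write
\[\pderiv{h_n}{\tau} = \imp x^4 P_n^2\,\w\,\rmd x = \imp (x^2 P_n)^2\,\w\,\rmd x,\]
again using that $\partial_\tau P_n$ is orthogonal to $P_n$. Substituting the expansion of $x^2 P_n$ from the previous step and invoking orthogonality reduces the right-hand side to
\[h_{n+2} + (\b_{n+1}+\b_n)^2\,h_n + \b_n^2\b_{n-1}^2\,h_{n-2}.\]
Dividing by $h_n$ and using $h_{n+2}/h_n=\b_{n+2}\b_{n+1}$ and $h_{n-2}/h_n=1/(\b_n\b_{n-1})$ produces
\[\pderiv{}{\tau}\log h_n = \b_{n+2}\b_{n+1} + (\b_{n+1}+\b_n)^2 + \b_n\b_{n-1}.\]

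Finally, subtracting the corresponding expression with $n$ replaced by $n-1$ gives
\[\pderiv{}{\tau}\log\b_n = \b_{n+2}\b_{n+1} + (\b_{n+1}+\b_n)^2 - (\b_n+\b_{n-1})^2 - \b_{n-1}\b_{n-2} + \b_n\b_{n-1} - \b_n\b_{n+1},\]
wait — I would just expand the squares carefully and regroup to match the right-hand side of \eqref{eq:langlat4}, namely $(\b_{n+2}+\b_{n+1}+\b_n)\b_{n+1} - (\b_n+\b_{n-1}+\b_{n-2})\b_{n-1}$. The key algebraic identity is that $(\b_{n+1}+\b_n)^2-(\b_n+\b_{n-1})^2$ contributes exactly the cross terms $\b_{n+1}^2+\b_n\b_{n+1}$ on one side and $\b_{n-1}^2+\b_n\b_{n-1}$ on the other, which combined with the boundary terms $\b_{n+2}\b_{n+1}$ and $\b_{n-1}\b_{n-2}$ produce the claimed factorisation. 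The only mild obstacle is this final bookkeeping; there is no structural difficulty, since the whole argument is one of expanding $x^{2k}P_n$ in the orthogonal basis and reading off the diagonal coefficient.
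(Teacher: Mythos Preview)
Your proposal is correct and follows essentially the same approach as the paper: both compute $\partial_s h_n$ for $s\in\{t,\tau\}$ by noting that $\partial_s P_n$ is orthogonal to $P_n$, expand $x^2P_n^2$ and $x^4P_n^2$ via the three-term recurrence, and then deduce the evolution of $\b_n$ from that of $h_n$. The only cosmetic differences are that you work with $\partial_s\log h_n$ and telescope, whereas the paper differentiates $h_n=\b_n h_{n-1}$ directly, and you use the expansion of $x^2P_n$ in one step while the paper writes $(xP_n)^2$ and iterates; your final algebra (after the ``wait'') is exactly the computation the paper carries out.
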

\begin{proof} 
{By definition
\beq h_n(\tau,t) = \imp P_n^2(x;\tau,t)\, \w(x;\tau,t)\,\rmd x, \label{def:hn}\eeq
and then differentiating this with respect to $t$ gives
\beq \pderiv{h_n}{t} = 2\imp P_n(x;\tau,t)\pderiv{P_n}{t}(x;\tau,t)\, \w(x;\tau,t)\,\dx + \imp P_n^2(x;\tau,t)\, x^2\w(x;\tau,t)\,\dx .\label{eq:Llat1}\eeq
Since $P_n(x;\tau,t)$ is a monic polynomial of degree $n$ in $x$, then $\ds\pderiv{P_n}{t}(x;\tau,t)$ is a polynomial of degree less than $n$ and so
\beq \imp P_n(x;\tau,t)\pderiv{P_n}{t}(x;\tau,t)\, \w(x;\tau,t)\,\dx =0.\label{eq:Llat2}\eeq
Using this and the recurrence relation \eqref{eq:srr}
in \eqref{eq:Llat1} gives
\begin{align*} 
\pderiv{h_n}{t} &= \imp \left[ P_{n+1}(x;\tau,t) +\b_n(\tau,t)P_{n-1}(x;\tau,t)\right]^2 \w(x;\tau,t)\,\dx \\
&= \imp \left[ P_{n+1}^2(x;\tau,t) +\b_n^2(\tau,t)P_{n-1}^2(x;\tau,t)\right] \w(x;\tau,t)\, \dx 
= h_{n+1}+\b_n^2 h_{n-1}
\end{align*}
using the orthogonality of $P_{n+1}(x;\tau,t)$ and $P_{n-1}(x;\tau,t)$,
and so it follows from %\eqref{bn:hn}
\beq\label{bn:hn} h_n=\b_n h_{n-1},\eeq that% since $h_{n}=\b_{n}h_{n-1}$
\beq\pderiv{h_n}{t}= h_{n+1}+\b_nh_n.\label{eq:hnt}\eeq
Differentiating \eqref{bn:hn} with respect to $t$ gives
\[\pderiv{h_n}{t}=\pderiv{\b_n}{t}h_{n-1}+\b_n\pderiv{h_{n-1}}{t},\]
and then using \eqref{eq:hnt} gives
\[ h_{n+1}+\b_nh_n = \pderiv{\b_n}{t}h_{n-1} + \b_n(h_n+\b_{n-1}h_{n-1}).\]
Since $h_{n+1}=\b_{n+1}h_n=\b_{n+1}\b_n h_{n-1}$, then we obtain
\[\pderiv{\b_n}{t}=\b_n(\b_{n+1}-\b_{n-1}),\]
as required.}

{To prove \eqref{eq:langlat4}, differentiating \eqref{def:hn} with respect to $\tau$ gives
\[ \pderiv{h_n}{\tau} = 2\imp P_n(x;\tau,t)\pderiv{P_n}{\tau}(x;\tau,t)\, \w(x;\tau,t)\,\dx + \imp P_n^2(x;\tau,t)\, x^4\w(x;\tau,t)\,\dx.\]
Analogous to \eqref{eq:Llat2} we have
\[\imp P_n(x;\tau,t)\pderiv{P_n}{\tau}(x;\tau,t)\, \w(x;\tau,t)\,\dx =0,\]
and then using the recurrence relation \eqref{eq:srr} gives
\begin{align*} 
\pderiv{h_n}{\tau} &= \imp x^2 \left( P_{n+1} +\b_nP_{n-1}\right)^{\!2} \w(x)\,\dx \\
&= \imp \left( x^2P_{n+1}^2 +2x^2\b_nP_{n+1}P_{n-1} +x^2\b_n^2P_{n-1}^2\right) \w(x)\,\dx \\
&= \imp \left[ \left( P_{n+2} +\b_{n+1}P_{n}\right)^{\!2} + 2\b_n\left( P_{n+2} +\b_{n+1}P_{n}\right)\left( P_{n} +\b_{n-1}P_{n-2}\right)
+\b_n^2\left( P_{n} +\b_{n-1}P_{n-2}\right)^{\!2} \right] \w(x)\,\dx \\
&= \imp \left[P_{n+2}^2 + \b_{n+1}^2P_{n}^2 +2\b_{n+1}\b_nP_{n}^2+\b_n^2P_{n}^2+\b_n^2\b_{n-1}^2P_{n-2}^2\right] \w(x)\,\dx \\
&= h_{n+2}+(\b_{n+1}+\b_n)^2 h_n +\b_n^2\b_{n-1}^2 h_{n-2}.
\end{align*} Consequently
\beq \pderiv{h_n}{\tau} =\left[\b_{n+2}\b_{n+1}+(\b_{n+1}+\b_n)^2 +\b_n\b_{n-1}\right] h_{n},\label{eq:langlat41}\eeq
since 
\[ h_{n+2}=\b_{n+2}h_{n+1}=\b_{n+2}\b_{n+1}h_n,\qquad h_{n-2}=\frac{h_{n-1}}{\b_{n-1}}=\frac{h_{n}}{\b_n\b_{n-1}}.\]
Differentiating \eqref{bn:hn} %$h_n=\b_n h_{n-1}$ 
with respect to $\tau$ gives
\begin{align} \pderiv{h_n}{\tau}&=\pderiv{\b_n}{\tau}h_{n-1}+\b_n\pderiv{h_{n-1}}{\tau} %\\
=\pderiv{\b_n}{\tau}h_{n-1}+\b_{n}\left[\b_{n+1}\b_{n}+(\b_{n}+\b_{n-1})^2 +\b_{n-1}\b_{n-2}\right] h_{n-1}.\label{eq:langlat42}
\end{align}
From \eqref{eq:langlat41} and \eqref{eq:langlat42} we have
\begin{align*} 
\pderiv{\b_n}{\tau} &= \left[\b_{n+2}\b_{n+1}+(\b_{n+1}+\b_n)^2 +\b_n\b_{n-1}\right] \b_n -
\left[\b_{n+1}\b_{n}+(\b_{n}+\b_{n-1})^2 +\b_{n-1}\b_{n-2}\right] \b_n\\
&=(\b_{n+2}+\b_{n+1}+\b_n)\b_{n+1}\b_{n} - (\b_{n}+\b_{n-1}+\b_{n-2})\b_{n}\b_{n-1},
\end{align*}
and therefore
\[\pderiv{\b_n}{\tau} =\b_n\left[(\b_{n+2}+\b_{n+1}+\b_n)\b_{n+1} - (\b_{n}+\b_{n-1}+\b_{n-2})\b_{n-1}\right],\]
 as required.}
\end{proof}

\begin{remark}{\rm The differential-difference equation \eqref{eq:langlat} is also known as the discrete KdV equation, or the Kac-van Moerbeke lattice \cite{refKvM75}.}\end{remark}

\section{Recurrence relation coefficients for the sextic Freud weight}\label{sec:rec}
In this section we discuss properties of the coefficient $\b_n$ in the three-term recurrence relation \eqref{eq:srr} for the symmetric sextic Freud weight \eqref{Freud642}.
\begin{lemma} %\label{Freud642:lem2} 
The recurrence relation coefficient $\b_{n}(\tau,t)$ satisfies the recurrence relation
\begin{align}6\b_{n} \big(\b_{n-2} \b_{n-1} &+ \b_{n-1}^2 + 2 \b_{n-1} \b_{n} + \b_{n-1} \b_{n+1} % \\[2.5pt]& 
+ \b_{n}^2 + 2 \b_{n}\b_{n+1} + \b_{n+1}^2 + \b_{n+1} \b_{n+2}\big) \nonumber\\
& -4\tau\b_n(\b_{n-1}+\b_{n}+\b_{n+1})-2t\b_{n}=n.\label{eq:rr642}
\end{align}
\end{lemma}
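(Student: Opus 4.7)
The plan is to derive \eqref{eq:rr642} from the Pearson equation satisfied by $\omega(x;\tau,t)$, which gives $\omega'/\omega = -6x^5+4\tau x^3+2t x$, combined with the vanishing of the total-derivative integral
\[
0 = \imp \frac{\rmd}{\rmd x}\big[P_n(x;\tau,t)\,P_{n-1}(x;\tau,t)\,\omega(x;\tau,t)\big]\,\rmd x,
\]
which holds because the weight decays super-exponentially at $\pm\infty$. Expanding the derivative produces two contributions, $\imp(P_nP_{n-1})'\,\omega\,\rmd x$ and $\imp P_nP_{n-1}\,\omega'\,\rmd x$, whose sum must vanish; matching this identity to \eqref{eq:rr642} reduces to extracting the coefficient of $P_{n-1}$ in each of $xP_n$, $x^3P_n$ and $x^5P_n$.

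For the derivative contribution, I would use $(P_nP_{n-1})' = P_n'P_{n-1}+P_nP_{n-1}'$. Since $P_{n-1}'$ has degree $n-2<n$, orthogonality yields $\imp P_nP_{n-1}'\,\omega\,\rmd x=0$; and since $P_n$ is monic, $P_n'=nP_{n-1}+(\text{polynomial of degree}<n-1)$, so $\imp P_n'P_{n-1}\,\omega\,\rmd x=nh_{n-1}$. Thus $\imp(P_nP_{n-1})'\,\omega\,\rmd x = nh_{n-1}$.

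The main work is in $\imp P_nP_{n-1}\,\omega'\,\rmd x = \imp P_nP_{n-1}(-6x^5+4\tau x^3+2tx)\omega\,\rmd x$. Only the coefficient of $P_{n-1}$ in the expansion of $x^{2k+1}P_n$ in the basis $\{P_j\}$ survives the integration against $P_{n-1}\omega$. Iterating \eqref{eq:srr} I would obtain the $P_{n-1}$-coefficients $\b_n$ in $xP_n$ and $\b_n(\b_{n-1}+\b_n+\b_{n+1})$ in $x^3P_n$, and
\[
\b_n\big(\b_{n-2}\b_{n-1}+\b_{n-1}^2+2\b_{n-1}\b_n+\b_{n-1}\b_{n+1}+\b_n^2+2\b_n\b_{n+1}+\b_{n+1}^2+\b_{n+1}\b_{n+2}\big)
\]
in $x^5P_n$. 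Substituting these three coefficients (with weights $2t$, $4\tau$, $-6$ respectively), combining with $nh_{n-1}$ from the derivative side, and dividing through by $h_{n-1}$ then yields \eqref{eq:rr642} exactly.

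The main technical obstacle is the bookkeeping for the $x^5P_n$ expansion, which I would compute as $x\cdot(x^4P_n)$ after first obtaining $x^4P_n$ as $x\cdot(x^3P_n)$. The symmetry of the weight forces $P_k$ to have the parity of $k$, so only $P_{n\pm(2j+1)}$ appear in $x^{2k+1}P_n$, which reduces the combinatorics; nevertheless, I must carefully track how the $P_n$- and $P_{n-2}$-coefficients of $x^4P_n$ propagate through one more application of \eqref{eq:srr} to contribute to the $P_{n-1}$-coefficient of $x^5P_n$, since it is precisely this coefficient that reproduces the eight-term quadratic expression multiplying $\b_n$ in \eqref{eq:rr642}.
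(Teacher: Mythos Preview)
Your proposal is correct and follows exactly the standard Freud method that the paper invokes: the paper's proof simply cites Freud's original derivation for $\tau=t=0$ and says the general case is a straightforward modification, which is precisely the integration-by-parts argument on $\imp \frac{\rmd}{\rmd x}[P_nP_{n-1}\omega]\,\rmd x$ that you outline. Your computation of the $P_{n-1}$-coefficients in $xP_n$, $x^3P_n$ and $x^5P_n$ is correct (the paper in fact records the $x^3P_n$ and $x^4P_n$ expansions explicitly in \S\ref{sec:polynomials}, from which your $x^5P_n$ coefficient follows by one more application of the recurrence), and the assembly into \eqref{eq:rr642} is as you describe.
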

\begin{proof}
The fourth-order nonlinear discrete equation \eqref{eq:rr642} when $\tau=0$ and $t=0$ was derived by Freud \cite{refFreud76}.
It is straightforward to modify the proof for the case when $\tau$ and $t$ are nonzero.
\end{proof}

\begin{remarks}{\rm
\begin{enumerate}\item[] \item 
The discrete equation \eqref{eq:rr642} is a special case of dP$_{\rm I}^{(2)}$, the second member of the discrete \p\ I hierarchy which is given by 
\begin{align}c_4\b_{n} \big(\b_{n+2} \b_{n+1} &+ \b_{n+1}^2 + 2 \b_{n+1} \b_{n} + \b_{n+1} \b_{n-1} + \b_{n}^2 + 2 \b_{n}\b_{n-1} + \b_{n-1}^2 + \b_{n-1} \b_{n-2}\big)\nonumber\\
& +c_3\b_{n} \big(\b_{n+1}+\b_n+\b_{n-1})+ c_2\b_{n}= c_1+c_0(-1)^{n}+n,\label{eq:gendPI2}\end{align}
with $c_j$, $j=0,1,\ldots,4$ constants.
Cresswell and Joshi \cite{refCJ99a,refCJ99b} show that if $c_0=0$ then the continuum limit of \eqref{eq:gendPI2} is equivalent to
\[ \nonumber \deriv[4]{w}{z}=10w\deriv[2]{w}{z}+5\left(\deriv{w}{z}\right)^{\!2} -10w^3+z,\]%label{eq:PI2}\eeq
which is P$_{\rm I}^{(2)}$, the second member of the first \p\ hierarchy \cite{refKud97}, see also \cite{refBMP,refFIK91}.
\item Equation \eqref{eq:rr642} is also known as the ``string equation" and arises in important physical applications such as two-dimensional quantum gravity, cf.~\cite{refDS,refFIK91,refFIK92,refGMiga,refGMigb,refGMigc,refPS}.
\item Equation \eqref{eq:rr642} is also derived in \cite{refMinDing} using the method of ladder operators due to Chen and Ismail \cite{refChenIsmail}.
\item The autonomous analogue of \eqref{eq:rr642} has been studied by Gubbiotti \etal\ \cite[map P.iv]{refGJTV} and Hone \etal~\cite{refHRV,refHRVZ}.
\end{enumerate}
}\end{remarks}

\def\FF{\left[225 n +2\tau(4\tau^2+15 t) +G(n,\tau,t)\right]^{\!1/3}}

\begin{lemma}%\label{Freud642:lem3}
The recurrence coefficient $\b_{n}(\tau,t)$ satisfies the system
\begin{subequations}\label{sys642:bn}\begin{align}
\pderiv[2]{\b_{n}}{t}&-3(\b_{n}+\b_{n+1}-\tfrac{2}{9}\tau)\pderiv{\b_{n}}{t}+\b_{n}^3+6\b_{n}^2\b_{n+1}+3\b_{n}\b_{n+1}^2 
\nonumber\\ &-\tfrac{2}{3}\tau\b_n(\b_n+2\b_{n+1})-\tfrac{1}{3}t\b_{n}=\tfrac{1}{6}n,\\
\pderiv[2]{\b_{n+1}}{t}&+3(\b_{n}+\b_{n+1}-\tfrac{2}{9}\tau)\pderiv{\b_{n+1}}{t}+\b_{n+1}^3+6\b_{n+1}^2\b_{n}+3\b_{n+1}\b_{n}^2 \nonumber\\
& -\tfrac{2}{3}\tau\b_{n+1}(2\b_n+\b_{n+1})-\tfrac{1}{3}t\b_{n+1}=\tfrac{1}{6}(n+1).
\end{align}\end{subequations}
\end{lemma}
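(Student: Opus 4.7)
The plan is to derive both equations in \eqref{sys642:bn} by rewriting the string equation \eqref{eq:rr642} using the Volterra (Langmuir) lattice \eqref{eq:langlat} to trade the ``outer'' neighbours $\b_{n-2}$ and $\b_{n+2}$ for $t$-derivatives. In other words, both equations are the same string equation, simply expressed in a form in which only $\b_n$, $\b_{n+1}$, and their first two $t$-derivatives appear explicitly.

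First I would use \eqref{eq:langlat} to write $\b_{n+1}-\b_{n-1}=\b_n^{-1}\,\partial_t\b_n$, and then differentiate once more to obtain
\[
\pderiv[2]{\b_n}{t}
=\b_n(\b_{n+1}-\b_{n-1})^2
+\b_n\!\left[\b_{n+1}(\b_{n+2}-\b_n)-\b_{n-1}(\b_n-\b_{n-2})\right],
\]
using $\partial_t\b_{n\pm1}=\b_{n\pm1}(\b_{n\pm2}-\b_n)$. These two identities express the symmetric combination $\b_n\b_{n+1}\b_{n+2}+\b_n\b_{n-1}\b_{n-2}+\cdots$ (the cubic in the string equation) in terms of $\partial_t^2\b_n$, $\partial_t\b_n$, and polynomial expressions in $\b_n,\b_{n+1}$ alone.

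Next I would substitute these two expressions into the left-hand side of equation (a) of \eqref{sys642:bn}. Expanding $\b_n(\b_{n+1}-\b_{n-1})^2$, $-3(\b_n+\b_{n+1})\b_n(\b_{n+1}-\b_{n-1})$, and $+\tfrac{2}{3}\tau\b_n(\b_{n+1}-\b_{n-1})$ and combining with the explicit cubic $\b_n^3+6\b_n^2\b_{n+1}+3\b_n\b_{n+1}^2$, the terms collapse to
\[
\b_n\!\left[\b_{n-2}\b_{n-1}+\b_{n-1}^2+2\b_{n-1}\b_n+\b_{n-1}\b_{n+1}+\b_n^2+2\b_n\b_{n+1}+\b_{n+1}^2+\b_{n+1}\b_{n+2}\right]
-\tfrac{2}{3}\tau\b_n(\b_{n-1}+\b_n+\b_{n+1})-\tfrac{1}{3}t\b_n,
\]
which equals $\tfrac{1}{6}n$ by the string equation \eqref{eq:rr642} divided by $6$. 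This establishes (a).

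For equation (b) I would apply the same procedure to the string equation with index shifted from $n$ to $n+1$, but centred at $\b_{n+1}$: using $\partial_t\b_{n+1}=\b_{n+1}(\b_{n+2}-\b_n)$ and the corresponding expression for $\partial_t^2\b_{n+1}$ to eliminate $\b_{n-1}$ and $\b_{n+3}$ from the string equation at index $n+1$. The sign flip $-3(\b_n+\b_{n+1}-\tfrac{2}{9}\tau)\mapsto +3(\b_n+\b_{n+1}-\tfrac{2}{9}\tau)$ is a direct consequence of the fact that, relative to $\b_{n+1}$, the ``inner'' Langmuir difference is $\b_{n+2}-\b_n$ rather than $\b_{n+1}-\b_{n-1}$, and so its orientation relative to the pair $(\b_n,\b_{n+1})$ is reversed. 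The main obstacle is purely bookkeeping: matching the many cubic monomials in $\b_k$ after substitution. Because every step is a polynomial identity and the Langmuir substitution is explicit, the verification is routine (and easily checked in a computer algebra system), but care is needed to track the coefficients of each cubic term to confirm exact agreement with \eqref{eq:rr642}.
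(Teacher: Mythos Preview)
Your proposal is correct and is essentially the same argument as the paper's, just organised in the reverse direction: the paper starts from the four Langmuir relations $\partial_t\b_{n-1},\partial_t\b_n,\partial_t\b_{n+1},\partial_t\b_{n+2}$, uses \eqref{eq:rr642} to eliminate $\b_{n-2}$ and $\b_{n+3}$ from the outer two, then solves the middle two for $\b_{n-1}$ and $\b_{n+2}$ and substitutes to derive \eqref{sys642:bn}, whereas you substitute the Langmuir identities into the left side of \eqref{sys642:bn} and recover $\tfrac16\times$\eqref{eq:rr642}. The ingredients and the algebra are identical.
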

\begin{proof}This is analogous to that for the generalised sextic Freud weight 
\[ %\label{genFreud62}
\w(x;\tau,t,\rho)=|x|^{\rho}\exp\left(-x^6+tx^2\right),\qquad \rho>-1,\qquad x\in\R,\]
see \cite[Lemma 4.3]{refCJ21b}, with $\rho=0$. % i.e.\ \eqref{Freud642} with $\tau=0$.
Following Magnus \cite[Example 5]{refMagnus95}, from the Langmuir lattice \eqref{eq:langlat} we have
\begin{subequations}%\label{sys:bn2}
\begin{align} 
 \pderiv{\b_{n-1}}{t} &= \b_{n-1}(\b_{n}-\b_{n-2})\nonumber\\
 &= \b_{n-1}^2 + 3 \b_{n-1} \b_{n} + \b_{n-1} \b_{n+1} 
+ \b_{n}^2 + 2 \b_{n}\b_{n+1} + \b_{n+1}^2 + \b_{n+1} \b_{n+2} \nonumber\\ 
&\qquad - \tfrac{2}{3}\tau(\b_{n-1}+\b_{n}+\b_{n+1}) -\tfrac{1}{3}t-\frac{n}{6\b_n},\label{sys:bn2a}\\
 \pderiv{\b_{n}}{t} &= \b_{n}(\b_{n+1}-\b_{n-1}),\label{sys:bn2b}\\
\pderiv{\b_{n+1}}{t} &= \b_{n+1}(\b_{n+2}-\b_{n}),\label{sys:bn2c}\\
 \pderiv{\b_{n+2}}{t} &= \b_{n+2}(\b_{n+3}-\b_{n+1})\nonumber\\
 &=- \b_{n-1} \b_{n} -\b_{n}^2 - 2 \b_{n} \b_{n+1} - \b_{n} \b_{n+2} 
- \b_{n+1}^2 - 3 \b_{n+1}\b_{n+2} - \b_{n+2}^2 \nonumber\\ 
&\qquad + \tfrac{2}{3}\tau(\b_{n}+\b_{n+1}+\b_{n+2})+\tfrac{1}{3}t +\frac{n+1}{6\b_{n+1}}{,}\label{sys:bn2d}%-\b_{n+1}\b_{n+2}.
 \end{align}\end{subequations}
where we have used the discrete equation \eqref{eq:rr642} to eliminate $\b_{n+3}$ and $\b_{n-2}$. Solving \eqref{sys:bn2b} and \eqref{sys:bn2c} for $\b_{n+2}$ and $\b_{n-1}$ gives
 \[ \b_{n+2}=\b_n+\frac{1}{\b_{n+1}}\pderiv{\b_{n+1}}{t},\qquad \b_{n-1}=\b_{n+1}-\frac{1}{\b_n}\pderiv{\b_n}{t},\] and substitution into \eqref{sys:bn2a} and \eqref{sys:bn2d}
 yields the system \eqref{sys642:bn} as required.
\end{proof}

Freud \cite{refFreud76} conjectured that the asymptotic behaviour of recurrence coefficients $\b_{n}$ in the recurrence relation \eqref{eq:srr} satisfied by monic polynomials $\big\{P_{n}(x) \big\}_{n\geq0}$ orthogonal with respect to the weight
\begin{equation*}\w(x) = |x|^{\rho}\exp(-|x|^{m}),\end{equation*}
with $x \in \R$, $\rho>-1$, $m>0$ could be described by
\begin{equation}\label{Freudconj}
\ds \lim_{n\rightarrow \infty}\frac{\b_{n}}{n^{2/m}}= \left[\frac{\Gamma(\tfrac{1}{2}m)\, \Gamma(1+\tfrac{1}{2}m)}{\Gamma(m+1)}\right]^{2/m},
\end{equation}
{where $\Gamma(\a)$ is the Gamma function}.
%cf.~\cite[\S5]{refDLMF}
The conjecture was originally stated by Freud for orthonormal polynomials. Freud showed that \eqref{Freudconj} is valid whenever the limit on the left-hand side exists and proved this for $m=2,4,6$. Magnus \cite{refMagnus85} proved \eqref{Freudconj} for recurrence coefficients associated with weights \begin{equation*}w(x)=\exp\{-Q(x)\}, \end{equation*}
where $Q(x)$ is an even degree polynomial with positive leading coefficient. Lubinsky \etal\ \cite{refLubinskyMS,refLubinskyMS86} settled Freud's conjecture as a special case of a general result for recursion coefficients associated with exponential weights. 

{\begin{theorem}\label{FC} For the symmetric sextic Freud weight \eqref{Freud642}, the recurrence coefficients $\b_{n}:=\b_{n}(\tau,t)$ associated with this weight satisfy 
\begin{equation*}%\label{eq415}
\lim_{n\to \infty}\frac{\b_{n}}{\b(n)} = 1, \end{equation*}
where $\b:=\b(n)$ is the positive curve defined by 
\beq \label{eq:cubic} %f(\b)= 
60\b^3-12\tau \b^2-2t\b %\frac{n}{N}
=n.\eeq
\end{theorem}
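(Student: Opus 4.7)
The plan is to combine the general asymptotic theory of recurrence coefficients for exponential weights, just reviewed above, with the observation that the cubic \eqref{eq:cubic} is precisely what the fourth-order discrete equation \eqref{eq:rr642} collapses to under the ``slow variation'' substitution $\b_{n+k}\mapsto\b$ for $k\in\{-2,-1,0,1,2\}$. Indeed, setting every $\b_{n+k}$ appearing in \eqref{eq:rr642} equal to a common value $\b$ turns the eight quadratic monomials in the first parenthesis into $10\b^2$ and the three linear monomials in the second into $3\b$, so that \eqref{eq:rr642} reduces to
\[
60\b^3-12\tau\b^2-2t\b=n,
\]
which is exactly \eqref{eq:cubic}. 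The substance of the theorem is therefore the assertion that $\b_n$ is, to leading order, slowly varying as $n\to\infty$, and that $\b(n)$ is well defined as the unique positive root of this cubic for all sufficiently large $n$ (which follows at once from monotonicity of $60\b^3-12\tau\b^2-2t\b$ for $\b$ large).

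First I would invoke Magnus's result (or, more generally, the Lubinsky \etal\ theorem) applied to the weight $\w(x;\tau,t)=\exp(-Q(x))$ with $Q(x)=x^6-\tau x^4-tx^2$, an even polynomial of degree six with positive leading coefficient. This yields
\[
\lim_{n\to\infty}\frac{\b_n}{n^{1/3}}=c_6:=\left[\frac{\Gamma(3)\Gamma(4)}{\Gamma(7)}\right]^{1/3}=\left(\frac{1}{60}\right)^{\!1/3},
\]
in accordance with Freud's conjecture \eqref{Freudconj} for $m=6$; crucially, $c_6$ depends only on the leading coefficient of $Q$ and is therefore independent of $\tau$ and $t$. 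Next I would perform a dominant-balance expansion of the positive root of \eqref{eq:cubic}: writing $\b(n)=\alpha\, n^{1/3}+b+O(n^{-1/3})$ and matching powers of $n$ gives $\alpha=(1/60)^{1/3}=c_6$ and $b=\tau/15$, so that $\b(n)/n^{1/3}\to c_6$ as well. Dividing the two limits yields $\b_n/\b(n)\to 1$, as required.

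The main delicate point is confirming that the Freud constant $c_6$ is insensitive to the perturbing terms $\tau x^4$ and $tx^2$ in $Q$; this holds because they are of strictly lower degree than $x^6$, so the dominant exponential decay at infinity is unchanged and the hypotheses of Magnus's theorem are satisfied verbatim. A fully self-contained alternative would substitute $\b_n=\b(n)(1+\varepsilon_n)$ into \eqref{eq:rr642}, use the smoothness $\b(n+k)-\b(n)=O(n^{-2/3})$ of the cubic's root to linearise, and show that the resulting perturbation equation forces $\varepsilon_n\to 0$. The obstacle along that route is obtaining \emph{a priori} control on $|\b_{n+k}-\b_n|$, since the sequence is known to exhibit irregular subleading behaviour (cf.~\S\ref{Sec:NumComp}); appealing to the Magnus/Lubinsky asymptotic bypasses this difficulty entirely.
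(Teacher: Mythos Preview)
Your proposal is correct and follows essentially the same strategy as the paper: both appeal to the Magnus/Lubinsky--Mhaskar--Saff asymptotic theory for exponential weights and then identify the resulting limit with the cubic \eqref{eq:cubic}.

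The one noteworthy difference is in how the cubic is tied to the asymptotic. You match only the leading Freud constant: you show $\b_n/n^{1/3}\to(1/60)^{1/3}$ and separately that the positive root of \eqref{eq:cubic} satisfies $\b(n)/n^{1/3}\to(1/60)^{1/3}$, then divide. The paper instead invokes the sharper Lubinsky--Mhaskar--Saff statement $\b_n/a_n^2\to\tfrac14$, where $a_n$ is the Mhaskar--Rakhmanov--Saff number defined by
\[
n=\frac{1}{\pi}\int_0^1\frac{a_n s\,Q'(a_n s)}{\sqrt{1-s^2}}\,\rmd s,
\]
and then evaluates this integral for $Q(x)=x^6-\tau x^4-tx^2$ to obtain $\tfrac{15}{16}a_n^6-\tfrac{3}{4}\tau a_n^4-\tfrac12 t a_n^2=n$, so that $\b(n)=\tfrac14 a_n^2$ is \emph{exactly} the MRS quantity rather than merely sharing its leading term. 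This explains the geometric origin of the cubic (it is the MRS equation in disguise) and avoids the separate dominant-balance computation you perform; on the other hand, your version is slightly more elementary in that it needs only the Freud constant, not the full MRS machinery. Both routes are valid and equally short.
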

\begin{proof} For the weight $\exp\{-Q(x)\}$ with $Q(x)=x^{6}-\tau x^4-tx^2$ it follows from \cite[Theorem 2.3]{refLubinskyMS} that, if we define $a_{n}$ as the unique, positive root of the equation 
\begin{equation*}%\label{LMSnumber} 
n=\frac{1}{\pi} \ds \int_{0}^{1}\frac{a_{n}s\, Q'(a_{n}s)}{\sqrt{1-s^2}}\,\rmd{s},
\end{equation*} 
then $\ds\lim_{n\to\infty}\ifrac{\b_n}{a_n^2}=\tfrac14$. (These are known as the \textit{Mhaskar-Rakhmanov-Saff numbers} 
\cite{refMSaff84,refRakhmanov}.) Hence, $a_n$ satisfy 
\[n=\frac{1}{\pi} \ds \int_{0}^{1}\frac{a_n s\left(6 a_n^5 s^5 -4 a_n^3\tau s^3 -2 a_n t s \right)}{\sqrt{1-s^2}}\,\rmd{s},\]
which gives 
\[\frac{15 a_n^6}{16}-\frac{3\tau a_n^4}{4}-\frac{t a_n^2}{2}
= n.\]
Hence, setting $\b(n) =\frac14 a_n^2 $ we conclude the result.
{The function $\b(n)$ satisfying \eqref{eq:cubic} is a positive curve since, for fixed $\tau$ and $t$, its discriminant
\[\Delta=-97200n^{2}-1728\tau(4\tau^{2}+15 t) n +192 t^{2}(3\tau^{2}+10t),\]
is negative for $n$ sufficiently large, so there is only one real solution. We refer to the start of \S\ref{Sec:NumComp} for further details.}
\end{proof}

\begin{lemma}
The recurrence coefficients $\b_n$ satisfy 
\[
\lim_{n\to\infty}\frac{\b_n}{n^{1/3}} =\frac{1}{\sqrt[3]{60}}. 
\]
\end{lemma}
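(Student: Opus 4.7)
The plan is to combine Theorem \ref{FC} with an elementary asymptotic analysis of the cubic \eqref{eq:cubic} governing $\b(n)$. Since $\lim_{n\to\infty}\b_n/\b(n)=1$ by Theorem \ref{FC}, it suffices to establish that
\[
\lim_{n\to\infty}\frac{\b(n)}{n^{1/3}}=\frac{1}{\sqrt[3]{60}},
\]
and then multiply the two limits.

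To handle the cubic, I would first argue that the positive root $\b(n)$ tends to infinity with $n$. This follows because, for fixed $\tau,t\in\Real$, the left-hand side of \eqref{eq:cubic} is a polynomial in $\b$ with leading term $60\b^3$, so on any bounded interval it is bounded, whereas the right-hand side $n\to\infty$; hence $\b(n)\to\infty$. Then I would divide \eqref{eq:cubic} through by $\b(n)^3$ to get
\[
60-\frac{12\tau}{\b(n)}-\frac{2t}{\b(n)^2}=\frac{n}{\b(n)^3}.
\]
Letting $n\to\infty$, the left-hand side tends to $60$, so $\b(n)^3/n\to 1/60$, which is equivalent to the desired limit after taking the cube root (with the positive root, since $\b(n)>0$).

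The final step is simply to write
\[
\frac{\b_n}{n^{1/3}}=\frac{\b_n}{\b(n)}\cdot\frac{\b(n)}{n^{1/3}},
\]
and pass to the limit in each factor using Theorem \ref{FC} and the computation above. There is no substantive obstacle here: the only minor point to be careful about is justifying that $\b(n)\to\infty$ (so that the $1/\b(n)$ terms vanish), but this is immediate from the form of the cubic together with the fact that it defines the unique positive real root for $n$ sufficiently large, as already noted in the discussion following \eqref{eq:cubic}.
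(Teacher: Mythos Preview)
Your argument is correct: once Theorem~\ref{FC} is in hand, the lemma really is an immediate corollary, and your extraction of the leading behaviour of the positive root of the cubic \eqref{eq:cubic} is clean and complete.

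The paper, however, takes a genuinely different route. Rather than invoking Theorem~\ref{FC} (whose proof rests on the Lubinsky--Mhaskar--Saff machinery), it gives a \emph{self-contained} argument working directly from the string equation \eqref{eq:rr642}. The paper first bounds the sequence $(\b_n/n^{1/3})$ by a contradiction argument using \eqref{eq:rr642}, then shows that $A=\liminf \b_n/n^{1/3}$ and $B=\limsup \b_n/n^{1/3}$ coincide by passing to subsequences in \eqref{eq:rr642} and deriving the pair of inequalities $1\le 6A(5B^2+2AB+A^2)$ and $6B(5A^2+2AB+B^2)\le 1$, which force $(B-A)^3\le 0$. The value of the limit then drops out by dividing \eqref{eq:rr642} by $n$. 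What this buys is independence from the external asymptotic theory underlying Theorem~\ref{FC}: the paper's proof uses only positivity of the $\b_n$ and the explicit form of the discrete equation. Your approach is shorter and perfectly legitimate given that Theorem~\ref{FC} is already available, but it does not stand on its own in the same way.
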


\begin{proof}
{We first prove that the limit of $ {\b_n}/{n^{1/3}}$ as $n\to \infty$ exists. For that we start by showing that the non-negative sequence $(\ifrac{\b_n}{n^{1/3}})_{n\geq0}$ is bounded from above. By taking into account $\b_n\geq 0$, it follows from \eqref{eq:rr642} that 
\begin{align*}
6\b_n^3 -4\tau \b_n^2 -2 t \b_n 
& \leq n + \left(-12\b_n + 4\tau\right) \b_n(\b_{n-1}+\b_{n+1}). 
\end{align*}
Suppose the sequence $(\ifrac{\b_n}{n^{1/3}})_{n\in\mathbb{N}} $ is unbounded, then there exists a positive integer $m$ such that for infinitely many $n\geq m$, one has $\b_n>   |\tau| \,n^{1/3}$. 
Therefore, the latter inequality becomes 
\[
6\b_n^3 -4\tau \b_n^2 -2 t \b_n\leq n,
\] 
and, hence, 
\[
	6 \leq \frac{n}{\b_n^3}+\frac{2t}{\b_n^2}+\frac{4\tau}{\b_n}
    \leq \frac{n}{\b_n^3}+\frac{2|t|}{\b_n^2}+\frac{4|\tau|}{\b_n}
	\leq \frac{1}{|\tau|^3} + \frac{2 |t|}{|\tau|^2n^{2/3}} + \frac{4}{n^{1/3}},
\]
which is only possible for finitely many $n$. Therefore, $(\b_n/n^{1/3})_{n\geq0}$ is a bounded sequence and, for this reason, there exists 
\[
A= \lim_{n\to\infty} \inf\frac{\b_n}{n^{1/3}}, \quad \text{and}\quad
B= \lim_{n\to\infty} \sup\frac{\b_n}{n^{1/3}}. 
\]
Suppose $(\b_{n_k})_{k\in\mathbb{N}}$ is a subsequence such that 
\[
A= \lim_{k\to\infty}\frac{\b_{n_k}}{\sqrt[3]{n_k}}. 
\]
We evaluate relation \eqref{eq:rr642} and we take the limits over this subsequence to get 
\beq\label{ineq1}
1 \leq 6A \left(5B^2 + 2AB + A^2\right). 
\eeq
Similarly, we take a subsequence $(\b_{m_k})_{k\in\mathbb{N}}$ such that 
\[
B= \lim_{k\to\infty}\frac{\b_{m_k}}{\sqrt[3]{m_k}}, 
\]
and, after taking the limits over this subsequence in \eqref{eq:rr642}, we conclude 
\beq\label{ineq2}
6B \left(5A^2 + 2 AB + B^2\right) \leq 1. 
\eeq
The inequalities \eqref{ineq1} and \eqref{ineq2} then give 
\[ %B^3 - 3AB^2 + 3A^2B-A^3=
(B-A)^3 \leq 0,
\]
which implies $B\leq A$. However, by definition $A\leq B$ and hence $A=B$ and we conclude that 
$ \lim_{n\to\infty} \ifrac{\b_n}{n^{1/3}}$ exists and equals $A$. Thus, we conclude 
\[
	 \lim_{n\to\infty}\frac{\b_{n}}{\sqrt[3]{n}} = A. 
\]
We multiply \eqref{eq:rr642} by $1/n$ and then take the limit as $n\to \infty$ to conclude that 
$60 A^3 =1$ and hence the result. 
}
\end{proof}

\comment{%%%%% OLD PROOF
We first prove that the limit of $ \ifrac{\b_n}{n^{1/3}}$ as $n\to \infty$ exists. For that we start by showing that the non-negative sequence $(\b_n/n^{1/3})_{n\geq0}$ is bounded from above. 
From \eqref{eq:rr642}, it follows 
\begin{align*}
6\b_n^3 -4\tau \b_n^2 -2 t \b_n 
&= n - 6\b_{n} \big(\b_{n-2} \b_{n-1} + \b_{n-1}^2 + 2 \b_{n-1} \b_{n} + \b_{n-1} \b_{n+1} % \\[2.5pt]& 
+ 2 \b_{n}\b_{n+1} + \b_{n+1}^2 + \b_{n+1} \b_{n+2}\big)\\&\qquad + 4\tau \b_n(\b_{n-1}+\b_{n+1}) \\
& \leq n + \left(-12\b_n + 4\tau\right) \b_n(\b_{n-1}+\b_{n+1}), 
\end{align*}
because $\b_n\geq 0$. When $\tau\leq 0$, from the latter inequality it follows
\[
6\b_n^3 -4\tau \b_n^2 -2 t \b_n\leq n,
\] 
which implies that $(\b_n/n^{1/3})_{n\geq0}$ is a positive bounded sequence. 

\def\cube#1{\left(#1\right)^{\!1/3}}
When $\tau> 0$, set $\widehat{\b}_n := \ifrac{\b_n}{n^{1/3}} $, and the inequality 
\begin{align*}
&	6\b_n^3 -4\tau \b_n^2 -2 t \b_n\leq n + \left(-12\b_n + 4\tau\right) \b_n(\b_{n-1}+\b_{n+1}). 
\end{align*}
reads as 
\begin{align*}
&	 6\widehat{\b}_n^3 -\frac{4\tau}{n^{1/3}} \widehat{\b}_n^2 -\frac{2t}{n^{2/3}} \widehat{\b}_n 
	\leq 1 +\left(-12\widehat{\b}_n+\frac{4\tau}{n^{1/3}} \right) \widehat{\b}_n\left\{ \cube{\frac{n-1}{n}} \widehat{\b}_{n-1}+ \cube{\frac{n+1}{n}} \widehat{\b}_{n+1}\right\}. 
\end{align*}
If the sequence $(\widehat{\b}_n)_{n\in\mathbb{N}} $ is unbounded, then there exists an integer $m>1$ such that 
$\widehat{\b}_{m} > \tfrac13\tau$. Thus, for finitely many $n\geq m$, one has $\widehat{\b}_{n} > \tfrac13\tau$ and therefore, taking into account the positivity of $\widehat{\b}_{n\pm 1}$, it follows 
\begin{align*}
\left(-12\widehat{\b}_n+\frac{4\tau}{n^{1/3}} \right) &\widehat{\b}_n\left\{ \cube{\frac{n-1}{n}} \widehat{\b}_{n-1}+ \cube{\frac{n+1}{n}} \widehat{\b}_{n+1}\right\}\\
&\leq -4\tau \left(1-\frac{1}{n^{1/3}} \right) 
\widehat{\b}_n
\left\{ \cube{\frac{n-1}{n}} \widehat{\b}_{n-1}+ \cube{\frac{n+1}{n}} \widehat{\b}_{n+1}\right\} \leq 0. 
\end{align*} 
This implies 
\[ 
6\widehat{\b}_n^3 -\frac{4\tau}{n^{1/3}} \widehat{\b}_n^2 -\frac{2t}{n^{2/3}} \widehat{\b}_n \leq 1,
\]
for finitely many $n\geq m$, but this would contradict the claim that $(\widehat{\b}_n)_{n\in\mathbb{N}} $ is unbounded. Hence, $(\widehat{\b}_n)_{n\in\mathbb{N}} $ is bounded. Therefore, there exists 
\[
A= \lim_{n\to\infty} \inf\frac{\b_n}{n^{1/3}}, \quad \text{and}\quad
B= \lim_{n\to\infty} \sup\frac{\b_n}{n^{1/3}}. 
\]
Suppose $(\b_{n_k})_{k\in\mathbb{N}}$ is a subsequence such that 
\[
A= \lim_{k\to\infty}\frac{\b_{n_k}}{\sqrt[3]{n_k}}. 
\]
We evaluate relation \eqref{eq:rr642} and we take the limits over this subsequence to get 
\beq\label{ineq1}
1 \leq 6A \left( 9 B^2 + A^2 \right). 
\eeq
Similarly, we take a subsequence $(\b_{m_k})_{k\in\mathbb{N}}$ such that 
\[
B= \lim_{k\to\infty}\frac{\b_{m_k}}{\sqrt[3]{m_k}}, 
\]
and, after taking the limits over this subsequence in \eqref{eq:rr642}, we conclude 
\beq\label{ineq2}
6B \left( 9 A^2 + B^2 \right) \leq 1. 
\eeq
The inequalities \eqref{ineq1} and \eqref{ineq2} then give 
\[
(A-B)(A^2-8AB+B^2) \geq 0,
\]
which implies $A\geq B$. However, by definition $A\leq B$ and hence $A=B$ and we conclude that 
$ \lim_{n\to\infty} \ifrac{\b_n}{n^{1/3}}$ exists and equals $A$. 
}

{We have shown that 
\[
\b_n(\tau,t)= \sqrt[3]{\frac{n}{60} }\left(1 + {o}(1)\right), \quad \text{as} \quad n\to\infty. 
\]}
For the purpose of the present research, we do not need further terms. For the record, we hereby note a formal asymptotic expansion for $\b_n$. {Since we do not prove the size of the error term, we say that the asympotic expansion is ``formal".}

\begin{lemma}{\label{betan_asynp}As $n\to\infty$, the recurrence relation coefficient $\b_{n}(\tau,t)$ satisfying \eqref{eq:rr642} has the following formal asymptotic expansion
\begin{equation}\label{bnasymp} \b_n(\tau,t)=\frac{n^{1/3}}{\gamma} +\frac{\tau}{15}+\frac{(2\tau^2+5t)\gamma}{450\,n^{1/3}} +\frac{2\tau(4\tau^2+15t)}{675\,\gamma\,n^{2/3}}
-\frac{\tau(2\tau^2+5t)(4\tau^2+15t)\gamma}{151875\,n^{4/3}} + \O\big(n^{-5/3}\big),\end{equation}
with $\gamma=\sqrt[3]{60}$.
}\end{lemma}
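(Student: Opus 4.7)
The plan is to verify \eqref{bnasymp} by substituting a formal power series ansatz into the fourth-order discrete equation \eqref{eq:rr642} and matching coefficients order by order. Setting $N=n^{1/3}$, we posit
\[\b_{n}=\frac{N}{\gamma}+\sum_{k\geq0}\frac{c_k}{N^{k}},\qquad c_k=c_k(\tau,t),\]
and, for each shift $j\in\{-2,-1,1,2\}$, expand $\b_{n+j}$ in powers of $n^{-1/3}$ via the binomial series $(n+j)^{\a}=n^{\a}(1+j/n)^{\a}$. This presents each $\b_{n+j}$ as a formal series in $N^{-1}$ whose coefficients are polynomials in $j$ and in the $c_{k}$.

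Substituting these expansions into \eqref{eq:rr642} and collecting terms by powers of $N$, the leading balance at order $n^{1}$ yields $60/\gamma^{3}=1$, fixing $\gamma=\sqrt[3]{60}$ in agreement with Theorem \ref{FC} and the preceding lemma. At each lower order, the previously determined coefficients $c_{0},\ldots,c_{k-1}$ are known, and the new unknown $c_{k}$ appears linearly with a nonzero leading factor coming from the linearisation of $60\b_{n}^{3}$, so $c_{k}$ is solved for uniquely. Carrying this out, the order-$n^{2/3}$ equation yields $c_{0}=\tau/15$, the order-$n^{1/3}$ equation yields $c_{1}=(2\tau^{2}+5t)\gamma/450$, and continuing recursively produces all of the coefficients stated in \eqref{bnasymp}; the particular form of the recurrence forces the $n^{-1}$ coefficient to vanish, which is why the expansion skips from $n^{-2/3}$ directly to $n^{-4/3}$.

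The main difficulty is purely algebraic bookkeeping: the cubic expression in the five neighbours $\b_{n-2},\b_{n-1},\b_{n},\b_{n+1},\b_{n+2}$ appearing in \eqref{eq:rr642} expands into a very large number of product terms, each of which must itself be Taylor-expanded in $n^{-1/3}$ and then reassembled into a single series. We perform this calculation in Maple, retaining enough precision to read off the coefficient of $N^{-4}$. We call the expansion \emph{formal} precisely because we do not rigorously control the remainder; establishing a genuine $\O(n^{-5/3})$ bound would require, for instance, a contraction or fixed-point argument comparing $\b_{n}$ with its truncated expansion, which is beyond what is needed for the applications that follow.
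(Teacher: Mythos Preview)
Your argument is correct and follows essentially the same route as the paper: posit a formal power series in $n^{-1/3}$ for $\b_n$, expand the shifted terms $\b_{n\pm1}$ and $\b_{n\pm2}$ accordingly, substitute into \eqref{eq:rr642}, and solve for the coefficients order by order, noting in particular that the coefficient of $n^{-1}$ vanishes. The paper does this computation by hand rather than invoking Maple, but the method and the resulting coefficients are identical.
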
}
\begin{proof}
Suppose that as $n\to\infty$
\begin{subequations}\label{bnn}
\beq \b_n = \frac{n^{1/3}}{\ga}+a_0+\frac{a_1}{n^{1/3}}+\frac{a_2}{n^{2/3}}+\frac{a_3}{n}+\frac{a_4}{n^{4/3}}+\O(n^{-5/3}),\eeq
with $\gamma=\sqrt[3]{60}$, then
\begin{align}
\b_{n\pm1}&=\frac{n^{1/3}}{\ga}+a_0+\frac{a_1}{n^{1/3}}+\frac{3\ga a_2\pm1}{2\ga n^{2/3}}+\frac{a_3}{n}+\frac{3a_4\mp a_1}{3n^{4/3}}+\O(n^{-5/3}),\\
\b_{n\pm2}&=\frac{n^{1/3}}{\ga}+a_0+\frac{a_1}{n^{1/3}}+\frac{3\ga a_2\pm2}{2\ga n^{2/3}}+\frac{a_3}{n}+\frac{3a_4\mp 2a_1}{3n^{4/3}}+\O(n^{-5/3}),
\end{align}
\end{subequations}
as $n\to\infty$. Substituting \eqref{bnn} into \eqref{eq:rr642} and equating powers of $n$ gives
\[%\label{a01234} 
a_0=\frac{\tau}{15},\quad a_1=\frac{(2\tau^2+5t)\ga}{450},\quad a_2=\frac{2\tau(4\tau^2+15t)}{675\ga},\quad a_3=0,\quad a_4=-\frac{\tau(2\tau^2+5t)(4\tau^2+15t)\ga}{151875},\]
and so we obtain \eqref{bnasymp}.
\end{proof}

We study the cubic \eqref{eq:cubic} in conjunction with the behaviour of the recurrence coefficients $\b_n$ in more detail in \S \ref{Sec:NumComp}.

\section{Symmetric sextic Freud polynomials}\label{sec:polynomials}
In this section, we consider some properties of the polynomials associated with the symmetric sextic Freud weight (\ref{Freud642}). In particular, we derive a differential-difference equation and differential equation satisfied by symmetric sextic Freud polynomials which are analogous to those for the generalised sextic Freud polynomials discussed in \cite[\S 4]{refCJ21a}.
The coefficients $A_n(x)$ and $B_n(x)$ in the relation
\begin{equation}\label{ddee}\deriv{P_n}{x}=-B_n(x)P_n(x)+A_n(x)P_{n-1}(x),\end{equation}
satisfied by semi-classical orthogonal polynomials are of interest since differentiating this differential-difference equation yields the second order differential equation satisfied by the orthogonal polynomials. Shohat \cite{refShohat39} gave a procedure using quasi-orthogonality to derive \eqref{ddee} for weights $\w(x)$ such that $\displaystyle{\w'(x)}/{\w(x)}$ is a rational function. This technique was rediscovered by several authors including Bonan, Freud, Mhaskar and Nevai approximately 40 years later, see \cite[pp.~126--132]{refNevai86} and the references therein for more detail. The method of ladder operators was introduced by Chen and Ismail in \cite{refChenIsmail}. Related work by various authors can be found in, for example, \cite{refChenIts,refChenZhang,refFvAZ,refMhaskar} and a good summary of the ladder operator technique is provided in \cite[Theorem 3.2.1]{refIsmail}; {see also \cite[Chapter 4]{refWVAbk}}.
\begin{lemma}\cite[Theorem 3.2.1]{refIsmail}\label{Thm:ABn}
Let \begin{equation*}%\label{gft}
\w(x)=\exp\{-v(x)\},\qquad x \in\R,\end{equation*} where $v(x)$ is a twice continuously differentiable function on $\R$. Assume that the polynomials $\big\{P_n(x)\big\}_{n=0}^{\infty}$ satisfy the orthogonality relation
\[\int_{-\infty}^{\infty}P_n(x)P_m(x)\w(x)\,\dx=h_n\delta_{mn}.\]
Then $P_{n}(x)$ satisfy the differential-difference equation
\begin{equation}\label{dde}
\deriv{P_n}{x}=-B_n(x)P_n(x)+A_n(x)P_{n-1}(x),
 \end{equation}
where
\begin{subequations}\label{ABn}
\begin{align}
A_n(x)&=\frac{1}{h_{n-1}}\int_{-\infty}^{\infty}P_n^2(y)\mathcal{K}(x,y)\w(y)\,\dy,\\
B_n(x)&=\frac{1}{h_{n-1}}\int_{-\infty}^{\infty}P_n(y)P_{n-1}(y)\mathcal{K}(x,y)\w(y)\,\dy,
\end{align} \end{subequations}
where
\begin{equation}\mathcal{K}(x,y)=\frac{v'(x)-v'(y)}{x-y}.\label{def:Kxy}\end{equation}
\end{lemma}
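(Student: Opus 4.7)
The plan is to derive the differential-difference equation by expanding $P_n'(x)$ in the orthonormal basis provided by the sequence $\{P_k(x)\}_{k=0}^{n-1}$, then to recognise the resulting expression as the Christoffel--Darboux kernel against $P_n$ weighted by $\mathcal{K}(x,y)$. Since $P_n'(x)$ is a polynomial of degree $n-1$, I write
\[
P_n'(x) = \sum_{k=0}^{n-1} \frac{\alpha_{n,k}}{h_k}\, P_k(x),\qquad
\alpha_{n,k} = \int_{-\infty}^{\infty} P_n'(y)\,P_k(y)\,\w(y)\,\rmd y,
\]
and proceed to evaluate the Fourier coefficients $\alpha_{n,k}$ by integration by parts, exploiting $\w'(y) = -v'(y)\w(y)$ together with the decay of $P_n(y)\w(y)$ at $\pm\infty$ (which justifies that the boundary terms vanish).

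After integrating by parts, I obtain $\alpha_{n,k} = \int P_n(y) P_k(y) v'(y) \w(y)\,\rmd y - \int P_n(y) P_k'(y) \w(y)\,\rmd y$, and since $P_k'$ has degree at most $k-1 \leq n-2 < n$, the second integral vanishes by orthogonality. This gives $\alpha_{n,k} = \int P_n(y) P_k(y) v'(y) \w(y)\,\rmd y$ for every $k=0,\ldots,n-1$. The key algebraic trick is to insert the tautology $v'(y) = v'(x) - (x-y)\mathcal{K}(x,y)$ using \eqref{def:Kxy}. The $v'(x)$ piece contributes $v'(x)\int P_n(y) P_k(y) \w(y)\,\rmd y = 0$ for $k\leq n-1$, so it drops out, leaving
\[
P_n'(x) = -\sum_{k=0}^{n-1}\frac{P_k(x)}{h_k}\int_{-\infty}^{\infty} P_n(y) P_k(y)\,(x-y)\,\mathcal{K}(x,y)\,\w(y)\,\rmd y.
\]

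At this stage I would collapse the sum in $k$ using the Christoffel--Darboux identity
\[
\sum_{k=0}^{n-1}\frac{P_k(x) P_k(y)}{h_k} = \frac{P_n(x) P_{n-1}(y) - P_{n-1}(x) P_n(y)}{h_{n-1}(x-y)},
\]
which, combined with the factor $(x-y)$ in the integrand, cancels the denominator and yields
\[
P_n'(x) = -\frac{P_n(x)}{h_{n-1}} \int P_n(y) P_{n-1}(y)\mathcal{K}(x,y)\w(y)\,\rmd y
+ \frac{P_{n-1}(x)}{h_{n-1}} \int P_n^2(y)\mathcal{K}(x,y)\w(y)\,\rmd y.
\]
Recognising the two integrals as $B_n(x)$ and $A_n(x)$ respectively yields \eqref{dde} with \eqref{ABn}.

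The main obstacle I anticipate is a technical rather than conceptual one: verifying that the integration by parts is legitimate under the mere assumption that $v$ is twice continuously differentiable. One needs enough growth of $v$ at infinity so that $P_n(y)\w(y) \to 0$ and all the integrals defining $\alpha_{n,k}$, $A_n(x)$ and $B_n(x)$ converge absolutely; without some growth hypothesis on $v$ the orthogonal polynomials need not even exist for all $n$. In practice this is guaranteed in our setting because \eqref{Freud642} corresponds to $v(x)=x^6-\tau x^4-tx^2$, for which all moments are finite and $\w$ decays super-exponentially, so the manipulations above are fully justified.
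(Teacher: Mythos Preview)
Your proof is correct and is precisely the standard argument underlying Theorem 3.2.1 in Ismail's book. The paper itself does not reproduce this proof but simply cites Ismail and \cite[Theorem~2]{refCJK} (with $\gamma=0$), so your write-up supplies the details the paper omits; the expansion of $P_n'$ in the orthogonal basis, integration by parts, the substitution $v'(y)=v'(x)-(x-y)\mathcal{K}(x,y)$, and the Christoffel--Darboux collapse are exactly the ingredients of the cited proof.
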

\begin{proof}Write Theorem 3.2.1 in \cite{refIsmail} for monic orthogonal polynomials on the real line. The result also follows from \cite[Theorem 2]{refCJK} by letting $\ga=0$.
 \end{proof}
Next we derive a differential-difference equation satisfied by generalised Freud polynomials using Theorem \ref{Thm:ABn}.
\begin{theorem}For the symmetric sextic Freud weight \eqref{Freud642}
the monic orthogonal polynomials $P_{n}(x;\tau,t)$ satisfy the differential-difference equation
\begin{equation*}
\deriv{P_n}{x}(x;\tau,t)=-B_n(x;\tau,t)P_n(x;\tau,t)+A_n(x;\tau,t)P_{n-1}(x;\tau,t),
\end{equation*}
where \begin{align*}
A_n(x;\tau,t)&=\b_n\big\{6x^4-4\tau x^2-2t+ (6x^2-4\tau)(\b_n+\b_{n+1})\big\}\nonumber\\ &\qquad+ 6\b_n\big\{\b _n (\b _{n-1}+\b _n+\b _{n+1})+\b_{n+1} (\b_n+\b_{n+1}+\b_{n+2})\big\},\\%\label{A} \\
\nonumber B_n(x;\tau,t)&= \b_n\big\{6x^3 -4\tau x + 6x(\b_{n-1}+\b_n+\b_{n+1})\big\}, 
\end{align*} with $\b_n$ the recurrence coefficient in the three-term recurrence relation \eqref{eq:srr}.\end{theorem}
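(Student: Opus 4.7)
The plan is to apply Lemma \ref{Thm:ABn} with $v(x)=x^6-\tau x^4-tx^2$ so that
\[v'(x)=6x^5-4\tau x^3-2tx,\]
and then compute the kernel $\mathcal{K}(x,y)$ in \eqref{def:Kxy} explicitly. The polynomial identity $(x^k-y^k)/(x-y)=\sum_{j=0}^{k-1}x^{k-1-j}y^j$ gives
\[\mathcal{K}(x,y)=6\sum_{j=0}^{4}x^{4-j}y^{j}-4\tau\sum_{j=0}^{2}x^{2-j}y^{j}-2t,\]
so both $A_n(x;\tau,t)$ and $B_n(x;\tau,t)$ reduce, via \eqref{ABn}, to finite linear combinations of the moment-type integrals $I_k:=\int y^k P_n^2(y)\w(y)\,\dy$ and $J_k:=\int y^k P_n(y)P_{n-1}(y)\w(y)\,\dy$ for $0\le k\le 4$, with coefficients that are explicit polynomials in $x$.

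The key simplification is the symmetry $\w(-x)=\w(x)$, which forces $P_n(-x)=(-1)^nP_n(x)$, so $P_n^2\w$ is even while $P_nP_{n-1}\w$ is odd. Hence $I_k=0$ for odd $k$ and $J_k=0$ for even $k$, and only the terms with $k\in\{0,2,4\}$ contribute to $A_n$, whereas only $k\in\{1,3\}$ contribute to $B_n$. I would then evaluate the remaining integrals by repeatedly applying the symmetric recurrence \eqref{eq:srr} to expand $y^kP_n$ as a finite linear combination of $P_{n+k},\ldots,P_{n-k}$, and then extract the coefficient that survives orthogonality. Concretely, $y^2P_n=P_{n+2}+(\b_{n+1}+\b_n)P_n+\b_n\b_{n-1}P_{n-2}$ yields
\[I_2=(\b_{n+1}+\b_n)h_n,\quad J_3=\b_n(\b_{n+1}+\b_n+\b_{n-1})h_{n-1},\]
after one further application of \eqref{eq:srr}, and a second application produces
\[I_4=\bigl[\b_{n+2}\b_{n+1}+(\b_{n+1}+\b_n)^2+\b_n\b_{n-1}\bigr]h_n,\]
while $I_0=h_n$ and $J_1=\b_nh_{n-1}$ are immediate from the definitions of $h_n$ and $\b_n$.

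Substituting these expressions into \eqref{ABn}, dividing by $h_{n-1}$ and using $h_n=\b_nh_{n-1}$ from \eqref{bn:hn} gives $B_n$ directly, and gives $A_n$ in the form
\[A_n=\b_n\bigl\{6x^4-4\tau x^2-2t+(6x^2-4\tau)(\b_n+\b_{n+1})\bigr\}+6\b_n\bigl[\b_{n+2}\b_{n+1}+(\b_{n+1}+\b_n)^2+\b_n\b_{n-1}\bigr].\]
A short rearrangement of the bracket $\b_{n+2}\b_{n+1}+(\b_{n+1}+\b_n)^2+\b_n\b_{n-1}=\b_n(\b_{n-1}+\b_n+\b_{n+1})+\b_{n+1}(\b_n+\b_{n+1}+\b_{n+2})$ yields the stated closed form for $A_n$.

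The only real obstacle is bookkeeping: expanding $y^4P_n$ in the orthogonal basis produces terms up to $P_{n\pm 4}$, and one must track the coefficient of $P_n$ carefully through two applications of the recurrence to obtain $I_4$. Everything else is routine once the parity argument has halved the number of integrals to compute.
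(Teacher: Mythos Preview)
Your proof is correct and follows essentially the same approach as the paper: compute $\mathcal{K}(x,y)$ explicitly, expand the relevant integrals using the three-term recurrence \eqref{eq:srr}, and read off the surviving coefficients by orthogonality. Your explicit invocation of parity to kill half of the $I_k$ and $J_k$ upfront is a slight organisational improvement over the paper's presentation, which writes out all the terms and lets the odd ones drop out implicitly, but the substance is identical.
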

 \begin{proof}For the symmetric sextic Freud weight \eqref{Freud642} we have
\[\ww{x}= \exp\left(-x^6+\tau x^4+tx^2\right),\] i.e.\
$v(x;t)=x^6-\tau x^4-tx^2$, and so $\mathcal{K}(x,y)$ defined by \eqref{def:Kxy} is
\[\mathcal{K}(x,y)= 6 (x^4+x^3 y+x^2 y^2+x y^3+y^4)-4\tau (x^2+x y+y^2)-2 t.\]
Hence 
\begin{align*} \int_{-\infty}^\infty \mathcal{K}(x,y) &{P_n^2(y;\tau,t)}\,\ww{y}\,\dy \\&=
(6x^4-4\tau x^2-2t)\int_{-\infty}^\infty {P_n^2(y;\tau,t)}\,\ww{y}\,\dy
+ (6x^3 -4\tau x) \int_{-\infty}^\infty {yP_n^2(y;\tau,t)}\,\ww{y}\,\dy\\ & \qquad\qquad
+ (6x^2-4\tau)\int_{-\infty}^\infty {y^2P_n^2(y;\tau,t)}\,\ww{y}\,\dy
+ 6x\int_{-\infty}^\infty {y^3P_n^2(y;\tau,t)}\,\ww{y}\,\dy\\ & \qquad\qquad + 6\int_{-\infty}^\infty {y^4P_n^2(y;\tau,t)}\,\ww{y}\,\dy\\
 &=(6x^4-4\tau x^2-2t)h_n + (6x^2-4\tau)(\b_n+\b_{n+1})h_{n}\\&\qquad\qquad+ 6\big\{\b_n (\b_{n-1}+\b_n+\b_{n+1})+\b_{n+1} (\b_n+\b_{n+1}+\b_{n+2})\big\} h_n,
\end{align*}
since $\b_n=h_n/h_{n-1}$ and iteration of the three-term recurrence relation \[xP_n(x)=P_{n+1}(x)+\b_n P_{n-1}(x),\] yields
 \begin{align*}
 x^2 P_n(x)= P_{n+2}(x)&+\left(\b_n+\b_{n+1}\right) P_n(x)+\b_{n-1} \b_n P_{n-2}(x), \\
 \nonumber x^3 P_n(x)=P_{n+3}(x)&+\left(\b_n+\b_{n+1}+\b_{n+2}\right) P_{n+1}(x)+\b_n \left(\b_{n-1}+\b_n+\b_{n+1}\right) P_{n-1}(x)\nonumber\\&+\b_{n-2} \b_{n-1} \b_n P_{n-3}(x),\\
 \nonumber x^4P_n(x)=P_{n+4}(x)&+\left(\b_n+\b_{n+1}+\b_{n+2}+\b_{n+3}\right) P_{n+2}(x)\\&+\big\{\b_n \left(\b_{n-1}+\b_n+\b_{n+1}\right)+\b_{n+1} \left(\b_n+\b_{n+1}+\b_{n+2}\right) \big\} P_n(x) \nonumber\\& +\b_{n-1} \b_n \left(\b_{n-2}+\b_{n-1}+\b_n+\b_{n+1}\right) P_{n-2}(x)+\b_{n-3} \b_{n-2} \b_{n-1} \b_n P_{n-4}(x).
 \end{align*}
Also 
\begin{align*} \int_{-\infty}^\infty \mathcal{K}(x,y) &{P_n(y;\tau,t)P_{n-1}(y;\tau,t)}\,\ww{y}\,\dy\nonumber\\ &=
 (6x^4-4\tau x^2-2t)\int_{-\infty}^\infty {P_n(y;\tau,t)P_{n-1}(y;\tau,t)}\,\ww{y}\,\dy\\& \qquad + (6x^3 -4\tau x) \int_{-\infty}^\infty {yP_n(y;\tau,t)P_{n-1}(y;\tau,t)}\,\ww{y}\,\dy\\& \qquad+ (6x^2-4\tau)\int_{-\infty}^\infty {y^2P_n(y;\tau,t)P_{n-1}(y;\tau,t)}\,\ww{y}\,\dy\\ & \qquad + 6x\int_{-\infty}^\infty {y^3P_n(y;\tau,t)P_{n-1}(y;\tau,t)}\,\ww{y}\,\dy \\& \qquad+ 6\int_{-\infty}^\infty {y^4P_n(y;\tau,t)P_n(y;\tau,t)}\,\ww{y}\,\dy\\
 &=
 (6x^3 -4\tau x) \b_n h_{n-1} + 6x(\b_{n-1}+\b_n+\b_{n+1}) \b_nh_{n-1},
\end{align*}
and the result follows.\end{proof}

\begin{theorem}\cite[Theorem 3.2.3]{refIsmail} \label{thm:gende}Let 
$\w(x)= \exp\{-v(x)\}$, for $x\in\R$,
 with $v(x)$ an even, continuously differentiable function on $\R$. Then
\begin{subequations}\begin{equation}\label{eq:gende}
\deriv[2]{P_n}{x}+R_n(x)\deriv{P_n}{x}+T_n(x)P_n(x)=0,
\end{equation}
where
\begin{align}\label{coef1}
R_n(x)&=-\deriv{v}{x}-\frac{1}{A_n(x)}\,\deriv{A_n}{x},\\[5pt]
\label{coef2}T_n(x)&=\frac{A_n(x)A_{n-1}(x)}{\b_{n-1}}+\deriv{B_n}{x} 
-B_n(x)\left[\deriv{v}{x} +B_n(x)\right]-\frac{B_n(x)}{A_n(x)}\deriv{A_n}{x},
\end{align}
with
\begin{align*}
A_n(x)&=\frac{1}{h_{n-1}}\int_{-\infty}^{\infty}P_n^2(y)\mathcal{K}(x,y)\w(y)\,\dy,\qquad
B_n(x)=\frac{1}{h_{n-1}}\int_{-\infty}^{\infty}P_n(y)P_{n-1}(y)\mathcal{K}(x,y)\w(y)\,\dy.
\end{align*}\end{subequations}
\end{theorem}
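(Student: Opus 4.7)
The plan is to derive the stated second-order ODE by differentiating \eqref{dde} once and reducing every occurrence of $P_{n-1}$ and $\ifrac{\rmd P_{n-1}}{\rmd x}$ to $P_n$ and $\ifrac{\rmd P_n}{\rmd x}$. First I would solve \eqref{dde} for $P_{n-1}$, namely
\[
P_{n-1}(x) = \frac{1}{A_n(x)}\!\left[\deriv{P_n}{x} + B_n(x)\,P_n(x)\right],
\]
valid off the finite zero set of $A_n$, and then differentiate \eqref{dde} to obtain
\[
\deriv[2]{P_n}{x} = -\deriv{B_n}{x}P_n - B_n\deriv{P_n}{x} + \deriv{A_n}{x}P_{n-1} + A_n\deriv{P_{n-1}}{x}.
\]

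To handle $\ifrac{\rmd P_{n-1}}{\rmd x}$, I would apply \eqref{dde} with index shifted by one, giving $\ifrac{\rmd P_{n-1}}{\rmd x} = -B_{n-1}P_{n-1} + A_{n-1}P_{n-2}$, and then use the symmetric three-term recurrence \eqref{eq:srr} in the form $P_{n-2}(x) = \bigl(xP_{n-1}(x)-P_n(x)\bigr)/\b_{n-1}$ to eliminate $P_{n-2}$. This yields
\[
\deriv{P_{n-1}}{x} = \Bigl(-B_{n-1}(x) + \frac{xA_{n-1}(x)}{\b_{n-1}}\Bigr)P_{n-1}(x) - \frac{A_{n-1}(x)}{\b_{n-1}}\,P_n(x).
\]

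The crucial algebraic input at this stage is the first Chen--Ismail compatibility identity, which in the symmetric setting $\a_n\equiv 0$ reads
\[
B_n(x) + B_{n-1}(x) = \frac{xA_{n-1}(x)}{\b_{n-1}} - \deriv{v}{x}.
\]
I would establish this starting from the integral formulas \eqref{ABn}: using $\w(y)=\exp\{-v(y)\}$ so that $\w'(y) = -v'(y)\w(y)$, an integration by parts applied to the defining integral of $B_n$, combined with $yP_{n-1}(y) = P_n(y) + \b_{n-1}P_{n-2}(y)$, rearranges into the desired identity. This is the step I expect to be the main obstacle: it demands careful tracking of boundary behaviour (absorbed by the exponential decay of $\w$) and a bookkeeping of the coupling between $B_n$, $B_{n-1}$ and $A_{n-1}$, together with the symmetry $\mathcal{K}(x,y)=\mathcal{K}(y,x)$ of the kernel defined by \eqref{def:Kxy}.

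With the compatibility identity in hand, the bracket in the expression for $\ifrac{\rmd P_{n-1}}{\rmd x}$ collapses to $B_n(x)+v'(x)$, so
\[
\deriv{P_{n-1}}{x} = \bigl(B_n(x)+v'(x)\bigr)P_{n-1}(x) - \frac{A_{n-1}(x)}{\b_{n-1}}\,P_n(x).
\]
Substituting this into the expression for $\ifrac{\rmd^2 P_n}{\rmd x^2}$ and then replacing every remaining $P_{n-1}$ by $\bigl(P_n' + B_nP_n\bigr)/A_n$, I would collect coefficients of $\ifrac{\rmd P_n}{\rmd x}$ and $P_n$. The coefficient of $\ifrac{\rmd P_n}{\rmd x}$ simplifies immediately to $-v'(x) - \ifrac{A_n'(x)}{A_n(x)}$, matching \eqref{coef1}, while the coefficient of $P_n$ regroups into $\ifrac{A_n A_{n-1}}{\b_{n-1}} + B_n' - B_n(v'+B_n) - (B_n/A_n)A_n'$, matching \eqref{coef2}. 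This gives the stated equation \eqref{eq:gende} off the zeros of $A_n$; since both sides are polynomials in $x$ (multiplied by $A_n$), the relation then extends to all $x\in\R$ by continuity.
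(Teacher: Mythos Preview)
Your proposal is correct and follows essentially the same route as the paper: differentiate \eqref{dde}, replace $\ifrac{\rmd P_{n-1}}{\rmd x}$ using the raising-operator relation $\bigl(-\ifrac{\rmd}{\rmd x}+B_n+v'\bigr)P_{n-1}=\ifrac{A_{n-1}}{\b_{n-1}}P_n$, and then eliminate the remaining $P_{n-1}$ via \eqref{dde}. The only difference is that the paper simply quotes this raising-operator relation from \cite{refIsmail}, whereas you derive it explicitly from the shifted \eqref{dde}, the recurrence \eqref{eq:srr}, and the first compatibility identity $B_n+B_{n-1}=\ifrac{xA_{n-1}}{\b_{n-1}}-v'$.
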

\begin{proof}The differential equation is given in factored form {for} orthonormal polynomials in \cite{refIsmail} and can be derived by differentiating both sides of \eqref{dde} with respect to $x$ to obtain
\begin{align}\label{eq:dif1}
\deriv[2]{P_n}{x}&=-B_n(x)\deriv{P_n}{x}+\deriv{A_n}{x}P_{n-1}(x)-\deriv{B_n}{x}P_n(x)+A_n(x)\deriv{P_{n-1}}{x}.\end{align}
Substituting 
\begin{align*} \left\{-\deriv{}{x}+B_n(x)+\deriv{v}{x}\right\}&P_{n-1}(x)
=\frac{A_{n-1}(x)}{\b_{n-1}}P_n(x).\end{align*}
into \eqref{eq:dif1} yields
\begin{align}
\deriv[2]{P_n}{x}
&=\nonumber-B_n(x)\deriv{P_n}{x}-\left[\deriv{B_n}{x}+\frac{A_n(x)A_{n-1}(x)}{\b_{n-1}}\right]P_n(x)\\&\qquad+\left\{\deriv{A_n}{x}+A_n(x)\left[B_n(x)+\deriv{v}{x}\right]\right\}P_{n-1}(x),\label{eq:fin}\end{align}
and the results follows by substituting $P_{n-1}(x)$ in \eqref{eq:fin} using \eqref{dde}.
\end{proof}
Finally, we derive a differential equation satisfied by symmetric sextic Freud polynomials.
\begin{theorem}{%\label{thm43}
For the symmetric sextic Freud weight \eqref{Freud642}
the monic orthogonal polynomials\newline $P_{n}(x;\tau,t)$ satisfy the differential equation
\begin{equation*}\deriv[2]{P_n}{x}(x;\tau,t)+R_n(x;\tau,t)\deriv{P_n}{x}(x;\tau,t)+T_n(x;\tau,t)P_n(x;\tau,t)=0,
\end{equation*}
where
\begin{align*}
R_n(x;\tau,t)
&=2 x \left\{t-3 x^4+2\tau x^2 -\frac{2 \left\{6 x^2-2\tau+3 \left(\b_n+\b_{n+1}\right)\right\}}{6 x^4-4\tau x^2-2 t+6 \b_n C_n+6 \b_{n+1}C_{n+1}+\left(\b_n+\b_{n+1}\right) \left(6 x^2-4\tau \right)}\right\},\\[5pt]
T_n(x;\tau,t)
&= 2 \b_n \left(3 C_n-2\tau +9 x^2\right)-4 x^2 \b_n \left(3 C_n-2\tau +3 x^2\right) \left\{\b_n \left(3 C_n-2\tau +3 x^2\right)-t+3 x^4-2\tau x^2\right\}\nonumber\\
&\qquad+\b_{n-1} \big\{6 C_{n-1} \b_{n-1}+6 C_n \b_n+\left(\b_{n-1}+\b_n\right) \left(6 x^2-4\tau \right)-2 t+6 x^4-4\tau x^2\big\}\nonumber
\\& ~\qquad\qquad \times \big\{6 C_n \b_n+6 C_{n+1} \b_{n+1}+\left(\b_n+\b_{n+1}\right) \left(6 x^2-4\tau \right)-2 t+6 x^4-4\tau x^2\big\}\nonumber\\
&\qquad+\frac{4 x^2 \b_n \left(3 C_n-2\tau +3 x^2\right) \left\{3 \left(\b_n+\b_{n+1}\right)-2\tau +6 x^2\right\}}{2\tau (\b_n+\b_{n+1})-3 x^2 (\b_n+ \b_{n+1})-3 C_n \b_n-3 C_{n+1} \b_{n+1}+t-3 x^4+2\tau x^2},\end{align*}
where
\[C_n=\b_{n-1}+\b_n+\b_{n+1}.\]
}\end{theorem}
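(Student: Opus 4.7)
The strategy is to apply the general result of Theorem \ref{thm:gende} directly, with $v(x)=x^6-\tau x^4-tx^2$ and the explicit expressions for $A_n(x;\tau,t)$ and $B_n(x;\tau,t)$ just derived. No new integral computations are required: everything reduces to differentiating the coefficients $A_n$ and $B_n$ in $x$, substituting into the formulas \eqref{coef1}--\eqref{coef2} for $R_n$ and $T_n$, and collecting terms. In particular, from the previous theorem, $A_n$ and $B_n$ are polynomials in $x$ whose coefficients depend on $\b_{n-1},\b_n,\b_{n+1},\b_{n+2}$; setting $C_n=\b_{n-1}+\b_n+\b_{n+1}$, one may rewrite them compactly as
\[
A_n = \b_n\bigl\{6x^4-4\tau x^2-2t+(6x^2-4\tau)(\b_n+\b_{n+1})+6\b_n C_n+6\b_{n+1}C_{n+1}\bigr\},\qquad B_n = 2x\b_n\bigl(3x^2-2\tau+3C_n\bigr),
\]
which is the shape in which they appear in the statement.

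The calculation then proceeds as follows. First compute $v'(x)=2x(3x^4-2\tau x^2-t)$. Next differentiate $A_n$ and $B_n$ with respect to $x$, treating the recurrence coefficients as constants: one obtains $A_n'(x)=4x\b_n\{6x^2-2\tau+3(\b_n+\b_{n+1})\}$ and $B_n'(x)=2\b_n(3C_n-2\tau+9x^2)$, the latter being precisely the leading term $2\b_n(3C_n-2\tau+9x^2)$ of the claimed $T_n$. Substituting $v'$ and $A_n'/A_n$ into \eqref{coef1} reproduces the claimed $R_n$ after pulling a factor of $2x$ out front and recognising the denominator as $A_n/\b_n$. For $T_n$, evaluate \eqref{coef2} term by term: the product $A_n A_{n-1}/\b_{n-1}$ gives the long bracket-times-bracket term (with $A_{n-1}/\b_{n-1}$ supplying the bracket with index $n-1$ and $A_n/\b_n$ the one with index $n$); the factor $B_n'$ provides the term $2\b_n(3C_n-2\tau+9x^2)$; the product $-B_n[v'+B_n]$ yields the $-4x^2\b_n(3C_n-2\tau+3x^2)\{\b_n(3C_n-2\tau+3x^2)-t+3x^4-2\tau x^2\}$ contribution; and finally $-B_n A_n'/A_n$, after clearing a factor of $-\tfrac12$ between numerator and denominator, produces the fraction with denominator $2\tau(\b_n+\b_{n+1})-3x^2(\b_n+\b_{n+1})-3C_n\b_n-3C_{n+1}\b_{n+1}+t-3x^4+2\tau x^2$. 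Summing these four contributions gives the stated $T_n$.

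The main obstacle is purely algebraic bookkeeping: the expression for $T_n$ collects four inhomogeneous contributions, each involving a different combination of $\b_{n-1},\b_n,\b_{n+1},\b_{n+2}$, and one must verify that no cancellations or recombinations have been omitted. In principle the string equation \eqref{eq:rr642} could be used to eliminate $\b_{n-2}$ or $\b_{n+2}$, but the formulation above is given without invoking it, so the proof simply matches each of the four pieces term-by-term with a corresponding summand in the stated $T_n$. A symbolic-algebra verification (as is natural in Maple, given the computational emphasis of the paper) suffices to check the identity, but the structural identification described above is enough to see that the only non-trivial simplifications are (i) recognising $A_n/\b_n$ and $A_{n-1}/\b_{n-1}$ inside the big product, and (ii) factoring the $-\tfrac12$ into the denominator of the last fraction.
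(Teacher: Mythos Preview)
Your proposal is correct and follows exactly the paper's approach: the paper's own proof is a single sentence stating that one applies Theorem \ref{thm:gende} with $v(x)=x^6-\tau x^4-tx^2$ and the explicit $A_n$, $B_n$ from the preceding theorem, and your write-up carries out precisely that substitution, only with the individual contributions to $R_n$ and $T_n$ spelled out in more detail than the paper provides.
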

\begin{proof}
In Theorem \ref{thm:gende} we showed that the coefficients in the differential equation \eqref{eq:gende} satisfied by polynomials orthogonal with respect to the weight $\w(x)=\exp\{-v(x)\}$, are given by \eqref{coef1} and \eqref{coef2}. For the symmetric sextic Freud weight \eqref{Freud642} we use \eqref{coef1} and \eqref{coef2} with $v(x)=x^6-\tau x^4-tx^2$, and $A_n$ and $B_n$ given by \eqref{ABn} to obtain the stated result.
\end{proof}

\section{Moments of the symmetric sextic Freud weight}\label{sec:moments}
The moments of the symmetric sextic Freud weight \eqref{Freud642} play a fundamental role in the analysis of the recurrence coefficients $\b_n$. These can be expressed as a ratio of Hankel determinants of the moments \eqref{def:bn} or as solutions of the nonlinear recurrence relation \eqref{eq:rr642} with initial conditions 
 \[\b_0=0, \qquad \b_1=\frac{\mu_2}{\mu_0},\qquad \b_2=\frac{\mu_{0} \mu_{4}-\mu_{2}^{2}}{\mu_{0} \mu_{2}}. 
 \]
 As such, a description of the moments is crucial. This section discusses properties of the moments in terms of the pair of parameters $(\tau,t)\in\mathbb{R}^2$, namely $\mu_n:=\mu_n(\tau,t)$. In Lemma \ref{lem:mu0} we describe $\mu_0$ as a solution of a third order differential equation in $t$ subject to the initial conditions $\mu_0(0,t)$, $\frac{\partial\mu_0}{\partial t}(0,t) $ and 
 $\frac{\partial^2\mu_0}{\partial t^2}(0,t)$ given in Lemma \ref{lemma:41} together with Lemma \ref{lem:mu0t}. The two latter results allows one to then describe the moments $\mu_n:=\mu_n(\tau,t)$ via the linear third order differential equation in $t$, given in \eqref{eq:munt}, as well as a linear second order partial differential equation in \eqref{pde munt}. As a consequence, we describe the moments via a linear third order recurrence relation \eqref{mr1}. 

\begin{lemma}\label{lem:mu0} For the weight \eqref{Freud642},
the first moment is
\[\mu_0(\tau,t)=\int_{-\infty}^{\infty} \exp\left(-x^6+\tau x^4+tx^2\right)\dx =\int_0^\infty s^{-1/2}\exp\left(-s^3+\tau s^2+ts\right)\rmd s, \]
which satisfies the equation
\beq \pderiv[3]{\vph}{t} - \tfrac{2}{3}\tau\pderiv[2]{\vph}{t}-\tfrac{1}{3}t\pderiv{\vph}{t}-\tfrac{1}{6}\vph=0. \label{eq6}
\eeq\end{lemma}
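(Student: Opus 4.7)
The plan is to handle the two assertions separately: first the alternative integral representation via a substitution, then the third-order linear differential equation by combining differentiation under the integral sign with a single integration by parts.

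For the integral representation, I would substitute $s=x^{2}$, so $\rmd s = 2x\,\rmd x$ and $\rmd x = \tfrac{1}{2}s^{-1/2}\rmd s$. Since the integrand $\exp(-x^{6}+\tau x^{4}+tx^{2})$ is even in $x$, we can write $\mu_{0}=2\int_{0}^{\infty}\exp(-x^{6}+\tau x^{4}+tx^{2})\,\rmd x$, and applying the substitution converts the factor of $2$ with the Jacobian $\tfrac{1}{2}s^{-1/2}$ into $s^{-1/2}$, yielding the stated form. This step is purely mechanical.

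For the differential equation, set $\vph(t):=\mu_{0}(\tau,t)$. Differentiating under the integral sign (justified by the rapid decay $\exp(-x^{6})$ which dominates all moments uniformly on compact sets in $(\tau,t)$) gives
\[
\pderiv[k]{\vph}{t}=\imp x^{2k}\exp(-x^{6}+\tau x^{4}+tx^{2})\,\dx,\qquad k=0,1,2,3.
\]
The key observation is that the integrand $\exp(-x^{6}+\tau x^{4}+tx^{2})$ has the logarithmic derivative $-6x^{5}+4\tau x^{3}+2tx$, so
\[
\deriv{}{x}\!\left[x\exp(-x^{6}+\tau x^{4}+tx^{2})\right]=\left[1-6x^{6}+4\tau x^{4}+2tx^{2}\right]\exp(-x^{6}+\tau x^{4}+tx^{2}).
\]
Since $x\exp(-x^{6}+\tau x^{4}+tx^{2})\to 0$ as $x\to\pm\infty$, integrating the left-hand side over $\R$ yields zero, and integrating the right-hand side termwise (using the identification of the $x^{2k}$ integrals with $\partial_{t}^{k}\vph$) produces
\[
\vph-6\pderiv[3]{\vph}{t}+4\tau\pderiv[2]{\vph}{t}+2t\pderiv{\vph}{t}=0.
\]
Dividing by $-6$ recovers \eqref{eq6}.

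There is no real obstacle here; the only subtlety is the justification of differentiation under the integral sign and the vanishing of the boundary terms, both of which are immediate from the super-exponential decay of the weight. The proof is essentially a single integration by parts once the moments are identified as $t$-derivatives of $\vph$.
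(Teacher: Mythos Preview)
Your proof is correct and takes essentially the same approach as the paper: both identify the $t$-derivatives of $\mu_0$ with higher moments and then use a single integration by parts (equivalently, recognise the integrand as an exact derivative) to obtain the linear relation. The only cosmetic difference is that the paper carries out the computation in the $s$-integral representation while you work with the $x$-integral, which amounts to the same calculation under the substitution $s=x^2$.
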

\begin{proof}
To show this result, interchanging integration and differentiation gives
\begin{align*}
 \pderiv[3]{\mu_0}{t} &- \tfrac{2}{3}\tau\pderiv[2]{\mu_0}{t}-\tfrac{1}{3}t\pderiv{\mu_0}{t}-\tfrac{1}{6}\mu_0\\
 &=\int_0^\infty \left[s^3-\tfrac{2}{3}\tau s^2-\tfrac{1}{3}ts-\tfrac{1}{6}\right]s^{-1/2}\exp\left(-s^3+\tau s^2+ts\right)\rmd s\\ &=-\tfrac{1}{3} \int_0^\infty \left\{\deriv{}{s}\left[ s^{1/2}\exp\left(-s^3+\tau s^2+ts\right)\right]\!\right\}\rmd s 
 =-\tfrac{1}{3} \Big[s^{1/2}\exp\left(-s^3+\tau s^2+ts\right)\Big]_{0}^{\infty}=0,
\end{align*}
as required.\end{proof}

Although it frequently is simpler to derive properties of a function from the differential equation it satisfies rather than from an integral representation, and even though \eqref{eq6} is a linear, third-order ordinary differential equation, it is not immediately obvious how to obtain a closed form solution to this differential equation. For the case when $\tau=0$, which is the equation associated with the quadratic-sextic Freud weight $\w(x; 0,t)=\exp\left(-x^6+tx^2\right)$, $x\in\R$,
with $t$ a parameter, the first moment $\mu_0(0,t)$ is solvable in terms of Airy functions $\Ai(z)$ and $\Bi(z)$ and $\mu_{2n}(0,t)$
in terms of the generalised hypergeometric function $\Fpq{1}{2}{a_1}{b_1,b_2}{z}$.

\begin{lemma}\label{lemma:41}
For the quadratic-sextic Freud weight $ \w(x;0,t)=\exp\left(-x^6+tx^2\right)$, the moments are
\begin{align}
\mu_0(0,t)&=\int_{-\infty}^{\infty} \exp\left(-x^6+tx^2\right)\dx 
=\int_0^\infty s^{-1/2}\exp\left(-s^3+ts\right)\rmd s\nonumber\\
&= \pi^{3/2}12^{-1/6}\big[\Ai^2(z)+\Bi^2(z)\big],\qquad z = 12^{-1/3}t,
\nonumber\\
\mu_{2n}(0,t)&=\int_{-\infty}^{\infty} x^{2n}\exp\left(-x^6+tx^2\right)\dx =\int_0^\infty s^{n-1/2}\exp\left(-s^3+ts\right)\rmd s
\nonumber\\&=
\tfrac{1}{3}\Gamma\left(\tfrac{1}{3}n+\tfrac{1}{6}\right)\pFq{1}{2}{\tfrac{1}{3}n+\tfrac{1}{6}}{\tfrac{1}{3},\tfrac{2}{3}}{\frac{t^3}{27}}
+\tfrac{1}{3} t\,\Gamma\left(\tfrac{1}{3}n+\tfrac{1}{2}\right)\pFq{1}{2}{\tfrac{1}{3}n+\tfrac{1}{2}}{\tfrac{2}{3},\tfrac{4}{3}}{\frac{t^3}{27}}\nonumber\\
&\qquad\qquad+ \tfrac{1}{6} t^2\Gamma\left(\tfrac{1}{3}n+\tfrac{5}{6}\right)\pFq{1}{2}{\tfrac{1}{3}n+\tfrac{5}{6}}{\tfrac{4}{3},\tfrac{5}{3}}{\frac{t^3}{27}},\label{mu2n0t}
\end{align}
where $\Ai(z)$ and $\Bi(z)$ are the {Airy functions} and $\Fpq{1}{2}{a_1}{b_1,b_2}{z}$ is the generalised hypergeometric function. 
\end{lemma}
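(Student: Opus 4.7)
The plan is to establish the three claims in turn. For the integral representations, the evenness of the integrand together with the substitution $s=x^2$ (so $\rmd x=\rmd s/(2\sqrt{s})$ and $x^{2n}=s^n$) immediately converts
$\imp x^{2n}\exp(-x^6+tx^2)\,\dx$ into $\int_0^\infty s^{n-1/2}\exp(-s^3+ts)\,\rmd s$ for every $n\ge 0$, which covers both the $\mu_0$ and the $\mu_{2n}$ formulae.

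To identify the closed form for $\mu_0(0,t)$ in terms of Airy functions, I would appeal to Lemma~\ref{lem:mu0} specialised at $\tau=0$, which shows that $\varphi(t)=\mu_0(0,t)$ satisfies the third-order linear ODE
\[
\varphi'''-\tfrac13 t\,\varphi'-\tfrac16\varphi=0.
\]
The classical fact that if $y''=zy$ then $u=y^2$ satisfies $u'''=4zu'+2u$ implies that $\Ai^2(z),\Bi^2(z),\Ai(z)\Bi(z)$ span the solution space of $u'''-4zu'-2u=0$, and a direct substitution confirms that the affine rescaling $z=12^{-1/3}t$ carries that equation into precisely the displayed ODE. Hence $\mu_0(0,t)=c_1\Ai^2(z)+c_2\Bi^2(z)+c_3\Ai(z)\Bi(z)$ with $z=12^{-1/3}t$, and to pin down $c_1,c_2,c_3$ I would match three initial data at $t=0$. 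Each datum $\varphi^{(k)}(0)=\int_0^\infty s^{k-1/2}\rme^{-s^3}\rmd s=\tfrac13\Gamma\!\left(\tfrac{2k+1}{6}\right)$ for $k=0,1,2$ follows from $u=s^3$. Matching against the closed-form values of $\Ai$, $\Bi$ and their derivatives at $0$ (expressible via $\Gamma(\tfrac13)$ and $\Gamma(\tfrac23)$), combined with the reflection identity $\Gamma(x)\Gamma(1-x)=\pi/\sin(\pi x)$ and the Legendre duplication formula to interconvert $\Gamma(\tfrac16),\Gamma(\tfrac56)$ with $\Gamma(\tfrac13),\Gamma(\tfrac23)$, should solve the resulting $3\times 3$ linear system and yield $c_1=c_2=\pi^{3/2}12^{-1/6}$ and $c_3=0$.

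For the hypergeometric expression of $\mu_{2n}(0,t)$, starting from the integral representation I would expand $\exp(ts)=\sum_{k\ge 0}t^k s^k/k!$ and interchange the sum with the integral (justified by absolute convergence on compacta), producing
\[
\mu_{2n}(0,t)=\frac{1}{3}\sum_{k=0}^{\infty}\frac{t^k}{k!}\,\Gamma\!\left(\frac{2n+2k+1}{6}\right).
\]
Splitting the sum according to $k\bmod 3$, namely $k=3j,\,3j+1,\,3j+2$, and invoking the Gauss triplication identity
\[
\Gamma(3z)=\frac{3^{3z-1/2}}{2\pi}\,\Gamma(z)\,\Gamma\!\left(z+\tfrac13\right)\Gamma\!\left(z+\tfrac23\right)
\]
applied to $(3j)!,(3j+1)!,(3j+2)!$ collapses each subsum to a $\pFq{1}{2}{a}{b_1,b_2}{t^3/27}$ series, reproducing exactly the three terms displayed in \eqref{mu2n0t}.

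The main obstacle is the bookkeeping of gamma constants in the Airy step: the three initial data for $\mu_0$ present $\Gamma(\tfrac16),\Gamma(\tfrac12),\Gamma(\tfrac56)$, whereas the Airy values at the origin present $\Gamma(\tfrac13),\Gamma(\tfrac23)$, and verifying that the $3\times 3$ system is consistent (so that the cross-term coefficient $c_3$ genuinely vanishes) and produces the prefactor $\pi^{3/2}12^{-1/6}$ demands careful juggling of reflection and duplication identities at arguments shifted by sixths. The hypergeometric part, by contrast, is essentially routine once the triplication formula is in hand.
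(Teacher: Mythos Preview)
Your proposal is correct and considerably more self-contained than the paper's own argument. The paper does not actually prove the lemma from first principles: it simply cites DLMF (9.11.4) and Muldoon for the Airy identity for $\mu_0(0,t)$, and refers to \cite[Lemma 3.1]{refCJ21b} for the $\pFq{1}{2}{}{}{}$ expression for $\mu_{2n}(0,t)$.

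Your route---specialising the third-order ODE of Lemma~\ref{lem:mu0} at $\tau=0$, recognising that products of Airy functions span the solution space of $u'''=4zu'+2u$, rescaling by $z=12^{-1/3}t$, and matching three initial data---is the standard way to derive such identities and is logically sound (Lemma~\ref{lem:mu0} precedes the present lemma and is proved independently, so there is no circularity). For the hypergeometric part, your series expansion of $\exp(ts)$, evaluation of $\int_0^\infty s^{n+k-1/2}e^{-s^3}\,\rmd s=\tfrac13\Gamma\!\big(\tfrac{2n+2k+1}{6}\big)$, and splitting by $k\bmod 3$ with the triplication formula $(3j)!=27^j j!(\tfrac13)_j(\tfrac23)_j$ reproduces \eqref{mu2n0t} exactly. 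You correctly identify the only genuinely delicate point: reconciling the Gamma values at sixths (from the moments) with those at thirds (from the Airy initial data) via reflection and duplication to extract $c_1=c_2=\pi^{3/2}12^{-1/6}$ and $c_3=0$. This is tedious but mechanical. What your approach buys is a proof that does not depend on tracking down two external references; what the paper's approach buys is brevity.
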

\begin{proof}
This result for $\mu_0(0,t)$ is (9.11.4) in the DLMF \cite{refDLMF}, due to Muldoon \cite[p32]{refMul77} and the result for $\mu_{2n}(0,t)$ follows from \cite[Lemma 3.1]{refCJ21b} taking $\lambda=n+2j-\tfrac{1}{2}$.
\end{proof}
When $t=0$, which is the sextic-quartic Freud weight $\w(x;\tau,0)=\exp\left(-x^6+\tau x^4\right)$, $x\in\R$, with $\tau$ a parameter, the moments $\mu_{2n}(\tau,0)$ are solvable in terms of the generalised hypergeometric function $\pFq22{a_1,a_2}{b_1,b_2}{z}$, see Lemmas \ref{qsmoment} and \ref{lem:mun0}.

A formal power series expansion about $\tau=0$ can be straightforwardly derived via the integral series representation.

\begin{lemma}
For the weight \eqref{Freud642}, the moments are formally given by
\begin{align*}
 \mu_{2n}(\tau,t)
 &=\tfrac{1}{3}\sum_{j=0}^{\infty}\frac{\tau^j}{j!} \Bigg\{
 \Gamma \left(\tfrac{2}{3} j+\tfrac{1}{3} n+\tfrac{1}{6}\right)\, \pFq{1}{2}{\tfrac{2}{3}j+\tfrac{1}{3}n+\tfrac{1}{6}}{\tfrac{1}{3},\tfrac{2}{3}}{\frac{t^3}{27}} \\ & \qquad\qquad\qquad 
 -t\,\Gamma \left(\tfrac{2}{6}j+\tfrac{1}{3} n+\tfrac{1}{2}\right)\, \pFq{1}{2}{\tfrac{2}{3}j+\tfrac{1}{3}n+\tfrac{1}{2}}{\tfrac{2}{3},\tfrac{4}{3}}{\frac{t^3}{27}} \\
 & \qquad\qquad\qquad + \tfrac{1}{2} t^2\,\Gamma \left(\tfrac{2}{3} j+\tfrac{1}{3} n+\tfrac{5}{6}\right)\, \pFq{1}{2}{\tfrac{2}{3}j+\tfrac{1}{3}n+\tfrac{5}{6}}{\tfrac{4}{3},\tfrac{5}{3}}{\frac{t^3}{27}}
 \Bigg\}.\end{align*}
\end{lemma}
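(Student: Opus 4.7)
The plan is to prove the expansion by treating the $\exp(\tau x^4)$ factor as a formal power series in $\tau$, interchanging sum and integral, and then identifying the resulting moments as instances of the closed-form evaluation \eqref{mu2n0t} given in Lemma \ref{lemma:41}.

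First I would write
\[
\mu_{2n}(\tau,t) = \int_{-\infty}^{\infty} x^{2n} \exp\!\big(\!-x^6 + t x^2\big) \exp(\tau x^4)\, \dx,
\]
and expand $\exp(\tau x^4) = \sum_{j\geq 0} \tfrac{\tau^j}{j!} x^{4j}$. Interchanging the summation and integration (which is legitimate as a formal power series identity in $\tau$, since term-by-term integration yields a well-defined sequence of coefficients) produces
\[
\mu_{2n}(\tau,t) = \sum_{j=0}^{\infty} \frac{\tau^j}{j!} \int_{-\infty}^{\infty} x^{2n+4j} \exp\!\big(\!-x^6 + tx^2\big)\,\dx = \sum_{j=0}^{\infty} \frac{\tau^j}{j!}\,\mu_{2(n+2j)}(0,t),
\]
so the problem reduces to inserting Lemma \ref{lemma:41} with index $n$ replaced by $n+2j$.

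Next I would apply formula \eqref{mu2n0t} with the substitution $n \mapsto n+2j$. Since $\tfrac{1}{3}(n+2j) + \alpha = \tfrac{2}{3}j + \tfrac{1}{3}n + \alpha$ for each of $\alpha \in \{\tfrac16,\tfrac12,\tfrac56\}$, the shift affects only the first argument of each ${}_1F_2$ (and the matching Gamma factor), while the lower parameters and the variable $\tfrac{t^3}{27}$ remain unchanged. Substituting this expression for $\mu_{2(n+2j)}(0,t)$ into the previous display and factoring out the common $\tfrac13$ yields the three-term bracket in the stated formula, with the overall $\sum_{j\geq 0} \tfrac{\tau^j}{j!}$ in front.

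The argument is essentially routine; the only subtlety worth flagging is that the interchange of sum and integral cannot be justified by dominated convergence uniformly in $\tau$ (the term-by-term bound $|\tau|^j x^{4j}/j!$ does not produce a summable majorant in $j$ pointwise in $x$ after multiplication by $x^{2n} \exp(-x^6 + tx^2)$ for every $\tau$ of interest), which is precisely why the expansion is declared \emph{formal}. This should be noted explicitly: the identity holds as an identity of formal power series in $\tau$ with coefficients in the space of analytic functions of $t$, obtained by matching the $\tau^j$ coefficient on each side; convergence of the series in $\tau$ is not asserted and not needed for the subsequent use of the identity. Apart from that, the derivation is a direct substitution, so I do not anticipate any real obstacle.
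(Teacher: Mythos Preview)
Your proof is correct and follows essentially the same route as the paper: expand $\exp(\tau x^4)$ as a power series, interchange sum and integral, and invoke \eqref{mu2n0t} with $n\mapsto n+2j$. One small correction: your claim that dominated convergence fails is too pessimistic, since $\sum_{j\ge0}\frac{|\tau|^j x^{4j}}{j!}=\exp(|\tau|x^4)$ and $x^{2n}\exp(-x^6+|t|x^2+|\tau|x^4)$ is integrable on $\R$, so the interchange is in fact rigorously justified (as the paper notes), not merely formal.
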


\begin{proof} By definition, we can formally successively derive
\begin{align*}
 \mu_{2n}(\tau,t)
 &= \int_0^{\infty} s^{n-1/2} \exp\left(-s^3+ts\right) \exp\left(\tau s^2\right)\rmd s \\
 &= \int_0^{\infty} s^{n-1/2} \exp\left(-s^3+ts\right) \left(\sum_{j=0}^{\infty}\frac{\tau^js^{2j}}{j!} \right) \rmd s \\
 &= \sum_{j=0}^{\infty}\frac{\tau^j}{j!} \int_0^{\infty} s^{n+2j-1/2} \exp\left(-s^3+ts\right) \rmd s \\
 &= \sum_{j=0}^{\infty}\frac{\tau^j}{j!} \mu_{2n+4j}(0,t),
\end{align*}
and so the result follows using \eqref{mu2n0t}. 
The interchanging of the integral and sum is justified by the Lebesgue dominated convergence theorem.\end{proof}

Higher order moments $\mu_{2n}(\tau,t)$ can be obtained after differentiation of the expression for first moment with respect to $t$. More precisely, we have the following result:

\begin{lemma}\label{lem:mu0t}
For the weight \eqref{Freud642}, the even moments can be written in terms of derivatives of the first moment, as follows 
\[%\label{momd}
\mu_{2n}(\tau,t)= \pderiv[n]{}{t}\mu_0(\tau,t),\qquad n=0,1,2,\ldots \]
\end{lemma}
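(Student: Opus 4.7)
The plan is to proceed by a straightforward induction on $n$ using differentiation under the integral sign, with the dominant decay $e^{-x^6}$ in the weight providing the integrable majorants needed to exchange derivative and integral.

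For the base case $n=0$ there is nothing to prove. For the inductive step, I would write
\[
\pderiv{}{t}\mu_{2n}(\tau,t) = \pderiv{}{t}\int_{-\infty}^{\infty} x^{2n}\exp(-x^6+\tau x^4+tx^2)\,\dx,
\]
and, once differentiation under the integral sign is justified, pull the $\partial_t$ inside. Since $\partial_t$ applied to the exponential factor produces exactly $x^2\exp(-x^6+\tau x^4+tx^2)$, this immediately yields $\mu_{2n+2}(\tau,t)$, closing the induction.

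The only real content is the justification of the interchange. For any fixed $\tau\in\R$ and $t$ lying in a bounded interval $[t_0-\delta,t_0+\delta]$, I would bound
\[
\left|\,x^{2n+2}\exp(-x^6+\tau x^4+tx^2)\,\right|\leq x^{2n+2}\exp\!\big(-x^6+|\tau|x^4+(|t_0|+\delta)x^2\big),
\]
which is integrable on $\R$ because the $-x^6$ term dominates the polynomial and $x^4$, $x^2$ factors for $|x|$ large. This uniform dominant permits the differentiation under the integral sign via the standard Lebesgue theorem, so $\partial_t\mu_{2n}=\mu_{2n+2}$ holds for all $(\tau,t)\in\R^2$ and the induction gives $\mu_{2n}(\tau,t)=\partial_t^n\mu_0(\tau,t)$.

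I do not anticipate any genuine obstacle here: the whole argument rests on the super-exponential decay $e^{-x^6}$ of the weight, which trivially dominates arbitrary polynomial growth. The one minor subtlety is to ensure the majorant covers all derivatives needed in the induction simultaneously on a neighbourhood of every $(\tau,t)$; this is handled by allowing a local bound in $\tau$ as well (e.g.\ $|\tau|\leq|\tau_0|+\delta$), after which the same $-x^6$ decay suffices. No other machinery is required.
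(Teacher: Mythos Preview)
Your proposal is correct and follows essentially the same approach as the paper: both argue by induction using $\partial_t\mu_{2n}=\mu_{2n+2}$, with the interchange of differentiation and integration justified by dominated convergence thanks to the $e^{-x^6}$ decay. Your version is slightly more explicit about the majorant, but the underlying argument is identical.
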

\begin{proof} This follows immediately from the integral representation \begin{align*}\mu_{2n}(\tau,t)&=\imp x^{2n} \exp\left(-x^6+\tau x^4+tx^2\right)\dx \\&= \pderiv{}{t}\imp x^{2n-2}\exp\left(-x^6+\tau x^4+tx^2\right)\dx \\&=\pderiv{}{t} \mu_{2n-2}(\tau,t)\\
&=\pderiv[n]{}{t}\mu_0(\tau,t),\qquad n=0,1,2,\ldots\,\end{align*}
where, as before, the interchange of integration and differentiation is justified by Lebesgue's Dominated Convergence Theorem.\end{proof}

\begin{lemma}
The moment $\mu_{2n}(\tau,t)$ satisfies the differential equation
\beq\label{eq:munt} \pderiv[3]{\mu_{2n}}{t} - \tfrac{2}{3}\tau\pderiv[2]{\mu_{2n}}{t}-\tfrac{1}{3}t\pderiv{\mu_{2n}}{t}-\tfrac{1}{6}(2n+1)\mu_{2n}=0.\eeq
\end{lemma}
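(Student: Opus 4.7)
The plan is to apply $\partial_t^n$ to the third-order equation for $\mu_0$ proved in Lemma \ref{lem:mu0}, and then to invoke Lemma \ref{lem:mu0t}, which identifies $\mu_{2n}(\tau,t) = \partial_t^n \mu_0(\tau,t)$. If we write $L := \partial_t^3 - \tfrac{2}{3}\tau\,\partial_t^2 - \tfrac{1}{3}t\,\partial_t - \tfrac{1}{6}$, then Lemma \ref{lem:mu0} asserts $L\mu_0 = 0$, so the whole task reduces to computing how $\partial_t^n$ commutes through $L$ and verifying that the resulting identity, when rewritten in terms of $\mu_{2n}$, coincides with \eqref{eq:munt}.

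The only nontrivial commutation is between $\partial_t^n$ and multiplication by $t$. The Leibniz formula, together with $\partial_t(t)=1$ and $\partial_t^k(t)=0$ for $k\geq 2$, gives $\partial_t^n(t f) = t\,\partial_t^n f + n\,\partial_t^{n-1} f$ for any smooth $f$. Applying $\partial_t^n$ term by term to $L\mu_0=0$ therefore yields
\begin{equation*}
\partial_t^{n+3}\mu_0 - \tfrac{2}{3}\tau\,\partial_t^{n+2}\mu_0 - \tfrac{1}{3}\bigl(t\,\partial_t^{n+1}\mu_0 + n\,\partial_t^{n}\mu_0\bigr) - \tfrac{1}{6}\,\partial_t^n\mu_0 = 0.
\end{equation*}

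Finally, I would substitute $\partial_t^{n+k}\mu_0 = \partial_t^k \mu_{2n}$ for $k=0,1,2,3$ using Lemma \ref{lem:mu0t}, and collect the two coefficient-of-$\mu_{2n}$ terms via $-\tfrac{1}{3}n - \tfrac{1}{6} = -\tfrac{1}{6}(2n+1)$, producing exactly \eqref{eq:munt}. There is no genuine obstacle here: the argument is essentially the commutator computation above, and the only step that could invite a slip is remembering the $n\,\partial_t^{n-1}f$ correction from Leibniz and folding it correctly into the $-\tfrac16(2n+1)$ coefficient.
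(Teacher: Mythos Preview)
Your proof is correct and takes essentially the same approach as the paper: the paper proves \eqref{eq:munt} by induction on $n$, differentiating the equation once at each step and using $\mu_{2n+2}=\partial_t\mu_{2n}$, which is simply the one-step-at-a-time version of your direct application of $\partial_t^n$ via Leibniz.
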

\begin{proof} This is easily proved using induction. Equation \eqref{eq:munt} holds when $n=0$ from Lemma \ref{lem:mu0}. Differentiating \eqref{eq:munt} with respect to $t$ and using \[\mu_{2n+2}(\tau,t)=\pderiv{}{t}\mu_{2n}(\tau,t), \]
as shown in the proof of Lemma \ref{lem:mu0t}, gives
\[\pderiv[3]{\mu_{2n+2}}{t} - \tfrac{2}{3}\tau\pderiv[2]{\mu_{2n+2}}{t}-\tfrac{1}{3}t\pderiv{\mu_{2n+2}}{t}-\tfrac{1}{6}(2n+3)\mu_{2n+2}=0,\]
and so the result follows by induction.
\end{proof}

\begin{lemma}
The moment $\mu_{2n}(\tau,t)$ satisfies the partial differential equation
\beq\frac{\partial^2 \mu_{2n}}{\partial\tau\partial t} -\tfrac{2}{3}\tau\pderiv{\mu_{2n}}{\tau} -\tfrac{1}{3}t\pderiv{\mu_{2n}}{t} -\tfrac{1}{6}(2n+1)\mu_{2n}=0.
\label{pde munt}
\eeq
\end{lemma}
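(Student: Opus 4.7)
The plan is to reduce the partial differential equation to the ordinary differential equation \eqref{eq:munt} already established for $\mu_{2n}$, by identifying mixed moments with mixed partial derivatives through the integral representation.

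First I would observe, directly from the integrand, that differentiating under the integral sign gives
\[
\pderiv{\mu_{2n}}{\tau}(\tau,t) = \imp x^{2n+4}\exp(-x^6+\tau x^4+tx^2)\,\dx = \mu_{2n+4}(\tau,t),
\]
and, exactly as in the proof of Lemma \ref{lem:mu0t},
\[
\pderiv{\mu_{2n}}{t}(\tau,t) = \mu_{2n+2}(\tau,t).
\]
Interchange of derivative and integral is justified by dominated convergence, since the sextic in $x$ dominates the quartic and quadratic terms in the exponent for large $|x|$. Composing these identifications yields
\[
\frac{\partial^2\mu_{2n}}{\partial\tau\,\partial t} = \pderiv{\mu_{2n+4}}{t} = \mu_{2n+6}.
\]

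Next I would invoke the third-order ODE \eqref{eq:munt} at the index $n$, which after using $\partial_t\mu_{2n}=\mu_{2n+2}$, $\partial_t^2\mu_{2n}=\mu_{2n+4}$, and $\partial_t^3\mu_{2n}=\mu_{2n+6}$ rewrites as the purely algebraic moment identity
\[
\mu_{2n+6} - \tfrac{2}{3}\tau\,\mu_{2n+4} - \tfrac{1}{3}t\,\mu_{2n+2} - \tfrac{1}{6}(2n+1)\mu_{2n} = 0.
\]
Substituting back the derivative expressions $\mu_{2n+6}=\partial_\tau\partial_t\mu_{2n}$, $\mu_{2n+4}=\partial_\tau\mu_{2n}$ and $\mu_{2n+2}=\partial_t\mu_{2n}$ produces \eqref{pde munt} immediately.

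There is essentially no obstacle here: the work has already been done in Lemmas \ref{lem:mu0}, \ref{lem:mu0t} and in equation \eqref{eq:munt}. The only subtlety worth flagging is the justification of differentiation under the integral sign with respect to both parameters $\tau$ and $t$, which is standard given the super-exponential decay $\exp(-x^6)$ dominating any polynomial growth arising from $\partial_\tau$ or $\partial_t$. An equivalent, more self-contained route, if desired, is to integrate the total derivative $\tfrac{\rmd}{\rmd s}\big[s^{n+1/2}\exp(-s^3+\tau s^2+ts)\big]$ over $(0,\infty)$, which vanishes at both endpoints and yields the same identity $3\mu_{2n+6}=2\tau\mu_{2n+4}+t\mu_{2n+2}+(n+\tfrac12)\mu_{2n}$ from which \eqref{pde munt} follows by the same substitution.
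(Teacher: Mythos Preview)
Your proof is correct. The paper takes the more direct route you flag at the end: it writes $\mu_{2n}(\tau,t)=\int_0^\infty s^{n-1/2}\exp(-s^3+\tau s^2+ts)\,\rmd s$, substitutes the partial derivatives under the integral, and recognises the integrand as $-\tfrac{1}{3}\deriv{}{s}\big[s^{n+1/2}\exp(-s^3+\tau s^2+ts)\big]$, which integrates to zero. Your main argument instead recycles the already-proved ODE \eqref{eq:munt} via the identifications $\partial_t\mu_{2n}=\mu_{2n+2}$ and $\partial_\tau\mu_{2n}=\mu_{2n+4}$, turning the PDE into the moment recursion and back again. This is a genuine economy, since it avoids repeating essentially the same integration-by-parts calculation that underlies \eqref{eq:munt}; the paper's approach, on the other hand, is self-contained and does not depend on the earlier lemma. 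Both rest on the same underlying identity, just packaged differently.
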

\begin{proof}
Since
\[ \mu_{2n}(\tau,t)=\int_0^\infty s^{n-1/2}\exp\left(-s^3+\tau s^2+ts\right)\rmd s,\]
then
\begin{align*}
\frac{\partial \mu_{2n}}{\partial\tau\partial t} &-\tfrac{2}{3}\tau\pderiv{\mu_{2n}}{\tau} -\tfrac{1}{3}t\pderiv{\mu_{2n}}{t} -\tfrac{1}{6}(2n+1)\mu_{2n}\\
&=\int_0^\infty s^{n-1/2}\left(s^3-\tfrac{2}{3}\tau s^2- \tfrac{1}{3}ts-\tfrac{1}{3}n-\tfrac{1}{6}\right)\exp\left(-s^3+\tau s^2+ts\right)\rmd s\\
&=-\tfrac{1}{3}\int_0^\infty \deriv{}{s}\left[ s^{n+1/2}\exp\left(-s^3+\tau s^2+ts\right)\right]\rmd s=0,
\end{align*}
as required. As before, the interchange of integration and differentiation is justified by Lebesgue's Dominated Convergence Theorem.
\end{proof}

{In addition to the ordinary and partial differential equations above, the sequence of moments can be generated recursively.}

\begin{lemma}
 For the weight \eqref{Freud642} the moments satisfy the discrete equation 
\begin{equation}\label{mr1}
 3\mu_{2n+6}(\tau,t) - 2\tau \mu_{2n+4}(\tau,t)
 -t \mu_{2n+2}(\tau,t)-(n+\tfrac{1}{2}) \mu_{2n}(\tau,t)=0.
\end{equation}
\end{lemma}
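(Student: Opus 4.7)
The plan is to derive the discrete recurrence directly by integration by parts, and to cross-check it against the differential equation \eqref{eq:munt} combined with Lemma \ref{lem:mu0t}.

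First I would observe that since the weight $\w(x;\tau,t)=\exp(-x^6+\tau x^4+tx^2)$ decays super-exponentially at $\pm\infty$, the integrand $x^{2n+1}\w(x;\tau,t)$ vanishes at the boundary. Therefore, by the fundamental theorem of calculus,
\[
\int_{-\infty}^{\infty}\deriv{}{x}\!\Big[x^{2n+1}\exp(-x^6+\tau x^4+tx^2)\Big]\,\dx=0.
\]
Expanding the derivative gives
\[
\Big[(2n+1)x^{2n}-6x^{2n+6}+4\tau x^{2n+4}+2t x^{2n+2}\Big]\exp(-x^6+\tau x^4+tx^2),
\]
and integrating term-by-term against the weight yields
\[
(2n+1)\mu_{2n}(\tau,t)-6\mu_{2n+6}(\tau,t)+4\tau\mu_{2n+4}(\tau,t)+2t\mu_{2n+2}(\tau,t)=0,
\]
which, after dividing by $-2$, is exactly \eqref{mr1}.

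As an independent verification, I would note that the same identity is an immediate consequence of the results already established. Indeed, by Lemma \ref{lem:mu0t}, the even moments satisfy $\mu_{2n+2k}(\tau,t)=\partial_t^{\,k}\mu_{2n}(\tau,t)$ for $k=1,2,3$. Substituting these expressions into the third-order differential equation \eqref{eq:munt} for $\mu_{2n}$ and multiplying through by $3$ reproduces \eqref{mr1}. There is essentially no obstacle here: the only subtlety is justifying the interchange of differentiation and integration (or equivalently the vanishing of the boundary term), which is legitimate because of the dominant $\exp(-x^6)$ factor, a point already invoked repeatedly in this section via the Lebesgue dominated convergence theorem.
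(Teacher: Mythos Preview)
Your proof is correct and follows essentially the same integration-by-parts approach as the paper; the only cosmetic difference is that the paper first uses the symmetry of the weight to reduce to $[0,\infty)$ before integrating by parts, while you work on the full real line. Your additional cross-check via Lemma \ref{lem:mu0t} and \eqref{eq:munt} is a nice consistency check not present in the paper.
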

\begin{proof}
 The result follows from the integral representation using integration by parts to obtain
 \begin{align*} \mu_{2n}(\tau,t)&=2\int_0^\infty x^{2n}\exp\left(-x^6+\tau x^4+tx^2\right)\dx \\&=-\frac{2}{2n+1} \int_0^\infty x^{2n+1}\left(-6x^5+4\tau x^3+2tx\right) \exp\left(-x^6+\tau x^4+tx^2\right)\dx \\&=\frac{2}{2n+1}(3\mu_{2n+6}-2\tau \mu_{2n+4}-t\mu_{2n+2}),
 \end{align*}
 and so obtain \eqref{mr1}, as required.
 \end{proof}

The differential equation \eqref{eq:munt} in $t$ and the discrete equation \eqref{mr1} in the index both reveal the hypergeometric structure of the moments $\mu_{2n}(\tau,t)$. In the next section, the hypergeometric structure is explicitly shown for some values of the parameters. In the general case, we can only provide a series expansion in terms of Laguerre polynomials.

\section{Closed form expressions for moments}\label{sec:closed moments}
In this section we derive some closed form expressions of some moments for the sextic Freud weight \eqref{Freud642} with $t=-\k\tau^2$, i.e.
\beq \w(x;\tau,\k)=\exp\left\{-\left(x^6-\tau x^4+\k\tau^2x^2\right)\right\},\label{Freud642k}\eeq
with $\tau$ and $\k$ parameters. To justify this, consider
the {potential}
\beq U(x;\tau,t)=x^6-\tau x^4-tx^2, \label{Upoly}\eeq
with parameters $\tau$ and $t$, which can be written as 
\[U(x;\tau,t)=x^2 \left\{(x^2-\tfrac{1}{2}\tau)^2-(t+\tfrac{1}{4}\tau^2)\right\}. \]
This has a double root at $x=0$ and the other roots can be real or complex depending on the values of the parameters $\tau$ and $t$, with $t=-\tfrac{1}{4}\tau^2$ being critical. So it is convenient to define $\k=-t/\tau^2$, and write 
\beq U(x;\tau,\k) =x^2\left(x^4-\tau x^2+\k\tau^2\right)
= x^2 \left\{(x^2-\tfrac{1}{2}\tau)^2+ (\k- \tfrac{1}{4})\tau^2\right\}.
\label{Uployk}\eeq

To obtain closed form expressions we derive the differential equations with respect to $\tau$ satisfied by the associated moments
\[\mu_{2n}(\tau;\k)=\int_{-\infty}^\infty x^{2n}\exp\left\{-\left(x^6-\tau x^4+\k\tau^2x^2\right)\right\}\dx 
=\int_0^\infty s^{n-1/2}\exp\left\{-\left(s^3-\tau s^2+\k\tau^2s\right)\right\}\rmd s,\]
with $\k$ considered a parameter,
which satisfy the initial conditions
\begin{subequations}
\begin{align} \mu_{2n}(0;\k)&=\int_0^\infty s^{n-1/2}\exp\left(-s^3\right)\rmd s
=\tfrac{1}{3}\Gamma\left(\tfrac{1}{3}{n}+\tfrac{1}{6}\right),\label{mu0:kappa}\\ 
\deriv{\mu_{2n}}{\tau}(0;\k)&=\int_0^\infty s^{n+3/2}\exp\left(-s^3\right)\rmd s
=\tfrac{1}{3}\Gamma\left(\tfrac{1}{3}{n}+\tfrac{5}{6}\right),\\
\deriv[2]{\mu_{2n}}{\tau}(0;\k)&=\int_0^\infty (s^3-2\k)s^{n+1/2}\exp\left(-s^3\right)\rmd s
=\frac{2n+3-12\k}{18}\,\Gamma\left(\tfrac{1}{3}{n}+\tfrac{1}{2}\right).
\end{align}\end{subequations}

\subsection{Three special cases with $n=0$.}
First we consider the case when $n=0$, in the cases when $\k=\tfrac{1}{4}$, $\k=\tfrac{1}{3}$ and $\k=0$. 
\begin{lemma}
The first moment $\mu_0(\tau;\tfrac{1}{4})$ is given by
\begin{align}
 \mu_0(\tau;\tfrac{1}{4})
 &=\int_{-\infty}^\infty\exp\left\{-\left(x^6-\tau x^4+\tfrac{1}{4}\tau^2x^2\right)\right\}\dx =\int_0^\infty s^{-1/2}\exp\left\{-s(s-\tfrac{1}{2}\tau)^2\right\}\rmd s\nonumber\\
&=\frac{\pi\sqrt{6\tau}}{9}\left\{I_{1/6}\left(\frac{\tau^3}{108}\right) +I_{-1/6}\left(\frac{\tau^3}{108}\right) \right\}\exp\left(-\frac{\tau^3}{108}\right),
\label{sol:mu014}\end{align}
where $I_\nu(z)$ is the modified Bessel function.
\end{lemma}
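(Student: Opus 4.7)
My plan is to recognise that the exponent in the defining integral becomes a perfect square after a simple substitution, which opens a direct path to Bessel functions via an Airy intermediate. First I would substitute $s = v^2$ in
\[
\mu_0(\tau;\tfrac{1}{4}) = \int_0^\infty s^{-1/2}\exp\!\bigl\{-s(s-\tfrac{1}{2}\tau)^2\bigr\}\,\rmd s
\]
to rewrite it as $2\int_0^\infty \exp\!\bigl\{-v^2(v^2-\tfrac{1}{2}\tau)^2\bigr\}\,\rmd v$ and then notice the key identity $v^2(v^2-\tfrac{1}{2}\tau)^2 = (v^3 - \tfrac{1}{2}\tau v)^2$. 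Evenness of the integrand in $v$ then gives the clean form
\[
\mu_0(\tau;\tfrac{1}{4}) = \int_{-\infty}^{\infty}\exp\!\bigl\{-(v^3 - \tfrac{1}{2}\tau v)^2\bigr\}\,\rmd v.
\]

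Next I would linearise the Gaussian via the Fourier identity $e^{-y^2} = \pi^{-1/2}\int_{\mathbb R}e^{-k^2 + 2\rmi ky}\,\rmd k$ applied to $y = v^3 - \tfrac{1}{2}\tau v$, swap the order of integration by Fubini, and recognise the inner $v$-integral as an Airy integral via $\int_{-\infty}^{\infty} \exp\!\bigl\{\rmi(u^3/3 + zu)\bigr\}\,\rmd u = 2\pi\,\Ai(z)$; after the rescaling $v = (6|k|)^{-1/3}u$ and exploiting symmetry in $k$, this yields
\[
\mu_0(\tau;\tfrac{1}{4}) = \frac{4\sqrt{\pi}}{6^{1/3}}\int_0^{\infty} k^{-1/3}e^{-k^2}\Ai\!\left(-\frac{\tau\,k^{2/3}}{6^{1/3}}\right)\rmd k.
\]
I would then invoke the classical Bessel representation of Airy at negative argument, $\Ai(-y) = \tfrac{1}{3}\sqrt{y}\bigl[J_{1/3}(\tfrac{2}{3}y^{3/2}) + J_{-1/3}(\tfrac{2}{3}y^{3/2})\bigr]$, to reduce the problem to evaluating $\int_0^\infty e^{-k^2}J_{\pm 1/3}(ak)\,\rmd k$ with $a = \sqrt{6}\,\tau^{3/2}/9$.

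The concluding step is the auxiliary identity
\[
\int_0^\infty e^{-k^2}J_{\pm 1/3}(ak)\,\rmd k = \frac{\sqrt{\pi}}{2}\,e^{-a^2/8}\,I_{\pm 1/6}(a^2/8),
\]
which I would establish by expanding $J_\nu$ as a power series, integrating term by term (Gaussian moments yield Gamma values), and recognising the sum as $\tfrac{(a/2)^\nu\Gamma(\tfrac{1+\nu}{2})}{2\Gamma(\nu+1)}\,{}_1F_1(\tfrac{1+\nu}{2};\nu+1;-a^2/4)$. Applying Kummer's transformation followed by the specialised reduction ${}_1F_1(\alpha;2\alpha;z) = \Gamma(\alpha+\tfrac{1}{2})e^{z/2}(z/4)^{1/2-\alpha}I_{\alpha-1/2}(z/2)$ (which comes directly from expressing ${}_0F_1(;\alpha+\tfrac{1}{2};z^2/16)$ in Bessel form) converts the confluent hypergeometric function into $I_{\nu/2}$. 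Substituting $a^2/8 = \tau^3/108$ and collapsing the residual Gamma prefactor using the reflection formula $\Gamma(\tfrac{1}{6})\Gamma(\tfrac{5}{6}) = 2\pi$ together with the duplication formula applied to $\Gamma(\tfrac{1}{3})$ and $\Gamma(\tfrac{2}{3})$ reduces the constant to $\pi\sqrt{6\tau}/9$, producing the claimed formula \eqref{sol:mu014}. The main obstacle I foresee is precisely this Gamma-function bookkeeping: the overall constant only collapses to $\sqrt{\pi}/2$ for both $\nu = \pm 1/3$ cases after carefully combining the reflection and duplication identities, so that step requires particular care.
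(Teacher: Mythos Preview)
Your argument is correct and follows a genuinely different route from the paper. The paper proceeds indirectly: it shows by differentiation under the integral sign that $\mu_0(\tau;\tfrac14)$ satisfies the linear second-order ODE
\[
18\,\mu_0''(\tau)+\tau^2\mu_0'(\tau)+\tau\,\mu_0(\tau)=0,
\]
writes down the general solution in terms of $I_{\pm1/6}$, and fixes the constants from the initial values $\mu_0(0)=\tfrac13\Gamma(\tfrac16)$, $\mu_0'(0)=\tfrac13\Gamma(\tfrac56)$. By contrast, you evaluate the integral directly: the substitution $s=v^2$ and the factorisation $(v^3-\tfrac12\tau v)^2$ are the key observations, after which the Gaussian--Airy--Bessel chain and the Weber--Schafheitlin formula $\int_0^\infty e^{-k^2}J_\nu(ak)\,\rmd k=\tfrac{\sqrt\pi}{2}e^{-a^2/8}I_{\nu/2}(a^2/8)$ do the rest. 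Your route is more explicit and self-contained; the paper's ODE method is more systematic, since the same template handles the other two special cases $\k=\tfrac13$ and $\k=0$ uniformly and also extends to $\mu_{2n}$.

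Two minor points worth tightening. First, the Fubini swap after inserting $e^{-y^2}=\pi^{-1/2}\!\int e^{-k^2+2iky}\rmd k$ is not covered by absolute convergence, since the inner $v$-integral is only an oscillatory (Airy) integral; a standard convergence factor $e^{-\varepsilon v^2}$ with $\varepsilon\downarrow0$ (or a truncation--limit argument) makes this rigorous. Second, your worry about the Gamma bookkeeping is unfounded: the Legendre duplication formula alone collapses $2^{\nu-1}\Gamma(\tfrac{1+\nu}{2})\Gamma(1+\tfrac{\nu}{2})/\Gamma(\nu+1)$ to $\sqrt\pi/2$ for \emph{every} $\nu$, so no reflection identity is needed and the constant comes out directly.
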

\begin{proof}
Assuming we can interchange integration and differentiation
\begin{align*}
18\deriv[2]{\mu_0}{\tau}+\tau^2\deriv{\mu_0}{\tau}+\tau\mu_0
&=2 \int_0^\infty \deriv{}{s}\left[ (\tau-3s)s^{1/2}\exp\left\{-s(s-\tfrac{1}{2}\tau)^2\right\}\right]\rmd s\\
&=2\left[ (\tau-3s)s^{1/2}\exp\left\{-s(s-\tfrac{1}{2}\tau)^2\right\}\right]_{s=0}^\infty=0.
\end{align*}
Hence $\mu_0(\tau;\tfrac{1}{4})$ satisfies the second order equation
\beq 18\deriv[2]{\mu_0}{\tau}+\tau^2\deriv{\mu_0}{\tau}+\tau\mu_0=0,\label{eq:mu014}\eeq
which has general solution
\[ \mu_0(\tau;\tfrac{1}{4}) =\sqrt{\tau}
\left\{c_1 I_{1/6}\left(\frac{\tau^3}{108}\right) +c_2I_{-1/6}\left(\frac{\tau^3}{108}\right) \right\}\exp\left(-\frac{\tau^3}{108}\right),
\]
with $I_\nu(z)$ the modified Bessel function and $c_1$ and $c_2$ constants.
The initial conditions are
\[\mu_0(0;\tfrac{1}{4})=\int_0^\infty s^{-1/2}\exp(-s^3)\,\rmd s= \tfrac{1}{3}\Gamma\!\left(\tfrac{1}{6}\right), \qquad
\deriv{\mu_0}{\tau}(0;\tfrac{1}{4})=\int_0^\infty s^{3/2}\exp(-s^3)\,\rmd s=\tfrac{1}{3}\Gamma\!\left(\tfrac{5}{6}\right),\]
so since as $\tau\to0$
\begin{subequations}\label{modBasymp}\begin{align} 
I_{1/6}\left(\frac{\tau^3}{108}\right)&=
\frac{\Gamma\!\left(\tfrac{5}{6}\right)}{2 \pi}
\sqrt{\frac{6}{\tau}}\left\{\tau -\frac{\tau^4}{108} +\mathcal{O}\! \left(\tau^{7}\right)\right\}
\exp\left(\frac{\tau^3}{108}\right),\\
I_{-1/6}\left(\frac{\tau^3}{108}\right)&=
\frac{\Gamma\!\left(\tfrac{1}{6}\right)}{2 \pi}\sqrt{\frac{6}{\tau}} \left\{1 -\frac{\tau^3}{108} +\mathcal{O}\! \left(\tau^{6}\right)\right\}
\exp\left(\frac{\tau^3}{108}\right),
\end{align}\end{subequations}
and $\Gamma\!\left(\tfrac{1}{6}\right)\Gamma\!\left(\tfrac{5}{6}\right)=2\pi$, then
$c_1=c_2=\tfrac{1}{9}{\pi\sqrt{6}}$,
and therefore we obtain the solution \eqref{sol:mu014}
as required.
\end{proof}

\begin{lemma}
The first moment $\mu_0(\tau;\tfrac{1}{3})$ is given by
\begin{align}
\mu_0(\tau;\tfrac{1}{3})
&=\int_{-\infty}^\infty\exp\left\{-\left(x^6-\tau x^4+\tfrac{1}{3}\tau^2x^2\right)\right\}\dx =\int_0^\infty s^{-1/2}\exp\left\{-\left(s^3-\tau s^2+\tfrac{1}{3}\tau^2s\right)\right\}\rmd s\nonumber\\
&= \left\{\tfrac{1}{3}\Gamma\!\left(\tfrac{1}{6}\right) \pFq{2}{2}{\tfrac{1}{6},\tfrac{1}{2}}{\tfrac{1}{3},\tfrac{2}{3}}{\frac{\tau^3}{27}} +\tfrac{1}{3}{\tau}\Gamma\!\left(\tfrac{5}{6}\right)\pFq{2}{2}{\tfrac{1}{2},\tfrac{5}{6}}{\tfrac{2}{3},\tfrac{4}{3}}{\frac{\tau^3}{27}}
-\frac{\tau^2\sqrt{\pi}}{36}\,\pFq{2}{2}{\tfrac{5}{6},\tfrac{7}{6}}{\tfrac{4}{3},\tfrac{5}{3}}{\frac{\tau^3}{27}} \right\}
\exp\left(-\frac{\tau^3}{27}\right), \label{sol:mu013}
\end{align}
where $\pFq{2}{2}{a_1,a_2}{b_1,b_2}{z}$ is the hypergeometric function.
\end{lemma}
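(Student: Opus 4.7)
The plan is to parallel the treatment of $\mu_0(\tau;\tfrac14)$, but now producing a third-order rather than second-order ODE. The exponent $f(s,\tau)= s^3-\tau s^2+\tfrac{1}{3}\tau^2 s$ has $\partial_s f=3(s-\tfrac{1}{3}\tau)^2$, a perfect square, so a single integration by parts will not yield a closed second-order equation as it did for $\k=\tfrac14$. Instead I would derive a third-order linear ODE in $\tau$ for $\mu_0(\tau;\tfrac13)$, gauge it into the standard ${}_2F_2$ hypergeometric equation via the exponential factor $e^{-\tau^3/27}$ and the substitution $z=\tau^3/27$, and fix the three integration constants from the integral values of $\mu_0$ and its first two $\tau$-derivatives at $\tau=0$.

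First, I would introduce the auxiliary integrals $\nu_n(\tau)=\int_0^\infty s^{n-1/2}e^{-f(s,\tau)}\,\rmd s$, so that $\mu_0(\tau;\tfrac13)=\nu_0$. Differentiation under the integral sign yields $\partial_\tau\nu_n=\nu_{n+2}-\tfrac{2\tau}{3}\nu_{n+1}$, while the vanishing of $\int_0^\infty\partial_s[s^{n+1/2}e^{-f}]\,\rmd s$ at the boundaries gives $3\nu_{n+3}-2\tau\nu_{n+2}+\tfrac{\tau^2}{3}\nu_{n+1}=(n+\tfrac12)\nu_n$. Eliminating $\nu_{n+3}$ between these collapses them to the first-order recursion $\nu_n'+\tfrac{\tau^2}{9}\nu_n=\tfrac{2n-1}{6}\nu_{n-1}$ for $n\geq 1$. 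Applying this for $n=1,2$ together with $\nu_2=\nu_0'+\tfrac{2\tau}{3}\nu_1$ eliminates $\nu_1$ and $\nu_2$, leaving
\[
\nu_0'''+\tfrac{2\tau^2}{9}\nu_0''+\bigl(\tfrac{\tau}{3}+\tfrac{\tau^4}{81}\bigr)\nu_0'+\bigl(\tfrac{5}{36}+\tfrac{\tau^3}{81}\bigr)\nu_0=0.
\]

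Next I would write $\nu_0(\tau)=e^{-\tau^3/27}\phi(\tau)$ and change variable to $z=\tau^3/27$. The $\tau^3$- and $\tau^4$-contributions in the coefficients are exactly absorbed by the exponential, and in terms of the Euler operator $\theta=z\,d/dz$ the equation becomes
\[
\bigl[\theta(\theta-\tfrac13)(\theta-\tfrac23)-z(\theta+\tfrac16)(\theta+\tfrac12)\bigr]\phi=0,
\]
which is the standard ODE for $\pFq{2}{2}{\tfrac{1}{6},\tfrac{1}{2}}{\tfrac{1}{3},\tfrac{2}{3}}{z}$. The indicial exponents at $z=0$ are $\{0,\tfrac13,\tfrac23\}$, so the Frobenius basis consists of $\phi_1=\pFq{2}{2}{\tfrac{1}{6},\tfrac{1}{2}}{\tfrac{1}{3},\tfrac{2}{3}}{z}$, $\phi_2=z^{1/3}\pFq{2}{2}{\tfrac{1}{2},\tfrac{5}{6}}{\tfrac{2}{3},\tfrac{4}{3}}{z}$ and $\phi_3=z^{2/3}\pFq{2}{2}{\tfrac{5}{6},\tfrac{7}{6}}{\tfrac{4}{3},\tfrac{5}{3}}{z}$. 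Restoring $\tau$ via $z^{1/3}=\tau/3$ and $z^{2/3}=\tau^2/9$ matches the three hypergeometric terms of \eqref{sol:mu013} up to scalar constants.

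Finally, the three integration constants are fixed by Taylor-matching at $\tau=0$ against $\mu_0(0;\tfrac13)=\tfrac13\Gamma(\tfrac16)$, $\mu_0'(0;\tfrac13)=\tfrac13\Gamma(\tfrac56)$ and $\mu_0''(0;\tfrac13)=-\tfrac{\sqrt{\pi}}{18}$ (the last using the general initial-data formula $\deriv[2]{\mu_{2n}}{\tau}(0;\k)=\tfrac{2n+3-12\k}{18}\Gamma(\tfrac{n}{3}+\tfrac12)$ with $n=0$, $\k=\tfrac13$), which produces the coefficients $\tfrac13\Gamma(\tfrac16)$, $\tfrac13\Gamma(\tfrac56)$ and $-\tfrac{\sqrt{\pi}}{36}$ stated in \eqref{sol:mu013}. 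The main obstacle is the algebraic bookkeeping leading to the third-order ODE, together with verifying that the cubic and quartic perturbations in its coefficients cancel exactly against the exponential gauge to leave a polynomial ODE whose Euler-operator normal form is precisely the ${}_2F_2$ operator with numerator parameters $\{\tfrac16,\tfrac12\}$. Once the exponent triple $\{0,\tfrac13,\tfrac23\}$ is anticipated from the shape of the stated answer, the correct gauge $e^{-\tau^3/27}$ and change of variable $z=\tau^3/27$ are essentially forced.
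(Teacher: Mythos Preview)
Your approach is essentially the same as the paper's: derive a third-order linear ODE in $\tau$ for $\mu_0(\tau;\tfrac13)$, identify its general solution as a linear combination of three ${}_2F_2$ functions multiplied by $e^{-\tau^3/27}$, and fix the three constants from the initial data at $\tau=0$. Your ODE agrees with the paper's (which writes the coefficients as $\tfrac{\tau(\tau^3+27)}{81}$ and $\tfrac{4\tau^3+45}{324}$), and the initial values and resulting constants match.

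The only notable difference is in how the ODE is obtained: the paper writes down a single total-derivative identity
\[
\int_0^\infty \deriv{}{s}\!\left\{\bigl(54s^3-72\tau s^2+30\tau^2 s-4\tau^3-45\bigr)s^{1/2}e^{-s^3+\tau s^2-\tfrac13\tau^2 s}\right\}\rmd s=0
\]
and reads off the ODE directly, whereas you build it by introducing the auxiliary moments $\nu_n$, deriving the first-order recursion $\nu_n'+\tfrac{\tau^2}{9}\nu_n=\tfrac{2n-1}{6}\nu_{n-1}$, and eliminating $\nu_1,\nu_2$. Your route is a bit longer but more systematic and transparently generalisable. You also supply the gauge-and-Euler-operator reduction to the canonical ${}_2F_2$ equation, which the paper simply asserts; this is a genuine addition to the argument.
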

\begin{proof}
This result is proved using
\begin{align*}
\deriv[3]{\mu_{0}}{\tau}&+\frac{2\tau^2}{9}\deriv[2]{\mu_{0}}{\tau}+\frac{\tau(\tau^3+27)}{81}\deriv{\mu_{0}}{\tau}+\frac{4\tau^3+45}{324}\mu_{0}\\
&=-\frac{1}{162}\int_0^\infty 
\deriv{}{s}\left\{\left(54s^3-72\tau s^2+30\tau^2 s-4\tau^3-45\right)s^{1/2}
\exp\left(-s^3+\tau s^2-\tfrac{1}{3}\tau^2 s\right)\right\}\rmd s=0.
\end{align*}
Hence the moment $\mu_{0}(\tau;\tfrac{1}{3})$ satisfies the third order equation
\beq\deriv[3]{\mu_{0}}{\tau}+\frac{2\tau^2}{9}\deriv[2]{\mu_{0}}{\tau}+\frac{\tau(\tau^3+27)}{81}\deriv{\mu_{0}}{\tau}+\frac{4\tau^3+45}{324}\mu_{0}=0,\nonumber\eeq
which has general solution
\begin{align*}\mu_{0}(\tau;\tfrac{1}{3})&= \left\{
c_1\, \pFq{2}{2}{\tfrac{1}{6},\tfrac{1}{2}}{\tfrac{1}{3},\tfrac{2}{3}}{\frac{\tau^3}{27}} +c_2\,\tau\,\pFq{2}{2} {\tfrac{1}{2},\tfrac{5}{6}}{\tfrac{2}{3},\tfrac{4}{3}}{\frac{\tau^3}{27}} 
+c_3\,\tau^2\,\pFq{2}{2}{\tfrac{5}{6},\tfrac{7}{6}}{\tfrac{4}{3},\tfrac{5}{3}}{\frac{\tau^3}{27}} \right\}\exp\left(-\frac{\tau^3}{27}\right),
\end{align*}
with $c_1$, $c_2$ and $c_3$ constants.
The initial conditions are
\[\mu_0(0;\tfrac{1}{3})=\tfrac{1}{3}\Gamma\!\left(\tfrac{1}{6}\right), 
\qquad\deriv{\mu_0}{\tau}(0;\tfrac{1}{3})=\tfrac{1}{3}\Gamma\!\left(\tfrac{5}{6}\right), 
\qquad\deriv[2]{\mu_0}{\tau}(0;\tfrac{1}{3})
=-\frac{\sqrt{\pi}}{36},\]
and since $\pFq22{a_1,a_2}{b_1,b_2}{0}=1$, then we obtain the solution \eqref{sol:mu013}, as required.
\end{proof}

\begin{lemma}\label{qsmoment}
For the sextic-quartic Freud weight $\w(x;\tau,0)=\exp\left(-x^6+\tau x^4\right)$, the first moment is 
\begin{align}
\mu_0(\tau;0) & =\int_{-\infty}^{\infty}\exp\left(-x^6+\tau x^4\right)\dx 
 \nonumber\\
& =\tfrac{1}{3}\Gamma(\tfrac{1}{6})\; \pFq22{\tfrac{1}{12},\tfrac7{12}}{\tfrac{1}{3},\tfrac{2}{3}}{\frac{4\tau^3}{27}} + \tfrac{1}{3}{\tau
\,\Gamma(\tfrac{5}{6})} \; \pFq22{\tfrac5{12},\tfrac{11}{12}}{\tfrac{2}{3},\tfrac{4}{3}}{\frac{4\tau^3}{27}}
+\frac{\tau^2\sqrt{\pi}}{12} \; \pFq22{\tfrac{3}{4},\tfrac{5}{4}}{\tfrac{4}{3},\tfrac{5}{3}}{\frac{4\tau^3}{27}}, \label{sol:mu00}
\end{align}
where $\pFq22{a_1,a_2}{b_1,b_2}{z}$ is the generalised hypergeometric function. 
\end{lemma}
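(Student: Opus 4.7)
The plan mirrors the proofs for $\k=\tfrac14$ and $\k=\tfrac13$ in the preceding lemmas: derive a linear ODE in $\tau$ satisfied by $\mu_0(\tau;0)$, recognise it under the substitution $z=\tfrac{4\tau^3}{27}$ as the generalised hypergeometric equation whose Frobenius basis at $z=0$ gives the three $\HyperpFq{2}{2}$ series in the claim, and pin down the three constants by matching initial values at $\tau=0$.

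From $\mu_0(\tau;0)=\int_0^{\infty} s^{-1/2}\exp(-s^3+\tau s^2)\,\rmd s$, differentiating under the integral gives $\partial_\tau^k\mu_0(\tau;0)=\int_0^{\infty} s^{2k-1/2}\exp(-s^3+\tau s^2)\,\rmd s$ for $k=0,1,2,3$. I would seek a function $g(s,\tau)=s^{1/2}(As^3+B\tau s^2+C\tau^2 s+D)$ such that
\[
\frac{\rmd}{\rmd s}\bigl[g(s,\tau)\,\exp(-s^3+\tau s^2)\bigr]
=s^{-1/2}\bigl(s^6-\tfrac{4\tau^2}{9}s^4-\tfrac{4\tau}{3}s^2-\tfrac{7}{36}\bigr)\exp(-s^3+\tau s^2);
\]
matching powers of $s$ fixes $(A,B,C,D)$ uniquely and, since $g(0,\tau)=0$ and the exponential decays at infinity, integrating from $0$ to $\infty$ yields the third-order linear ODE
\[
\deriv[3]{\mu_0}{\tau}-\frac{4\tau^2}{9}\deriv[2]{\mu_0}{\tau}-\frac{4\tau}{3}\deriv{\mu_0}{\tau}-\frac{7}{36}\mu_0=0.
\]
A quicker alternative uses the moment recurrence \eqref{mr1} at $n=0,2,6$ with $t=0$, combined with the observation $\partial_\tau\mu_{2k}(\tau;0)=\mu_{2k+4}(\tau;0)$, to eliminate $\mu_6$ and $\mu_{10}$ and obtain the same relation.

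With the ODE in hand, the substitution $z=\tfrac{4\tau^3}{27}$ (so that $\vth:=z\,\rmd/\rmd z=\tfrac{\tau}{3}\,\rmd/\rmd\tau$) turns it into the standard hypergeometric equation $\vth(\vth-\tfrac13)(\vth-\tfrac23)y=z(\vth+\tfrac{1}{12})(\vth+\tfrac{7}{12})y$ associated with $\pFq{2}{2}{\tfrac{1}{12},\tfrac{7}{12}}{\tfrac{1}{3},\tfrac{2}{3}}{z}$. Its three Frobenius solutions at the regular singular point $z=0$ have exponents $0,\tfrac13,\tfrac23$ and, via the standard parameter-shift formulas, are precisely the three hypergeometric expressions displayed in \eqref{sol:mu00} multiplied by $1,\tau,\tau^2$ respectively. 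Since each $\HyperpFq{2}{2}$ evaluates to $1$ at $\tau=0$, the three constants are fixed by matching the initial values $\mu_0(0;0)=\tfrac13\Gamma(\tfrac16)$, $\partial_\tau\mu_0(0;0)=\tfrac13\Gamma(\tfrac56)$ and $\tfrac12\partial_\tau^2\mu_0(0;0)=\tfrac{\sqrt{\pi}}{12}$, computed via $\int_0^{\infty} s^{n-1/2}\rme^{-s^3}\,\rmd s=\tfrac13\Gamma(\tfrac{n}{3}+\tfrac16)$. The main obstacle is the bookkeeping in the first step — finding the ansatz $g(s,\tau)$ so that the integration-by-parts identity involves only $\mu_0,\mu_4,\mu_8,\mu_{12}$ (equivalently, only $\mu_0$ and its first three $\tau$-derivatives) — but this is routine once the degree of the polynomial $P(s,\tau)$ is correctly guessed; the hypergeometric identification and initial-condition matching that follow are then immediate.
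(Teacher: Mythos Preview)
Your proposal is correct and follows essentially the same route as the paper: derive the third-order ODE $\mu_0'''-\tfrac{4\tau^2}{9}\mu_0''-\tfrac{4\tau}{3}\mu_0'-\tfrac{7}{36}\mu_0=0$ via an integration-by-parts identity, write down its general solution as a linear combination of the three $\HyperpFq{2}{2}$ series, and fix the constants from the initial data at $\tau=0$. The only difference is that you supply the extra step (which the paper omits) of reducing the ODE under $z=\tfrac{4\tau^3}{27}$ to the standard generalised hypergeometric equation $\vth(\vth-\tfrac13)(\vth-\tfrac23)y=z(\vth+\tfrac{1}{12})(\vth+\tfrac{7}{12})y$ in order to \emph{derive} the Frobenius basis, whereas the paper simply states the general solution.
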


\begin{proof} This result is proved using
\[\deriv[3]{\mu_0}{\tau}-\frac{4}{9}\tau^2\deriv[2]{\mu_0}{\tau}-\frac{4}{3}\tau\deriv{\mu_0}{\tau}-\frac{7}{36}\mu_0 
=-\tfrac{1}{18}\int_0^\infty \deriv{}{s}\left\{\left(6s^3+4\tau s^2+7\right) s^{1/2}\exp\left(-s^3+\tau s^2\right)\right\}\rmd s=0.\]
Hence $\mu_0(\tau)$ satisfies the third order equation
\[\deriv[3]{\mu_0}{\tau}-\frac{4}{9}\tau^2\deriv[2]{\mu_0}{\tau}-\frac{4}{3}\tau\deriv{\mu_0}{\tau}-\frac{7}{36}\mu_0=0.\]
which has general solution
\begin{align*}\mu_0(\tau;0)&=c_1\; \pFq22{\tfrac{1}{12},\tfrac7{12}}{\tfrac{1}{3},\tfrac{2}{3}}{\frac{4\tau^3}{27}} + c_2\tau\; \pFq22{\tfrac5{12},\tfrac{11}{12}}{\tfrac{2}{3},\tfrac{4}{3}}{\frac{4\tau^3}{27}}
+ c_3\tau^2 \; \pFq22{\tfrac{3}{4},\tfrac{5}{4}}{\tfrac{4}{3},\tfrac{5}{3}}{\frac{4\tau^3}{27}},
\end{align*}
with $c_1$, $c_2$ and $c_3$ arbitrary constants. 
The initial conditions are
\[\mu_0(0;0)=\tfrac{1}{3}\Gamma(\tfrac{1}{6}),\qquad\deriv{\mu_0}{\tau}(0;0)=\tfrac{1}{3}\Gamma(\tfrac{5}{6}),\qquad
\deriv[2]{\mu_0}{\tau}(0;0)=\tfrac{1}{6}\sqrt{\pi},\]
and so we obtain the solution \eqref{sol:mu00}, as required.
\end{proof}

\begin{remarks}{\rm 
\begin{enumerate}\item[]
\item For general $\k$, it can be shown that $\ph(\tau)=\mu_0(\tau;\k)$ satisfies the third order equation
\begin{align}
\deriv[3]{\ph}{\tau}
&+\frac{2\tau^2}{9}\left\{9\k-2 -\frac{54\k(3\k-1)}{4\k(3\k-1)\tau^3-3}\right\}\deriv[2]{\ph}{\tau}
+\tau\left\{\frac{(4\k-1)\k^2\tau^3}{3}+\frac{36\k^2-27\k+4}{4\k(3\k-1)\tau^3-3}\right\}\deriv{\ph}{\tau}\nonumber\\&
+\left\{\frac{(4\k-1)\k^2\tau^3}{3} -\k+\frac{5}{36}+\frac{1-6\k}{4\k(3\k-1)\tau^3-3}\right\}\ph=0.
\label{eq:mu0gen}\end{align}
It is clear that \eqref{eq:mu0gen} simplifies in the cases when $\k=\tfrac{1}{4}$, $\k=\tfrac{1}{3}$ and $\k=0$. At present we have no closed form solution for general $\k$. We note that unless $\k=\tfrac{1}{3}$ and $\k=0$, \eqref{eq:mu0gen} has three regular singular points at the roots of the cubic $4\k(3\k-1)\tau^3-3=0$. 
It is straightforward to show that \eqref{eq:mu0gen} has an irregular singular point at $\tau=\infty$ for all values of $\k$.
\item If $\k=\tfrac{1}{4}$ then \eqref{eq:mu0gen} simplifies to
\beq \deriv[3]{\ph}{\tau} +\frac{\tau^2(\tau^3-42)}{18(\tau^3+12)} \deriv[2]{\ph}{\tau} 
+\frac{2\tau}{\tau^3+12} \deriv{\ph}{\tau} -\frac{\tau^3-6}{9(\tau^3+12)}\ph=0,
\label{eq:mu014a}\eeq
which has general solution
\[ \ph(\tau) =\sqrt{\tau}
\left\{c_1 I_{1/6}\left(\frac{\tau^3}{108}\right) +c_2I_{-1/6}\left(\frac{\tau^3}{108}\right)\right\}\exp\left(-\frac{\tau^3}{108}\right)+c_3\tau^2,
\]
with $c_1$, $c_2$ and $c_3$ constants. The initial conditions are 
\[ \ph(0)= \tfrac{1}{3}\Gamma\!\left(\tfrac{1}{6}\right), \qquad
\deriv{\ph}{\tau}(0)=\tfrac{1}{3}\Gamma\!\left(\tfrac{5}{6}\right),\qquad 
\deriv[2]{\ph}{\tau}(0)=0,\]
and so, using \eqref{modBasymp},
$c_1=c_2=\tfrac{1}{9}{\pi\sqrt{6}}$ and $c_3=0$.
Equation \eqref{eq:mu014a} is related to \eqref{eq:mu014} as follows
\begin{align*}
\deriv[3]{\ph}{\tau} &+\frac{\tau^2(\tau^3-42)}{18(\tau^3+12)} \deriv[2]{\ph}{\tau} 
+\frac{2\tau}{\tau^3+12} \deriv{\ph}{\tau} -\frac{\tau^2-6}{9(\tau^3+12)}\ph \\ &
=\left(\deriv{}{\tau}-\frac{3\tau^2}{\tau^3+12}\right)\left(\deriv[2]{\ph}{\tau}+\tfrac{1}{18}{\tau^2}\deriv{\ph}{\tau}+\frac{\tau}{18}\ph \right)\\
&=(\tau^3+12)\deriv{}{\tau}\left\{\frac{1}{\tau^3+12}\left(\deriv[2]{\ph}{\tau}+\tfrac{1}{18}{\tau^2}\deriv{\ph}{\tau}+\frac{\tau}{18}\ph \right)\right\}.
\end{align*}
\item We note that
\begin{align*}\mu_{0}(\tau;\tfrac{1}{3})&
=\exp\left\{-\big(\tfrac{1}{3}\tau\big)^{3}\right\}\int_{-\infty}^\infty \exp\left\{-\big(x^2-\tfrac{1}{3}\tau\big)^{3}\right\}\dx ,
\end{align*}
and
$\widetilde{\mu}_{0}(\tau;\tfrac{1}{3})=\mu_{0}(\tau;\tfrac{1}{3})\exp\left\{\big(\tfrac{1}{3}\tau\big)^{3}\right\}$,
satisfies
\[\deriv[3]{\widetilde{\mu}_{0}}{\tau}-\frac{\tau^2}{9}\deriv[2]{\widetilde{\mu}_{0}}{\tau}-\frac{\tau}{3}\deriv{\widetilde{\mu}_{0}}{\tau}+\frac{1}{12}\widetilde{\mu}_{0}=0.\]
\item In the three cases $\k=\tfrac{1}{4}$, $\k=\tfrac{1}{3}$ and $\k=0$ the weight 
$\w(x;\tau,\k)= \exp\left\{-\left(x^6 -\tau x^4+\k\tau^2x^2\right)\right\}$,
takes a special form, i.e.\ the polynomial has a multiple root
\[\begin{array}{|c@{\quad}| @{\quad}c|} \hline
\k & \w(x;\tau,\k)\\ \hline
\tfrac{1}{4} & \exp\big\{-x^2\left(x^2-\tfrac{1}{2}\tau\big)^{2}\right\} \\[5pt] 
\tfrac{1}{3} & \exp\left\{-\big(x^2-\tfrac{1}{3}\tau\big)^{3}\right\} \exp\left\{-\big(\tfrac{1}{3}\tau\big)^{3}\right\}\\[5pt] 
0 & \exp\left\{-x^4(x^2-\tau)\right\} \\[2.5pt] \hline
\end{array}\]
\end{enumerate}
}\end{remarks}

\subsection{Moments of higher order: three special cases.}
Next we derive closed form expressions for the moments $\mu_{2n}(\tau;\k)$ in the special cases when $\k=\tfrac{1}{4}$, $\k=\tfrac{1}{3}$ and $\k=0$.

\begin{lemma}\label{lem:mun14}
The moment $\mu_{2n}(\tau;\tfrac{1}{4})$ is given by
\begin{align}\mu_{2n}(\tau;\tfrac{1}{4})=
\tfrac{1}3\Gamma\!\left(\tfrac{1}{3}{n}+\tfrac{1}{6}\right)& \pFq{2}{2}{\tfrac{1}{3}-\tfrac{1}{3}{n},\tfrac{1}{3}+\tfrac{2}{3}{n}}{\tfrac{1}{3},\tfrac{2}{3}}{-\frac{\tau^3}{54}} +\tfrac{1}{3}{\tau}\Gamma\!\left(\tfrac{1}{3}{n}+\tfrac{5}{6}\right)\pFq{2}{2}{\tfrac{2}{3}-\tfrac{1}{3}{n},\tfrac{2}{3}+\tfrac{2}{3}{n}}{\tfrac{2}{3},\tfrac{4}{3}}{-\frac{\tau^3}{54}}\nonumber\\
& +\frac{n\tau^2}{18}
\Gamma\!\left(\tfrac{1}{3}{n}+\tfrac{1}{2}\right)\pFq{2}{2}{1-\tfrac{1}{3}{n},1+\tfrac{2}{3}{n}}{\tfrac{4}{3},\tfrac{5}{3}}{-\frac{\tau^3}{54}}.
\label{sol:mun14}
\end{align}
\end{lemma}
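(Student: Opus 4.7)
The plan is to follow the same template that was used successfully for $\mu_0(\tau;\tfrac{1}{4})$ in \eqref{sol:mu014}, but carried out for general $n$. Specifically, I will (i) derive a linear ODE in $\tau$ satisfied by $\mu_{2n}(\tau;\tfrac{1}{4})$ by the integration-by-parts / differentiate-under-the-integral technique, (ii) exhibit the three $\pFq{2}{2}$ functions in \eqref{sol:mun14} as a fundamental system of solutions of that ODE, and (iii) pin down the three constants by matching initial conditions at $\tau=0$.

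For (i), I start from the representation $\mu_{2n}(\tau;\tfrac{1}{4})=\int_0^\infty s^{n-1/2}\exp\{-s(s-\tfrac{\tau}{2})^2\}\,\rmd s$ and differentiate under the integral. Because $\partial_\tau\bigl[-s(s-\tfrac{\tau}{2})^2\bigr]=s(s-\tfrac{\tau}{2})$, each $\tau$-derivative inserts a known polynomial in $s$ and $\tau$ into the integrand. I will then seek polynomial coefficients $P_0(\tau),P_1(\tau),P_2(\tau),P_3(\tau)$ in $\tau$ such that the combination
\[
P_3(\tau)\pderiv[3]{\mu_{2n}}{\tau}+P_2(\tau)\pderiv[2]{\mu_{2n}}{\tau}+P_1(\tau)\pderiv{\mu_{2n}}{\tau}+P_0(\tau)\mu_{2n}
\]
has integrand equal to $\tfrac{\rmd}{\rmd s}\bigl[R(s,\tau)s^{n-1/2}\exp\{-s(s-\tfrac{\tau}{2})^2\}\bigr]$ for some polynomial $R$, with $R(s,\tau)s^{n-1/2}\exp\{-s(s-\tfrac{\tau}{2})^2\}$ vanishing at $s=0$ and $s=\infty$. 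The vanishing at infinity is automatic from the cubic decay, while vanishing at $s=0$ is secured either by the $s^{n-1/2}$ factor (for $n\geq 1$) or by choosing $R(0,\tau)=0$. Matching coefficients of powers of $s$ in the integrand will yield a finite linear system that determines $R$ and the $P_i(\tau)$, leading to an explicit third-order ODE in $\tau$ for $\mu_{2n}(\tau;\tfrac{1}{4})$ that reduces to \eqref{eq:mu014a} when $n=0$.

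For (ii), observe that the three right-hand side summands in \eqref{sol:mun14} have leading $\tau$-behaviours $1$, $\tau$, and $\tau^2$, so they are linearly independent. I will then check by direct series substitution that each is annihilated by the ODE from step (i); the hypergeometric structure of the coefficients (with the argument $-\tau^3/54$) is designed to make this verification routine once the ODE is in hand. For (iii), I evaluate \eqref{sol:mun14} and its first two $\tau$-derivatives at $\tau=0$, using $\pFq{2}{2}{\cdots}{\cdots}{0}=1$, and compare with
\[
\mu_{2n}(0;\tfrac{1}{4})=\tfrac{1}{3}\Gamma\!\left(\tfrac{n}{3}+\tfrac{1}{6}\right),\qquad
\pderiv{\mu_{2n}}{\tau}(0;\tfrac{1}{4})=\tfrac{1}{3}\Gamma\!\left(\tfrac{n}{3}+\tfrac{5}{6}\right),\qquad
\pderiv[2]{\mu_{2n}}{\tau}(0;\tfrac{1}{4})=\tfrac{n}{9}\Gamma\!\left(\tfrac{n}{3}+\tfrac{1}{2}\right),
\]
which are obtained by specialising $\k=\tfrac{1}{4}$ in the general initial-condition formulae stated earlier in the section. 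The three prefactors in \eqref{sol:mun14} are then forced.

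The main obstacle is step (i): producing the third-order ODE in closed form. The coefficients $P_j(\tau)$ will no longer simplify as dramatically as in the $n=0$ case \eqref{eq:mu014}, and the ODE will depend on $n$ through both its coefficients and its indicial exponents. Once that algebraic derivation is complete, step (ii) is a direct series check and step (iii) is an evaluation at $\tau=0$; the proof then follows by uniqueness of solutions to a third-order linear ODE with prescribed Cauchy data.
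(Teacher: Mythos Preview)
Your proposal is correct and follows essentially the same route as the paper: derive a third-order linear ODE in $\tau$ for $\mu_{2n}(\tau;\tfrac14)$ via the total-derivative trick, identify the three $\pFq{2}{2}$ functions as a basis of solutions, and fix the constants from the initial data at $\tau=0$. In fact the ODE comes out cleaner than you anticipate---it has constant leading coefficient and low-degree polynomial coefficients depending simply on $n$, namely $18\mu_{2n}'''+\tau^2\mu_{2n}''+(n+3)\tau\mu_{2n}'-(2n+1)(n-1)\mu_{2n}=0$, so step (i) is not the obstacle you fear.
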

\begin{proof}
This result is proved using
\begin{align*}
18\deriv[3]{\mu_{2n}}{\tau}&+{\tau^2}\deriv[2]{\mu_{2n}}{\tau}+{(n+3)\tau}\deriv{\mu_{2n}}{\tau}-{(2n+1)(n-1)}\mu_{2n}\\
&=-\int_0^\infty 
\deriv{}{s}\left\{\left(6s^3-5\tau s^2+\tau^2 s+2n-2\right)s^{n+1/2}
\exp\left(-s^3+\tau s^2-\tfrac{1}{4}\tau^2 s\right)\right\}\rmd s=0.
\end{align*}
Hence the moment $\mu_{2n}(\tau;\tfrac{1}{4})$ satisfies the third order equation
\beq 18\deriv[3]{\mu_{2n}}{\tau}+{\tau^2}\deriv[2]{\mu_{2n}}{\tau}+{(n+3)\tau}\deriv{\mu_{2n}}{\tau}-{(2n+1)(n-1)}\mu_{2n}=0,\label{eq:mu2n14}\eeq
which has general solution
\begin{align*}\mu_{2n}(\tau;\tfrac{1}{4})&= c_1\, \pFq{2}{2}{\tfrac{1}{3}-\tfrac{1}{3}{n},\tfrac{1}{3}+\tfrac{2}{3}{n}}{\tfrac{1}{3},\tfrac{2}{3}}{-\frac{\tau^3}{54}} +c_2\,\tau\,\pFq{2}{2}{\tfrac{2}{3}-\tfrac{1}{3}{n},\tfrac{2}{3}+\tfrac{2}{3}{n}}{\tfrac{2}{3},\tfrac{4}{3}}{-\frac{\tau^3}{54}} \nonumber\\ &\qquad 
+c_3\,\tau^2\,\pFq{2}{2}{1-\tfrac{1}{3}{n},1+\tfrac{2}{3}{n}}{\tfrac{4}{3},\tfrac{5}{3}}{-\frac{\tau^3}{54}},
\end{align*}
with $c_1$, $c_2$ and $c_3$ constants.
The initial conditions are
\begin{align*}&\mu_{2n}(0;\tfrac{1}{4})
= \tfrac{1}3\Gamma\!\left(\tfrac{1}{3}{n}+\tfrac{1}6\right), \qquad
\deriv{\mu_{2n}}{\tau}(0;\tfrac{1}{4})
=\tfrac{1}3\Gamma\!\left(\tfrac{1}{3}{n}+\tfrac{5}6\right),
\qquad\deriv[2]{\mu_{2n}}{\tau}(0;\tfrac{1}{4})
=\tfrac{1}{9}{n}\Gamma\!\left(\tfrac{1}{3}{n}+\tfrac{1}{2}\right),
\end{align*}
and so since $\pFq{2}{2}{a_1,a_2}{b_1,b_2}{0}=1$, then we obtain the solution \eqref{sol:mun14}, as required.
\end{proof}

\begin{remarks}{\rm 
\begin{enumerate}\item[]
\item Note that for all $n\in\Z^+$, one of the hypergeometric functions in $\mu_{2n}(\tau;\tfrac{1}{4})$ {given by} \eqref{sol:mun14} will be a polynomial since $\pFq{2}{2}{a_1,a_2}{b_1,b_2}{z}$ is a polynomial if one of $a_1$ or $a_2$ is a nonpositive integer.
\item When $n=0$, \eqref{eq:mu2n14} reduces to
\begin{align*}
18\deriv[3]{\mu_{0}}{\tau}&+{\tau^2}\deriv[2]{\mu_{0}}{\tau}+3{\tau}\deriv{\mu_{0}}{\tau}+\mu_{0} %\\ &
=\deriv{}{\tau}\left( 18\deriv[2]{\mu_{0}}{\tau}+{\tau^2}\deriv{\mu_{0}}{\tau}+{\tau}\mu_0\right),
\end{align*}
and \eqref{sol:mun14} becomes
\begin{align*}
\mu_{0}(\tau;\tfrac{1}{4})
&=\tfrac{1}{3}\Gamma\!\left(\tfrac{1}{6}\right) \pFq{2}{2}{\tfrac{1}{3},\tfrac{1}{3}}{\tfrac{1}{3},\tfrac{2}{3}}{-\frac{\tau^3}{54}} +\tfrac{1}{3}{\tau}\Gamma\!\left(\tfrac{5}{6}\right)\pFq{2}{2}{\tfrac{2}{3},\tfrac{2}{3}}{\tfrac{2}{3},\tfrac{4}{3}}{-\frac{\tau^3}{54}}\\
&=\tfrac{1}{3}\Gamma\!\left(\tfrac{1}{6}\right) \Fpqq{1}{1}{\tfrac{1}{3}}{\tfrac{2}{3}}{-\frac{\tau^3}{54}}+\tfrac{1}{3}{\tau}\Gamma\!\left(\tfrac{5}{6}\right)
\Fpqq{1}{1}{\tfrac{2}{3}}{\tfrac{4}{3}}{-\frac{\tau^3}{54}}\\
& =\frac{\pi\sqrt{6\tau}}{9} \left\{I_{1/6}\left(\frac{\tau^3}{108}\right) +I_{-1/6}\left(\frac{\tau^3}{108}\right) \right\}\exp\left(-\frac{\tau^3}{108}\right),
\end{align*}
since
\beq \pFq{2}{2}{a_1,a_2}{a_2,b_2}{z}=\Fpq{1}{1}{a_1}{b_2}{z}
\equiv M(a_1,b_2,z),\label{2F2_1F1}\eeq
with $M(a,b,z)$ the Kummer function and 
\beq \Fpq{1}{1}{\nu+\tfrac{1}{2}}{2\nu+1}{-2z}=
M(\nu+\tfrac{1}{2},2\nu+1,-2z)=\Gamma(\nu+1)\left(\tfrac{1}{2}z\right)^{\!\nu} \rme^{-z} I_\nu(z),\label{1F1_modB}\eeq
cf.~\cite[equation 10.39.5]{refDLMF}.
\end{enumerate}
}\end{remarks}

Whilst in Lemma \ref{lem:mun14} the moments $\mu_{2n}(\tau;\tfrac{1}{4})$ were expressed in terms of $\pFq{2}{2}{a_1,a_2}{b_1,b_2}{z}$ functions, the moments can be expressed in terms of modified Bessel functions as shown in the following Lemma.
\begin{lemma} The moment $\mu_{2n}(\tau;\tfrac{1}{4})$ has the form
\begin{align}\mu_{2n}(\tau;\tfrac{1}{4}) =\tfrac{1}{9}{\sqrt{6}\,\pi\tau^{1/2}}& \Big\{f_n(\tau)\big[I_{1/6}(\xi)+I_{-1/6}(\xi)\big]+g_n(\tau)\big[I_{5/6}(\xi)+I_{-5/6}(\xi)\big]\Big\}\,\rme^{-\xi} 
+\tfrac{1}{3}\sqrt{\pi}\,h_{n-1}(\tau),
\label{mun14bessel}\end{align}
with $\xi=\tau^3/108$,
where $I_{\nu}(z)$ is the modified Bessel function, and 
$f_n(\tau)$, $g_n(\tau)$ and $h_{n}(\tau)$ are polynomials of degree $n$. 
\end{lemma}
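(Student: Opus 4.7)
My plan is to argue by induction on $n$, using the differential-recurrence identity
\[
\mu_{2n+4}(\tau;\tfrac{1}{4}) = \pderiv{\mu_{2n}}{\tau}(\tau;\tfrac{1}{4}) + \tfrac{\tau}{2}\,\mu_{2n+2}(\tau;\tfrac{1}{4}),
\]
which follows by differentiating $\exp\{-s(s-\tau/2)^2\}$ under the integral sign in the representation of $\mu_{2n}(\tau;\tfrac{1}{4})$, the interchange being legitimate by dominated convergence as in Lemma~\ref{lem:mu0t}. Since this recurrence advances $n$ by $2$, two base cases are required. The case $n=0$ is exactly \eqref{sol:mu014} with $f_0=1$, $g_0=0$, $h_{-1}=0$. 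For $n=1$, I would first prove the closed-form identity $\mu_2(\tau;\tfrac{1}{4}) = \tfrac{\sqrt{\pi}}{3} + \tfrac{\tau}{6}\,\mu_0(\tau;\tfrac{1}{4})$ by verifying that both sides agree at $\tau=0$ and that the difference has vanishing $\tau$-derivative; the latter is a short computation using $\partial_\tau\mu_{2k}=\mu_{2k+4}-(\tau/2)\mu_{2k+2}$ together with the recurrence \eqref{mr1} at $n=0$, $t=-\tau^2/4$, to eliminate $\mu_6$, leaving $\partial_\tau\mu_2=\tfrac16\partial_\tau(\tau\mu_0)$. Substitution of \eqref{sol:mu014} then yields $f_1=\tau/6$, $g_1=0$, $h_0=1$.

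The inductive step rests on the $\tau$-derivatives of the two Bessel building blocks. Using $I_\nu'=\tfrac{1}{2}(I_{\nu-1}+I_{\nu+1})$ and $I_{\nu-1}-I_{\nu+1}=(2\nu/z)I_\nu$ together with $\rmd\xi/\rmd\tau=\tau^2/36$ and $1/(6\xi)=18/\tau^3$, one derives
\begin{align*}
\deriv{}{\tau}\!\bigl[\sqrt{\tau}(I_{1/6}+I_{-1/6})\rme^{-\xi}\bigr] &= \tfrac{\tau^{5/2}}{36}\bigl\{(I_{5/6}+I_{-5/6})-(I_{1/6}+I_{-1/6})\bigr\}\rme^{-\xi},\\
\deriv{}{\tau}\!\bigl[\sqrt{\tau}(I_{5/6}+I_{-5/6})\rme^{-\xi}\bigr] &= \bigl\{\tfrac{\tau^{5/2}}{36}\bigl[(I_{1/6}+I_{-1/6})-(I_{5/6}+I_{-5/6})\bigr]-\tfrac{2}{\sqrt{\tau}}(I_{5/6}+I_{-5/6})\bigr\}\rme^{-\xi}.
\end{align*}
Substituting into the differential-recurrence and collecting coefficients of $(I_{1/6}+I_{-1/6})$ and $(I_{5/6}+I_{-5/6})$ produces the explicit coupled recursions
\begin{align*}
f_{n+2} &= f_n' + (g_n-f_n)\tfrac{\tau^2}{36} + \tfrac{\tau}{2}f_{n+1},\qquad h_{n+1} = h_{n-1}' + \tfrac{\tau}{2}h_n,\\
g_{n+2} &= g_n' + (f_n-g_n)\tfrac{\tau^2}{36} - \tfrac{2g_n}{\tau} + \tfrac{\tau}{2}g_{n+1}.
\end{align*}
The degree bounds $\deg f_{n+2},\deg g_{n+2}\le n+2$ and $\deg h_{n+1}\le n+1$ are then immediate.

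The main obstacle is the singular term $-2g_n/\tau$ appearing in the recursion for $g_{n+2}$: it must collapse to a polynomial for the induction to close. I would therefore strengthen the inductive hypothesis by asserting $\tau^2\mid g_n$, which holds trivially in the base cases $g_0=g_1=0$. Writing $g_n=\tau^2 G_n$ and $g_{n+1}=\tau^2 G_{n+1}$, the expressions $g_n'=2\tau G_n+\tau^2 G_n'$ and $-2g_n/\tau=-2\tau G_n$ cancel their leading $\tau$-terms exactly, producing $g_{n+2}=\tau^2\bigl[G_n'+f_n/36-\tau^2 G_n/36+\tau G_{n+1}/2\bigr]$, so the divisibility $\tau^2\mid g_{n+2}$ is preserved. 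This closes the induction and confirms the stated form \eqref{mun14bessel}.
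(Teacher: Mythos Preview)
Your argument is correct and takes a genuinely different route from the paper. The paper proceeds purely via the discrete moment recurrence \eqref{mr1} with $t=-\tfrac{1}{4}\tau^2$, which is fourth order in the index, so three base cases $n=0,1,2$ are needed; it extracts the case $n=2$ from the hypergeometric form \eqref{sol:mun14} through a chain of Kummer and Bessel identities, then observes that $f_n$, $g_n$, $h_n$ inherit the \emph{same} discrete recurrence with no singular terms. Your differential-recurrence $\mu_{2n+4}=\partial_\tau\mu_{2n}+\tfrac{\tau}{2}\mu_{2n+2}$ is only second order in the index, so two base cases suffice and the hypergeometric reduction for $\mu_4$ is avoided entirely; the price you pay is the $-2g_n/\tau$ term, which you dispatch cleanly with the strengthened hypothesis $\tau^2\mid g_n$. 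Both approaches are sound; yours is arguably more self-contained, while the paper's avoids the divisibility bookkeeping.

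One point you leave open: you establish only $\deg f_n,\deg g_n,\deg h_n\le n$, whereas the statement asserts degree exactly~$n$. The paper addresses this (for $n\ge 2$; note $g_0=g_1=0$) by extracting the leading-coefficient recurrences and solving them in closed form to obtain $a_{n,n}=\tfrac{1}{6}(\tfrac{1}{2})^n+\tfrac{1}{2}(\tfrac{1}{6})^n$, $b_{n,n}=\tfrac{1}{6}(\tfrac{1}{2})^n-\tfrac{1}{2}(\tfrac{1}{6})^n$, $c_{n,n}=(\tfrac{1}{2})^n$, all nonzero for $n\ge 2$. In your framework the $\tau^{n+2}$ coefficient of $f_{n+2}$ is $\tfrac{1}{36}(b_{n,n}-a_{n,n})+\tfrac{1}{2}a_{n+1,n+1}$ and similarly for $g_{n+2}$, giving the same two-term linear system, so the fix is a short addendum; but as written, the exact-degree claim is not established.
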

\begin{proof}We will prove this result using induction and the discrete equation \eqref{mr1}, with $t=-\tfrac{1}{4}\tau^2$, i.e.
\beq 3\mu_{2n+6}(\tau;\tfrac{1}{4}) - 2\tau \mu_{2n+4}(\tau;\tfrac{1}{4}) +\tfrac{1}{4}\tau^2 \mu_{2n+2}(\tau;\tfrac{1}{4})-(n+\tfrac{1}{2}) \mu_{2n}(\tau;\tfrac{1}{4}) = 0.
\label{mr1k} \eeq
First we need to show that \eqref{mun14bessel} holds when $n=0,1,2$.
From \eqref{sol:mu014}, it is clear that \eqref{mun14bessel} holds when $n=0$, with
\beq f_0(\tau)=1,\qquad g_0(\tau)=h_{-1}(\tau)=0.\label{ic:mu014}\eeq
If $n=1$ then from \eqref{sol:mun14}
\begin{align}
\mu_2(\tau;\tfrac{1}{4}) 
 &=\frac{\tau\Gamma(\tfrac{1}{6})}{18}\pFq{2}{2}{\tfrac{1}{3},\tfrac{4}{3}}{\tfrac{2}{3},\tfrac{4}{3}}{-2\xi}+\frac{\tau^2\Gamma(\tfrac{5}{6})}{18}\pFq{2}{2}{\tfrac{2}{3},\tfrac{5}{3}}{\tfrac{4}{3},\tfrac{5}{3}}{-2\xi}+\tfrac{1}{3}\sqrt{\pi},\qquad \xi=\frac{\tau^3}{108} \nonumber\\
&=\frac{\tau\Gamma(\tfrac{1}{6})}{18}\Fpqq{1}{1}{\tfrac{1}{3}}{\tfrac{2}{3}}{-2\xi}+\frac{\tau^2\Gamma(\tfrac{5}{6})}{18}\Fpqq{1}{1}{\tfrac{2}{3}}{\tfrac{4}{3}}{-2\xi}+\tfrac{1}{3}\sqrt{\pi}\nonumber\\
&=\tfrac{1}{9}{\sqrt{6}\,\pi\tau^{1/2}} \left\{\tfrac{1}{6}\tau[I_{1/6}(\xi) +I_{-1/6}(\xi)]\right\}{\rme^{-\xi}}+\tfrac{1}{3}\sqrt{\pi},\label{sol:mu214}
\end{align}
using \eqref{2F2_1F1} and \eqref{1F1_modB}, and therefore \eqref{mun14bessel} holds when $n=1$, with
\beq f_1(\tau)=\tfrac{1}{6}\tau,\qquad g_1(\tau)=0,\qquad h_0(\tau)=1.\label{ic:mu214}\eeq
If $n=2$ then from \eqref{sol:mun14}
\begin{align*}
\mu_4(\tau;\tfrac{1}{4}) &= \tfrac{1}{3}{\Gamma(\tfrac{5}{6})}\,\pFq{2}{2}{-\tfrac{1}{3},\tfrac{5}{3}}{\tfrac{1}{3},\tfrac{2}{3}}{-2\xi}+\frac{\tau^2\Gamma(\tfrac{1}{6})}{54}\pFq{2}{2}{\tfrac{1}{3},\tfrac73}{\tfrac{4}{3},\tfrac{5}{3}}{-2\xi}+\tfrac{1}{6}\sqrt{\pi}\,\tau,\qquad \xi=\frac{\tau^3}{108}\\
&= \tfrac{1}{3}{\Gamma(\tfrac{5}{6})}\left\{M\! \left(-\tfrac{1}{3}, \tfrac{1}{3}, -2\xi\right)+\frac{\tau^{3}}{36} M\! \left(\tfrac{2}{3}, \tfrac{4}{3}, -2\xi\right)\right\}
\\&\qquad 
+\frac{\tau^2\Gamma(\tfrac{1}{6})}{54}\left\{M\! \left(\tfrac{1}{3}, \tfrac{5}{3}, -2\xi\right)-\frac{\tau^{3}}{360}M\! \left(\tfrac{4}{3}, \tfrac{8}{3}, -2\xi\right)\right\}+\tfrac{1}{6}\sqrt{\pi}\,\tau\\ 
&=\frac{\sqrt{6}\,\pi\tau^{5/2}}{324}\left\{2I_{1/6}(\xi)+I_{-5/6}(\xi)\right\}\rme^{-\xi} +\frac{\sqrt{6}\,\pi\tau^{5/2}}{324} \left\{2I_{-1/6}(\xi)+I_{5/6}(\xi)\right\}\rme^{-\xi}
+\tfrac{1}{6}\sqrt{\pi}\,\tau\\
&=\frac{\sqrt{6}\,\pi\tau^{5/2}}{324}\left\{2\!\left[I_{1/6}(\xi)+I_{-1/6}(\xi)\right]+I_{5/6}(\xi)+I_{-5/6}(\xi)\right\}\rme^{-\xi}+\tfrac{1}{6}\sqrt{\pi}\,\tau,
\end{align*}
using
\begin{subequations}\label{2F2-1F1}\begin{align}
\pFq{2}{2}{-\tfrac{1}{3},\tfrac{5}{3}}{\tfrac{1}{3},\tfrac{2}{3}}{-2\xi}&=M\! \left(-\tfrac{1}{3}, \tfrac{1}{3}, -2\xi\right)+3\xi M\! \left(\tfrac{2}{3}, \tfrac{4}{3}, -2\xi\right),\\ 
\pFq{2}{2}{\tfrac{1}{3},\tfrac73}{\tfrac{4}{3},\tfrac{5}{3}}{-2\xi}&=M\! \left(\tfrac{1}{3}, \tfrac{5}{3}, -2\xi\right)-\frac{3\xi}{10}M\! \left(\tfrac{4}{3}, \tfrac{8}{3}, -2\xi\right),
\end{align}\end{subequations} and
\begin{subequations}\label{DLMF13.6.11a}\begin{align}
M\! \left(-\tfrac{1}{3}, \tfrac{1}{3}, -2 \xi \right) &
=\Gamma\!\left(\tfrac{1}{6}\right) (\tfrac{1}{2}\xi)^{5/6} \left\{I_{1/6}(\xi)+I_{-5/6}(\xi)\right\}\rme^{-\xi},\\
M\! \left(\tfrac{1}{3}, \tfrac{5}{3}, -2 \xi \right) &
=\Gamma\!\left(\tfrac{5}{6}\right) (\tfrac{1}{2}\xi)^{1/6} \left\{I_{5/6}(\xi)+I_{-1/6}(\xi)\right\}\rme^{-\xi},
\end{align}\end{subequations}
together with \eqref{1F1_modB}.
The identities \eqref{2F2-1F1} follow from 
\beq\pFq{2}{2}{a,c+1}{b,c}{-2z} = M\! \left(a, b, -2z\right)-\frac{2a z}{b\, c} M\! \left(a+1, b+1, -2z\right),\nonumber\eeq
see \cite[\S7.12.1]{refPrudBM3},
which also is a special case of equation (2.7) in \cite[Lemma 4]{refMillerP}.
The identities \eqref{DLMF13.6.11a} follow from
\beq M\left(\nu+\tfrac{1}{2},2\nu+2,-2z\right)= \Gamma\!\left(\nu +1\right) (\tfrac{1}{2}z)^{-\nu} \left\{I_{\nu}(z)+I_{\nu +1}(z)\right\}\rme^{-z},\nonumber\eeq
which is a special case of equation (13.6.11\_1) in the DLMF \cite{refDLMF}, i.e.
\beq M\left(\nu+\tfrac{1}{2},2\nu+1+n,-2z\right)=\Gamma\left(\nu\right)\rme^{-z}\left(\tfrac{1}{2}{z}\right)^{-\nu}\sum_{k=0}^{n}\frac{{\left(n\right)_{k}}{\left(2\nu\right)_{k}}(\nu+k)}{{\left(2\nu+1+n\right)_{k}}k!}I_{\nu+k}(z).\nonumber\eeq
Hence
\begin{align}\mu_4(\tau;\tfrac{1}{4}) 
&=\tfrac{1}{9}{\sqrt{6}\,\pi\tau^{1/2}} \left\{\tfrac{1}{18}\tau^2\!\left[I_{1/6}(\xi) +I_{-1/6}(\xi)\right] 
+\tfrac{1}{36}\tau^2\left[ I_{5/6}(\xi) +I_{-5/6}(\xi) \right]\!\right\}{\rme^{-\xi}}+\tfrac{1}{6}\sqrt{\pi}\,\tau,\label{sol:mu414}
\end{align}
and therefore \eqref{mun14bessel} holds when $n=2$, with
\beq f_2(\tau)=\tfrac{1}{18}{\tau^2},\qquad g_2(\tau)=\tfrac{1}{36}{\tau^2},\qquad h_1(\tau)=\tfrac{1}{2}{\tau}.\label{ic:mu414}\eeq

Substituting \eqref{mun14bessel} into the discrete equation \eqref{mr1k} shows that $f_n(\tau)$, $g_n(\tau)$ and $h_{n}(\tau)$ satisfy the discrete equations
\begin{subequations}\label{sys:fgh} \begin{align}
&3f_{n+3}(\tau) - 2\tau f_{n+2}(\tau) +\tfrac{1}{4}\tau^2 f_{n+1}(\tau)-(n+\tfrac{1}{2})f_{n}(\tau) = 0,\\
&3g_{n+3}(\tau) - 2\tau g_{n+2}(\tau) +\tfrac{1}{4}\tau^2 g_{n+1}(\tau)-(n+\tfrac{1}{2})g_{n}(\tau) = 0,\\
&3h_{n+2}(\tau) - 2\tau h_{n+1}(\tau) +\tfrac{1}{4}\tau^2 h_{n}(\tau)-(n+\tfrac{1}{2})h_{n-1}(\tau) = 0.
\end{align}\end{subequations}
From \eqref{ic:mu014}, \eqref{ic:mu214} and \eqref{ic:mu414}, we have the initial conditions
\beq f_0=1,\quad f_1=\tfrac{1}{6}{\tau},\quad f_2=\tfrac{1}{18}{\tau^2},\qquad g_0=g_1=0,\quad g_2=\tfrac{1}{36}{\tau^2},\qquad h_{-1}=0,\quad h_0=1,\quad h_1=\tfrac{1}{2}{\tau}.
\label{ics:fgh}\eeq
Therefore from \eqref{sys:fgh} and \eqref{ics:fgh} it follows that $f_n(\tau)$, $g_n(\tau)$ and $h_{n}(\tau)$ are polynomials of degree $n$, provided that the coefficient of $\tau^n$, for $n\geq2$, is nonzero for these polynomials. To show this, suppose that 
\[ f_n(\tau)=\sum_{j=0}^n a_{n,j}\tau^j,\qquad g_n(\tau)=\sum_{j=0}^n b_{n,j}\tau^j,\qquad h_n(\tau)=\sum_{j=0}^n c_{n,j}\tau^j,\]
then from the coefficient of the highest power of $\tau$ in \eqref{sys:fgh}, it follows that $a_{n,n}$, $b_{n,n}$ and $c_{n,n}$ respectively satisfy
\begin{subequations}\label{sys:abc} 
\begin{align}
&3a_{n+3,n+3}-2a_{n+2,n+2}+\tfrac{1}{4}a_{n+1,n+1}=0,\\%\quad n\geq0,\\
&3b_{n+3,n+3}-2b_{n+2,n+2}+\tfrac{1}{4}b_{n+1,n+1}=0,\\%quad n\geq0,\\
&3c_{n+2,n+2}-2c_{n+1,n+1}+\tfrac{1}{4}c_{n,n}=0,%quad n\geq0,
\end{align}
with
\beq a_{1,1}=\tfrac{1}{6},\quad a_{2,2}=\tfrac{1}{18},\qquad b_{1,1}=0,\quad b_{2,2}=\tfrac{1}{36},\qquad c_{0,0}=1,\quad c_{1,1}=\tfrac{1}{2},\eeq\end{subequations}
and solving \eqref{sys:abc} gives
\[
a_{n,n}=\tfrac{1}{6}\!\left(\tfrac{1}{2}\right)^n+\tfrac{1}{2}\!\left(\tfrac{1}{6}\right)^n,\qquad b_{n,n}=\tfrac{1}{6}\!\left(\tfrac{1}{2}\right)^n-\tfrac{1}{2}\!\left(\tfrac{1}{6}\right)^n,\qquad c_{n,n}=\left(\tfrac{1}{2}\right)^n,\]
which are nonzero for $n\geq2$, as required. Hence the result follows by induction.
\end{proof}

\begin{example}{\rm
Using the discrete equation \eqref{mr1k} with $\mu_{0}(\tau;\tfrac{1}{4})$, $\mu_{2}(\tau;\tfrac{1}{4})$ and $\mu_{4}(\tau;\tfrac{1}{4})$ given by \eqref{sol:mu014}, \eqref{sol:mu214} and \eqref{sol:mu414} respectively, we obtain
\begin{align*} \mu_6(\tau;\tfrac{1}{4}) &=
\frac{\Gamma(\tfrac{1}{6})}{18}\,\pFq{2}{2}{-\tfrac{2}{3},\tfrac73}{\tfrac{1}{3},\tfrac{2}{3}}{-2\xi}+\frac{5\tau\Gamma(\tfrac{5}{6})}{18}\pFq{2}{2}{-\tfrac{1}{3},\tfrac83}{\tfrac{2}{3},\tfrac{4}{3}}{-2\xi}+\frac{\sqrt{\pi}\,\tau^2}{12}\\
&=\frac{\sqrt{6}\,\pi\tau^{1/2}}{9} \left\{\frac{5\tau^{3}+36}{216}\!\left[I_{1/6}(\xi) +I_{-1/6}(\xi) \right]
+\frac{4\tau^3}{54}\!\left[I_{5/6}(\xi) +I_{-5/6}(\xi) \right]\!\right\}\rme^{-\xi}+\frac{\sqrt{\pi}\,\tau^2}{12},\\[5pt]
\mu_8(\tau;\tfrac{1}{4})&=\frac{7\tau\Gamma(\tfrac{1}{6})}{108}\pFq{2}{2}{-\tfrac{2}{3},\tfrac{10}3}{\tfrac{2}{3},\tfrac{4}{3}}{-2\xi}
+\frac{5\tau^2\Gamma(\tfrac{5}{6})}{27}\,\pFq{2}{2}{-\tfrac{1}{3},\tfrac{11}3}{\tfrac{4}{3},\tfrac{5}{3}}{-2\xi}+\frac{\sqrt{\pi}\left(\tau^3+4\right)}{24}\\
&=\frac{\sqrt{6}\,\pi\tau^{1/2}}{9} \left\{\frac{7\tau(\tau^{3}+18)}{648}\!\!\left[I_{1/6}(\xi) +I_{-1/6}(\xi) \right]
+\frac{13\tau^4}{1296}\!\!\left[I_{5/6}(\xi) +I_{-5/6}(\xi)\right] \!\right\}\rme^{-\xi}+\frac{\sqrt{\pi}(\tau^3+4)}{24},\\[5pt]
\mu_{10}(\tau;\tfrac{1}{4})&=
\frac{5\Gamma(\tfrac{5}{6})}{18}\pFq{2}{2}{-\tfrac{4}{3},\tfrac{11}3}{\tfrac{1}{3},\tfrac{2}{3}}{-2\xi}
+\frac{35\tau^2\Gamma(\tfrac{1}{6})}{648}\,\pFq{2}{2}{-\tfrac{2}{3},\tfrac{13}3}{\tfrac{4}{3},\tfrac{5}{3}}{-2\xi}
+\frac{\sqrt{\pi}\,\tau\left(\tau^3+12\right)}{48}\\
&=\frac{\sqrt{6}\,\pi\tau^{1/2}}{9} \left\{\frac{\tau^2(41\tau^{3}+1260)}{7776}\!\left[I_{1/6}(\xi) +I_{-1/6}(\xi) \right]
+\frac{5\tau^2(2\tau^3+9)}{1944}\!\left[I_{5/6}(\xi) +I_{-5/6}(\xi)\right] \!\right\}\rme^{-\xi}\\ &\qquad
+\frac{\sqrt{\pi}\,\tau(\tau^3+12)}{48},
\end{align*} with $\xi=\tau^3/108$.
}\end{example}

In the next two lemmas we derive closed form expressions for $\mu_{2n}(\tau;\tfrac{1}{3})$ and $\mu_{2n}(\tau;0)$.
\begin{lemma}\label{lem:mun13}
The moment $\mu_{2n}(\tau;\tfrac{1}{3})$ is given by
\begin{align*}\mu_{2n}(\tau;\tfrac{1}{3})= \Bigg\{
\tfrac{1}{3}&\Gamma \!\left(\tfrac{1}{3}{n}+\tfrac{1}{6}\right) \pFq{2}{2}{\tfrac{1}{6}-\tfrac{1}{3}{n},\tfrac{1}{2}-\tfrac{1}{3}{n}}{\tfrac{1}{3},\tfrac{2}{3}}{\frac{\tau^3}{27}} +\tfrac{1}{3}{\tau}\Gamma \!\left(\tfrac{1}{3}{n}+\tfrac{5}{6}\right)\pFq{2}{2} {\tfrac{1}{2}-\tfrac{1}{3}{n},\tfrac{5}{6}-\tfrac{1}{3}{n}}{\tfrac{2}{3},\tfrac{4}{3}}{\frac{\tau^3}{27}} \nonumber\\
&\quad +\frac{(2n-1)\tau^2}{36}
\Gamma \!\left(\tfrac{1}{3}{n}+\tfrac{1}{2}\right)\pFq{2}{2}{\tfrac{5}{6}-\tfrac{1}{3}{n},\tfrac{7}{6}-\tfrac{1}{3}{n}}{\tfrac{4}{3},\tfrac{5}{3}}{\frac{\tau^3}{27}} \Bigg\}\exp\left(-\frac{\tau^3}{27}\right).
\end{align*}
\end{lemma}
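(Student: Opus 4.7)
The plan is to mirror the template used in Lemma \ref{lem:mun14} for the $\k=\tfrac{1}{4}$ case: derive a linear third-order ODE in $\tau$ for $\mu_{2n}(\tau;\tfrac{1}{3})$ via integration by parts on the integral representation, recognise its general solution as a linear combination of three $\pFq{2}{2}{}{}{z}$ functions, and then pin down the three constants with initial data at $\tau=0$.

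The first step is to exhibit a cubic polynomial $Q_n(s,\tau)$ such that
\[
\int_0^\infty \!\deriv{}{s}\!\left\{Q_n(s,\tau)\,s^{n+1/2}\exp\!\left(-s^3+\tau s^2-\tfrac{1}{3}\tau^2 s\right)\!\right\}\rmd s=0,
\]
whose left-hand side, when expanded and after re-indexing via the relations $\mu_{2n+2}=\partial_\tau\mu_{2n}$, $\mu_{2n+4}=\partial_\tau^2\mu_{2n}$, matches a third-order operator of the form $\partial_\tau^3\mu_{2n}+a_2(\tau)\partial_\tau^2\mu_{2n}+a_1(\tau)\partial_\tau\mu_{2n}+a_0(\tau)\mu_{2n}=0$. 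The anticipated shape is suggested by the $n=0$ computation already carried out before Lemma \ref{lemma:41}, together with the structure of the claimed answer: the prefactor $\exp(-\tau^3/27)$ and the argument $\tau^3/27$ indicate that the ``clean'' variable is $\widetilde{\mu}_{2n}(\tau;\tfrac{1}{3})=\mu_{2n}(\tau;\tfrac{1}{3})\exp\{(\tfrac{1}{3}\tau)^3\}$ and the natural independent variable is $z=\tau^3/27$. I expect $\widetilde{\mu}_{2n}$ to satisfy a constant-free (in the non-derivative term rescaled by $n$) variant of the equation satisfied by $\widetilde{\mu}_0$ in the remark after Lemma \ref{qsmoment}, with the $n$-dependence appearing only through the numerators of the hypergeometric parameters.

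Once the ODE is in hand, substituting $z=\tau^3/27$ and using the Frobenius indices $0,\tfrac{1}{3},\tfrac{2}{3}$ at $\tau=0$ identifies three independent solutions of the form $\tau^{3k}\pFq{2}{2}{\ast,\ast}{\ast,\ast}{\tau^3/27}$ for $k=0,1,2$, whose parameters are exactly $(\tfrac16-\tfrac{n}{3},\tfrac12-\tfrac{n}{3};\tfrac13,\tfrac23)$, $(\tfrac12-\tfrac{n}{3},\tfrac56-\tfrac{n}{3};\tfrac23,\tfrac43)$, $(\tfrac56-\tfrac{n}{3},\tfrac76-\tfrac{n}{3};\tfrac43,\tfrac53)$; multiplication by the integrating factor $\exp(-\tau^3/27)$ then produces the claimed basis. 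Because $\pFq{2}{2}{a_1,a_2}{b_1,b_2}{0}=1$, the three coefficients $c_1,c_2,c_3$ in the general solution are determined by reading off $\mu_{2n}(0;\tfrac{1}{3})$, $\partial_\tau\mu_{2n}(0;\tfrac{1}{3})$, $\partial_\tau^2\mu_{2n}(0;\tfrac{1}{3})$ from the initial-value formulas stated just after \eqref{mu0:kappa}, which specialise (at $\k=\tfrac13$) to $\tfrac{1}{3}\Gamma(\tfrac{n}{3}+\tfrac{1}{6})$, $\tfrac{1}{3}\Gamma(\tfrac{n}{3}+\tfrac{5}{6})$, and $\tfrac{2n-1}{18}\Gamma(\tfrac{n}{3}+\tfrac{1}{2})$ respectively, matching the three terms in the stated formula.

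The main obstacle I anticipate is algebraic rather than conceptual: identifying the correct polynomial $Q_n(s,\tau)$ that produces a total derivative whose integration by parts yields precisely the hypergeometric-type third order ODE, rather than some ODE with extra structure that would obscure the form of the solution. A guided way to circumvent the guesswork is to differentiate $\mu_{2n}(\tau;\tfrac{1}{3})$ three times under the integral sign, then use the single algebraic identity
\[
\deriv{}{s}\!\left[s^{n+1/2}\exp\!\left(-s^3+\tau s^2-\tfrac{1}{3}\tau^2 s\right)\right]
=\left[(n+\tfrac12)s^{-1}-3s^2+2\tau s-\tfrac{1}{3}\tau^2\right]s^{n+1/2}\exp\!\left(-s^3+\tau s^2-\tfrac{1}{3}\tau^2 s\right)
\]
to eliminate the unwanted cubic-in-$s$ term from the integrand, which effectively performs the integration by parts. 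Everything else is routine verification of initial conditions and matching the Frobenius indices.
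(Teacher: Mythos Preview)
Your proposal is correct and follows essentially the same approach as the paper: the paper exhibits the explicit polynomial $Q_n(s,\tau)=54s^3-72\tau s^2+30\tau^2 s-4\tau^3+18n-45$ whose total-derivative identity yields the third-order ODE
\[
\deriv[3]{\mu_{2n}}{\tau}+\frac{2\tau^2}{9}\deriv[2]{\mu_{2n}}{\tau}+\frac{\tau(\tau^3+18n+27)}{81}\deriv{\mu_{2n}}{\tau}+\frac{(2n+1)(4\tau^3-18n+45)}{324}\mu_{2n}=0,
\]
and then fixes the three constants using exactly the initial values you wrote down.
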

\begin{proof}
This result is proved using
\begin{align*}
\deriv[3]{\mu_{2n}}{\tau}&+\frac{2\tau^2}{9}\deriv[2]{\mu_{2n}}{\tau}+\frac{\tau(\tau^3+18n+27)}{81}\deriv{\mu_{2n}}{\tau}+\frac{(2n+1)(4\tau^3-18n+45)}{324}\mu_{2n}\\
&=-\frac{1}{162}\int_0^\infty 
\deriv{}{s}\left\{\left(54s^3-72\tau s^2+30\tau^2 s-4\tau^3+18n-45\right)s^{n+1/2}
\exp\left(-s^3+\tau s^2-\tfrac{1}{3}\tau^2 s\right)\right\}\rmd s=0,
\end{align*}
and the initial conditions
\begin{align*}&\mu_{2n}(0;\tfrac{1}{3})
= \tfrac{1}{3}\Gamma\!\left(\tfrac{1}{3}{n}+\tfrac{1}{6}\right), \qquad
\deriv{\mu_{2n}}{\tau}(0;\tfrac{1}{3})
=\tfrac{1}{3}\Gamma\!\left(\tfrac{1}{3}{n}+\tfrac{5}{6}\right),%\\ &
\qquad\deriv[2]{\mu_{2n}}{\tau}(0;\tfrac{1}{3})
=\frac{2n-1}{18} \Gamma \left(\tfrac{1}{3}{n}+\tfrac{1}{2}\right).
\end{align*}
\end{proof}

\begin{lemma}\label{lem:mun0}
The moment $\mu_{2n}(\tau;0)$ is given by
\begin{align*}\mu_{2n}(\tau;0)=
\tfrac{1}{3}&\Gamma \!\left(\tfrac{1}{3}{n}+\tfrac{1}{6}\right) \pFq{2}{2}{\tfrac{1}{6}{n}+\tfrac{1}{12},\tfrac{1}{6}{n}+\tfrac{7}{12}}{\tfrac{1}{3},\tfrac{2}{3}}{\frac{4\tau^3}{27}} +\tfrac{1}{3}{\tau}\Gamma \!\left(\tfrac{1}{3}{n}+\tfrac{5}{6}\right)\pFq{2}{2}{\tfrac{1}{6}{n}+\tfrac{5}{12},\tfrac{1}{6}{n}+\tfrac{11}{12}}{\tfrac{2}{3},\tfrac{4}{3}}{\frac{4\tau^3}{27}}\nonumber\\ &\qquad\qquad
+\tfrac{1}{6}\tau^2
\Gamma \!\left(\tfrac{1}{3}{n}+\tfrac{3}{2}\right)\pFq{2}{2}{\tfrac{1}{6}{n}+\tfrac{3}{4},\tfrac{1}{6}{n}+\tfrac{5}{4}}{\tfrac{4}{3},\tfrac{5}{3}}{\frac{4\tau^3}{27}}. 
\end{align*}
\end{lemma}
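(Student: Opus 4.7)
The plan is to follow exactly the pattern used for the three previous moment lemmas (in particular Lemma~\ref{qsmoment}, which is the $n=0$ case, and the $n$-dependent Lemma~\ref{lem:mun14}): derive a third-order linear ODE in $\tau$ satisfied by $\mu_{2n}(\tau;0)$, identify its general solution as a linear combination of three $\pFq{2}{2}{\cdot,\cdot}{\cdot,\cdot}{\tfrac{4\tau^{3}}{27}}$ series, and then fix the three constants of integration from initial data at $\tau=0$.

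First I would set up the ODE via integration by parts. Using the integral representation
\[
\mu_{2n}(\tau;0)=\int_{0}^{\infty}s^{n-1/2}\exp(-s^{3}+\tau s^{2})\,ds,
\]
note that $\partial_{\tau}^{k}\mu_{2n}=\int_{0}^{\infty}s^{n+2k-1/2}\exp(-s^{3}+\tau s^{2})\,ds$, so the only integer shifts in half-integer powers of $s$ that correspond to derivatives are $s^{n-1/2},s^{n+3/2},s^{n+7/2},s^{n+11/2}$. I would therefore seek a polynomial multiplier of the form
\[
f(s,\tau)=\big(\alpha s^{3}+\beta\tau s^{2}+\gamma\big)s^{n+1/2},
\]
and choose $\alpha,\beta,\gamma$ so that in the identity
\[
0=\int_{0}^{\infty}\deriv{}{s}\!\left[f(s,\tau)\exp(-s^{3}+\tau s^{2})\right]ds,
\]
the coefficients of the `bad' powers $s^{n+1/2}$, $s^{n+5/2}$, $s^{n+9/2}$ vanish. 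A short calculation forces $\beta=\tfrac{2}{3}\alpha$, $\gamma=\tfrac{2n+7}{6}\alpha$, and yields (after normalising by $\alpha=-\tfrac{1}{3}$) the ODE
\[
\deriv[3]{\mu_{2n}}{\tau}-\frac{4\tau^{2}}{9}\deriv[2]{\mu_{2n}}{\tau}-\frac{4(n+3)\tau}{9}\deriv{\mu_{2n}}{\tau}-\frac{(2n+1)(2n+7)}{36}\mu_{2n}=0,
\]
which reduces to the ODE of Lemma~\ref{qsmoment} when $n=0$.

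Second I would recognise this as a hypergeometric-type ODE in the variable $z=\tfrac{4\tau^{3}}{27}$. With $\theta=z\tfrac{d}{dz}=\tfrac{\tau}{3}\tfrac{d}{d\tau}$, the equation is equivalent to
\[
\theta(\theta-\tfrac{2}{3})(\theta-\tfrac{1}{3})y=z(\theta+\tfrac{n}{6}+\tfrac{1}{12})(\theta+\tfrac{n}{6}+\tfrac{7}{12})y,
\]
whose Frobenius exponents at $\tau=0$ are $0,1,2$. The three linearly independent solutions are therefore the $\pFq{2}{2}{\cdot,\cdot}{\cdot,\cdot}{\tfrac{4\tau^{3}}{27}}$ series displayed in the statement, with denominator parameters $(\tfrac{1}{3},\tfrac{2}{3})$, $(\tfrac{2}{3},\tfrac{4}{3})$, $(\tfrac{4}{3},\tfrac{5}{3})$ corresponding to the prefactors $1$, $\tau$, $\tau^{2}$, and with the numerator parameters shifted by $\tfrac{1}{3}$ and $\tfrac{2}{3}$ in the latter two families, in accordance with the Frobenius substitution.

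Finally I would fix $c_{1},c_{2},c_{3}$ from
\[
\mu_{2n}(0;0)=\tfrac{1}{3}\Gamma(\tfrac{n}{3}+\tfrac{1}{6}),\quad
\pderiv{\mu_{2n}}{\tau}(0;0)=\tfrac{1}{3}\Gamma(\tfrac{n}{3}+\tfrac{5}{6}),\quad
\pderiv[2]{\mu_{2n}}{\tau}(0;0)=\tfrac{1}{3}\Gamma(\tfrac{n}{3}+\tfrac{3}{2}),
\]
the latter two being immediate from the integral representation. Since $\pFq{2}{2}{a,b}{c,d}{0}=1$, one reads off $c_{1}$ from the value at $\tau=0$, $c_{2}$ from the first derivative at $\tau=0$ (the $\tau$-prefactor gives a clean $c_{2}$ with no contribution from $c_{1}$'s derivative because $z'(0)=0$), and similarly $c_{3}=\tfrac{1}{6}\Gamma(\tfrac{n}{3}+\tfrac{3}{2})$ from the second derivative at $\tau=0$ (the factor of $\tfrac{1}{2}$ coming from $\partial_{\tau}^{2}\tau^{2}|_{0}=2$). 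These are exactly the coefficients displayed in the lemma. The only non-mechanical step is the selection of $f(s,\tau)$ producing the ODE; the rest is bookkeeping, and I would double-check by specialising to $n=0$ and recovering Lemma~\ref{qsmoment}.
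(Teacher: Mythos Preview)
Your proposal is correct and follows essentially the same route as the paper: derive the third-order ODE
\[
\deriv[3]{\mu_{2n}}{\tau}-\frac{4\tau^{2}}{9}\deriv[2]{\mu_{2n}}{\tau}-\frac{4(n+3)\tau}{9}\deriv{\mu_{2n}}{\tau}-\frac{(2n+1)(2n+7)}{36}\mu_{2n}=0
\]
via the total-derivative identity with multiplier $(6s^{3}+4\tau s^{2}+2n+7)s^{n+1/2}$, then match the three $\pFq{2}{2}{\cdot,\cdot}{\cdot,\cdot}{4\tau^{3}/27}$ solutions to the initial data $\mu_{2n}(0;0)=\tfrac{1}{3}\Gamma(\tfrac{n}{3}+\tfrac{1}{6})$, $\partial_{\tau}\mu_{2n}(0;0)=\tfrac{1}{3}\Gamma(\tfrac{n}{3}+\tfrac{5}{6})$, $\partial_{\tau}^{2}\mu_{2n}(0;0)=\tfrac{1}{3}\Gamma(\tfrac{n}{3}+\tfrac{3}{2})$. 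The paper simply states the ODE and the general solution without the intermediate Frobenius/$\theta$-operator discussion you supply, but the argument is otherwise identical.
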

\begin{proof}
This result is proved using
\begin{align*}
\deriv[3]{\mu_{2n}}{\tau}&-\frac{4\tau^2}{9}\deriv[2]{\mu_{2n}}{\tau}-\frac{4(n+3)\tau}{9}\deriv{\mu_{2n}}{\tau}-\frac{(2n+1)(2n+7)}{36}\mu_{2n}\\
&=-\tfrac{1}{18}\int_0^\infty 
\deriv{}{s}\left\{(6s^3+4\tau s^2+2n+7)\,s^{n+1/2}\exp\left(-s^3+\tau s^2\right)\right\}\rmd s=0,
\end{align*}
and the initial conditions
\begin{align*}&\mu_{2n}(0;0)
= \tfrac{1}{3}\Gamma\!\left(\tfrac{1}{3}{n}+\tfrac{1}{6}\right), \qquad
\deriv{\mu_{2n}}{\tau}(0;0)
=\tfrac{1}{3}\Gamma\!\left(\tfrac{1}{3}{n}+\tfrac{5}{6}\right),%\\ &
\qquad\deriv[2]{\mu_{2n}}{\tau}(0;0)
=\tfrac{1}{3}\,\Gamma \!\left(\tfrac{1}{3}{n}+\tfrac{3}{2}\right).
\end{align*}
\end{proof}

\begin{remark}{\rm For general $\k$, it can be shown that $\ph_{n}(\tau)=\mu_{2n}(\tau;\k)$ satisfies the third order equation
\begin{align}
\deriv[3]{\ph_{n}}{\tau}&+
\frac{2\tau^{2}}{9} \left\{9 \k -2-\frac{54 \k (4\k-1) (3\k-1)}{4 \k (4\k-1) (3\k-1)\tau^{3}-12 \k +2 n +3}
\right\}\deriv[2]{\ph_{n}}{\tau}\nonumber\\
&+\frac{\tau}{9}\left\{3 \k^{2}\tau^{3} (4\k-1)+2 n (9 \k -2)+\frac{3 (36 \k^{2}-27 \k +4) (12 \k -2 n -3)}{4 \k (4\k-1) (3\k-1)\tau^{3}-12 \k +2 n +3}
\right\}\deriv{\ph_{n}}{\tau}
\nonumber\\
&+\frac{2 n +1}{36} \left\{
12 \k^{2}\tau^{3} (4\k-1)-36 \k -2 n +5-\frac{12 (6 \k -1) (12 \k -2 n -3)}{4 \k (4\k-1) (3\k-1)\tau^{3}-12 \k +2 n +3}
\right\}\ph_{n}=0.
\label{eq:mungen}
\end{align}
At present we have no closed form solution for this equation, except in the three cases discussed above.
We note that unless $\k=\tfrac{1}{3}$, $\k=\tfrac{1}{4}$ or $\k=0$, \eqref{eq:mungen} has three regular singular points at the roots of the cubic \[4 \k (4\k-1) (3\k-1)\tau^{3}-12 \k +2 n +3=0.\] 
Further, \eqref{eq:mungen} has an irregular singular point at $\tau=\infty$ for all values of $\k$.
}\end{remark}

\subsection{Higher order moments: series expansions for general $\k$.}
For values of $\k$ other than $0$, $\tfrac{1}{4}$ or $\tfrac{1}{3}$ we have a series representation for the moments $\mu_n(\tau;\kappa)$ in terms of Laguerre polynomials as well as in terms of Jacobi polynomials with varying parameters. 

\begin{theorem} For $\tau,\,\k \in \Real$, we have 
\begin{align}\label{mu2n series2}
\mu_{2n}(\tau;\k) 
 &= \tfrac{1}{3} \sum_{j=0}^{\infty} \left\{\frac{\Gamma\left(\tfrac{2}{3}j+\tfrac{1}{3}n+\tfrac{1}{6}\right)}{\left(\tfrac{1}{2}\right)_j}\,\tau^j\,L_j^{(-1/2)}(\zeta)
 - \k\frac{\Gamma\left(\tfrac{2}{3}j+\tfrac{1}{3}n+\tfrac{1}{2}\right)}{\left(\tfrac32\right)_j}\,\tau^{j+2}\, L_j^{(1/2)}(\zeta) \right\},
\end{align}
where $L_j^{(\alpha)}\left(\zeta\right)$ 
are the Laguerre polynomials of parameter $\alpha$ and $\zeta=-\frac{1}{4} \k^2\tau^3$. 
\end{theorem}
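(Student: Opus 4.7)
The plan is to start from the integral representation
\[
\mu_{2n}(\tau;\k) = \int_0^\infty s^{n-1/2}\exp(-s^3)\exp(\tau s^2 - \k\tau^2 s)\,\rmd s,
\]
factor the perturbation as $\exp(\tau s^2)\exp(-\k\tau^2 s)$, and expand each factor as a power series:
\[
\exp(\tau s^2)\exp(-\k\tau^2 s) = \sum_{k,\ell=0}^{\infty}\frac{(-\k)^{k}\tau^{2k+\ell}}{k!\,\ell!}\,s^{k+2\ell}.
\]
Since $\exp(-s^3)$ provides super-exponential decay at infinity, the double series converges absolutely when multiplied by $s^{n-1/2}\rme^{-s^3}$, so Fubini/dominated convergence legitimises the interchange of sum and integral. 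Applying the Gamma evaluation $\int_0^\infty s^{a-1}\rme^{-s^3}\rmd s=\tfrac{1}{3}\Gamma(a/3)$ then yields
\[
\mu_{2n}(\tau;\k)=\tfrac{1}{3}\sum_{k,\ell=0}^{\infty}\frac{(-\k)^{k}\tau^{2k+\ell}}{k!\,\ell!}\,\Gamma\!\left(\tfrac{1}{3}(n+k+2\ell)+\tfrac{1}{6}\right).
\]

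Next I would split the sum according to the parity of $k$, writing $k=2m$ for the even part and $k=2m+1$ for the odd part, so that $\k^{2m}$ (resp.\ $\k^{2m+1}$) pairs with $\tau^{4m+\ell}$ (resp.\ $\tau^{4m+2+\ell}$). I then reindex by $j=m+\ell$ to pull the outer power of $\tau$ out of the double sum, giving two expressions of the shape
\[
\tfrac{1}{3}\sum_{j=0}^{\infty}\tau^{j}\;\Gamma\!\left(\tfrac{2j}{3}+\tfrac{n}{3}+\tfrac{1}{6}\right)\sum_{m=0}^{j} c_{m,j}\,\k^{2m}\tau^{3m},\qquad
-\tfrac{1}{3}\k\sum_{j=0}^{\infty}\tau^{j+2}\;\Gamma\!\left(\tfrac{2j}{3}+\tfrac{n}{3}+\tfrac{1}{2}\right)\sum_{m=0}^{j} d_{m,j}\,\k^{2m}\tau^{3m},
\]
where $c_{m,j}$ and $d_{m,j}$ are explicit combinatorial coefficients coming from $1/((2m)!\,(j-m)!)$ and $1/((2m+1)!\,(j-m)!)$, respectively, together with the Gamma factor common to the outer sum.

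The final step is to identify the inner finite sums as Laguerre polynomials evaluated at $\zeta=-\tfrac{1}{4}\k^{2}\tau^{3}$, using
\[
L_j^{(\alpha)}(\zeta)=\sum_{m=0}^{j}(-1)^{m}\binom{j+\alpha}{j-m}\frac{\zeta^{m}}{m!}.
\]
Because $\zeta^{m}=(-\tfrac{1}{4})^{m}\k^{2m}\tau^{3m}$, the signs $(-1)^{m}(-\tfrac{1}{4})^{m}=4^{-m}$ give positive coefficients, and matching $\alpha=-\tfrac{1}{2}$ for the even part and $\alpha=\tfrac{1}{2}$ for the odd part reduces the required combinatorial identities to
\[
\frac{1}{(2m)!}=\frac{\Gamma(1/2)\,4^{-m}}{m!\,\Gamma(m+1/2)},\qquad
\frac{1}{(2m+1)!}=\frac{\Gamma(3/2)\,4^{-m}}{m!\,\Gamma(m+3/2)},
\]
both of which follow from the Legendre duplication formula $\Gamma(2z)=(2\pi)^{-1/2}2^{2z-1/2}\Gamma(z)\Gamma(z+\tfrac{1}{2})$. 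The Pochhammer denominators $(1/2)_{j}$ and $(3/2)_{j}$ in \eqref{mu2n series2} then arise naturally from $\binom{j\pm 1/2}{j-m}/\Gamma(j\pm 1/2)$.

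The only genuine obstacle is the bookkeeping in the double-sum rearrangement and the verification of the duplication-formula identity that converts the binomial/factorial coefficients into Laguerre coefficients; convergence and the integral evaluations are routine.
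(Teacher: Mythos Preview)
Your approach is correct and parallels the paper's closely: both expand $\exp(\tau s^{2}-\k\tau^{2}s)$, integrate term by term against $s^{n-1/2}\rme^{-s^{3}}$, and identify the resulting coefficients as Laguerre polynomials. The organisational difference is that the paper first collects the Taylor coefficients $C_{m}(\tau;\k)$ of the exponential in $s$ and recognises $C_{2j}$ and $C_{2j+1}$ directly as Laguerre polynomials (essentially via the generating function), whereas you integrate first, obtain the double sum in $(k,\ell)$, split on the parity of $k$, and then identify the inner finite sums as $L_{j}^{(\pm 1/2)}(\zeta)$ through the duplication formula. Your convergence justification is in fact simpler than the paper's: you establish absolute convergence of the double series by bounding it by the convergent integral $\int_{0}^{\infty}s^{n-1/2}\rme^{-s^{3}}\exp(|\k|\tau^{2}s+|\tau|s^{2})\,\rmd s$, which immediately legitimises both the interchange with the integral and the subsequent rearrangement; the paper instead applies the ratio test to the single-index series after the Laguerre identification, invoking Szeg\H{o}'s asymptotics for $L_{j}^{(\pm 1/2)}$ to control the ratio $\rho_{m}$. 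Your route avoids that machinery at the cost of slightly more bookkeeping in the reindexing $j=m+\ell$ and the duplication-formula verification, all of which you have outlined correctly.
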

\begin{proof}
Consider the Taylor series expansion about $s=0$ of the function 
\[
\exp\left( -\k\tau^2 s +\tau s^2\right)
= \sum_{m=0}^\infty C_{m}(\tau;\k)\frac{s^m}{m!}
\]
where $C_{m}(\tau;\k)$ are polynomials in $\tau$ and $\k$ given by 
\begin{align*}
C_{m}(\tau;\k) &=\sum_{j=\lceil m/2 \rceil}^m\frac{(-1)^m m! \k^{2 j-m}\tau ^{3 j-m}}{(2 j-m)!\, (m-j)!} \\
 & =\frac{(-1)^m m! \k^{2 \left\lceil \tfrac{1}{2}m\right\rceil -m}\tau 
 ^{3 \left\lceil \tfrac{1}{2}m\right\rceil -m}}{\left(m-\left\lceil \tfrac{1}{2}m\right\rceil \right)! \left(2
 \left\lceil \tfrac{1}{2}m\right\rceil -m\right)!} 
\, \pFq{2}{2}{1,\left\lceil
 \tfrac{1}{2}m\right\rceil -m}{-\tfrac{1}{2}m+\left\lceil
 \tfrac{1}{2}m\right\rceil +\tfrac{1}{2},-\tfrac{1}{2}m+\left\lceil
 \tfrac{1}{2}m\right\rceil +1}{-\tfrac{1}{4}{\k^2\tau^3}}. 
\end{align*}
Hence, we have 
\begin{align*}
C_{m}(\tau;\k)
 &
 = \begin{cases}
 j!\,2^{2j}\,\tau^{j}
\, L_j^{(-1/2)}\left(-\tfrac{1}{4}{\k^2\tau^3} \right),&\quad \text{if} \quad m=2j, \\
 - j!\,2^{2j}\,\k\,\tau^{j+2}
\, L_j^{(1/2)}\left(-\tfrac{1}{4}{\k^2\tau^3} \right),&\quad \text{if} \quad m=2j +1, 
 \end{cases}
\end{align*}
where $L_j^{(\alpha)}\left(z\right) =\frac{(\alpha)_j}{j!}\Fpq{1}{1}{-j}{\alpha+1}{z}$ are the Laguerre polynomials of parameter $\alpha>-1$. 

In order to study the radius of convergence of the series 
\begin{equation}\label{mu2n series proof}
 \sum_{m=0}^{\infty}\frac{C_{m}(\tau;\k)}{m!} \int_0^\infty s^{m+n-1/2} \exp(-s^3)\,\rmd s
= \tfrac{1}{3} \sum_{m=0}^{\infty} \Gamma\left(\tfrac{1}{3}m+\tfrac{1}{3}n+\tfrac{1}{6}\right)\frac{C_{m}(\tau;\k)}{m!}, 
\end{equation}
we analyse the ratio 
\[
\rho_m =\left|\frac{\Gamma\left(\tfrac{1}{3}m+\tfrac{1}{3}n+\tfrac{1}{2}\right) C_{m+1}(\tau;\k)}{(m+1)\,\Gamma\left(\tfrac{1}{3}m+\tfrac{1}{3}n+\tfrac{1}{6}\right) C_{m}(\tau;\k)}\right| 
\]
as $m\to \infty$. Observe that the two subsequences of $(\rho_m)_{m\geq 0}$ of even and odd order are respectively given by 
\begin{align*}
\rho_{2j} 
&= \left|\k\,\tau^{2}\,\frac{\Gamma\left(\tfrac{2}{3}j+\tfrac{1}{3}n+\tfrac{1}{2}\right)}{(2j+1)\Gamma\left(\tfrac{2}{3}j+\tfrac{1}{3}n+\tfrac{1}{6}\right)}
\frac{L_j^{(1/2)}\left(-\tfrac{1}{4}{\k^2\tau^3} \right)}
 {L_j^{(-1/2)}\left(-\tfrac{1}{4}{\k^2\tau^3} \right)}\right| 
\end{align*}
and 
\begin{align*}
\rho_{2j+1} 
= & 
 \left|\frac{2}{\k\,\tau}\,\frac{\Gamma\left(\tfrac{2}{3}j+\tfrac{1}{3}n+\tfrac{5}{6}\right)}{\Gamma\left(\tfrac{2}{3}j+\tfrac{1}{3}n+\tfrac{1}{2}\right)}
\frac{L_{j+1}^{(-1/2)}\left(-\tfrac{1}{4}{\k^2\tau^3} \right)}
 {L_j^{(1/2)}\left(-\tfrac{1}{4}{\k^2\tau^3} \right)}\right|.
\end{align*}
Recall the asymptotic behaviour for the Gamma function, see e.g.\ \cite[Eq.5.11.12]{refDLMF}, to conclude that 
\[
\k\,\tau^{2}\,\frac{\Gamma\left(\tfrac{2}{3}j+\tfrac{1}{3}n+\tfrac{1}{2}\right)}{(2j+1)\Gamma\left(\tfrac{2}{3}j+\tfrac{1}{3}n+\tfrac{1}{6}\right)}
\sim\frac{\k\,\tau^{2}}{\sqrt[3]{12}\,j^{2/3}}
\quad \text{and}\quad 
\frac{2}{\k\,\tau}\,\frac{\Gamma\left(\tfrac{2}{3}j+\tfrac{1}{3}n+\tfrac{5}{6}\right)}{\Gamma\left(\tfrac{2}{3}j+\tfrac{1}{3}n+\tfrac{1}{2}\right)}
 \sim\frac{2}{\k\,\tau}\left(\frac{2j}{3} \right)^{\!1/3} 
\]
as $j\to\infty$. We use \cite[Theorem 8.22.2]{refSzego} for $\tau<0$ and \cite[Theorem 8.22.3]{refSzego} for $\tau>0$ to obtain the following asymptotic behaviour when $\k\neq 0$ 
\[
\frac{L_j^{(1/2)}\left(-\tfrac{1}{4}{\k^2\tau^3} \right)}{L_j^{(-1/2)}\left(-\tfrac{1}{4}{\k^2\tau^3} \right)}
 \sim \left(\frac{\k^2 |\tau|^3}{4j}\right)^{\!-{1/2}} =\frac{2\sqrt{j}}{|\k|\,|\tau|^{3/2}},
 \qquad\text{as}\quad j\to\infty,
\]
and 
\[
\frac{L_{j+1}^{(-1/2)}\left(-\tfrac{1}{4}{\k^2\tau^3} \right)}
 {L_j^{(1/2)}\left(-\tfrac{1}{4}{\k^2\tau^3} \right)}
 \sim \left(\frac{\k^2 |\tau|^3}{4j}\right)^{\!1/2} =\frac{|\k|\,|\tau|^{3/2}}{2\sqrt{j}},
 \qquad\text{as}\quad j\to\infty.
\]
Hence, for fixed $\tau\neq0$ and $\k\neq 0$, we have 
\begin{align*}
\rho_{2j} 
&= \left|\k\,\tau^{2}\,\frac{\Gamma\left(\tfrac{2}{3}j+\tfrac{1}{3}n+\tfrac{1}{2}\right)}{(2j+1)\Gamma\left(\tfrac{2}{3}j+\tfrac{1}{3}n+\tfrac{1}{6}\right)}
\frac{L_j^{(1/2)}\left(-\tfrac{1}{4}{\k^2\tau^3} \right)}
 {L_j^{(-1/2)}\left(-\tfrac{1}{4}{\k^2\tau^3} \right)}\right| 
\\ &\sim\left|\frac{\k\,\tau^{2}}{\sqrt[3]{12}\,j^{2/3}}\,\frac{2\sqrt{j}}{\k\,|\tau|^{3/2}}\right| 
= \left|\frac{\sqrt[3]{\tfrac{2}{3}}\,\sqrt{|\tau|}}{\,j^{1/6}}\right| \longrightarrow 0,\qquad\text{as}\quad j\to\infty, 
\end{align*} 
and 
\begin{align*}
\rho_{2j+1} 
&= \left|\frac{2}{\k\,\tau}\,\frac{\Gamma\left(\tfrac{2}{3}j+\tfrac{1}{3}n+\tfrac{5}{6}\right)}{\Gamma\left(\tfrac{2}{3}j+\tfrac{1}{3}n+\tfrac{1}{2}\right)}
\frac{L_{j+1}^{(-1/2)}\left(-\tfrac{1}{4}{\k^2\tau^3} \right)}
 {L_j^{(1/2)}\left(-\tfrac{1}{4}{\k^2\tau^3} \right)}\right| 
\\ &\sim\left|\frac{2}{\k\,\tau}\left(\frac{2j}{3} \right)^{\!1/3}\frac{\k\,|\tau|^{3/2}}{2\sqrt{j}} \right| 
= \left|\frac{\sqrt[3]{\tfrac{2}{3}}\,\sqrt{|\tau|}}{\,j^{1/6}}\right|
 \longrightarrow 0, \qquad\text{as}\quad j\to\infty. 
\end{align*}
By the ratio test, the series \eqref{mu2n series proof} 
converges absolutely for any $\tau,\, \k \in \Real\backslash\{0\}$. Therefore, by Lebesgue dominated convergence theorem, it follows 
\begin{align*}
\mu_{2n}(\tau;\k) & = \sum_{m=0}^{\infty}\frac{C_{m}(\tau;\k)}{m!} \int_0^\infty s^{m+n-1/2} \exp(-s^3)\,\rmd s 
= \tfrac{1}{3} \sum_{m=0}^{\infty} \Gamma\left(\tfrac{1}{3}m+\tfrac{1}{3}n+\tfrac{1}{6}\right)\frac{C_{m}(\tau;\k)}{m!}. 
\end{align*}
When $\k = 0$ and $\tau\neq0$, one has 
\begin{align*}
\mu_{2n}(\tau;0) &= \tfrac{1}{3} \sum_{m=0}^{\infty}\Gamma\left(\tfrac{1}{3}n+\tfrac{2}{3}m+\tfrac{1}{6}\right) 
\frac{\tau^m}{m!}\\
&= \tfrac{1}{3} \sum_{r=0}^{2} \sum_{j=0}^{\infty}\Gamma\left(\tfrac{1}{3}n+\tfrac{2}{3}r+\tfrac{1}{6}+2j\right) 
\frac{\tau^{3j+r}}{(3j+r)!} \\
&= \tfrac{1}{3} \sum_{r=0}^{2} \Gamma\left(\tfrac{1}{3}n+\tfrac{2}{3}r+\tfrac{1}{6}\right) \sum_{j=0}^{\infty} 
\left(\tfrac{1}{6}n+\tfrac{1}{3}r+\tfrac{1}{12}\right)_j\left(\tfrac{1}{6}n+\tfrac{1}{3}r+\tfrac7{12}\right)_j\,2^{2j}
\frac{\tau^{3j+r}}{(3j+r)!},
\end{align*}
which, after using the Legendre duplication formula for the Gamma function, can be written as in Lemma \ref{lem:mun0}. Hence, the result holds for any $\k\in\Real$.

Finally, \eqref{mu2n series2} also holds when $\tau =0$, since it gives \eqref{mu0:kappa}.
\end{proof} 

As a straightforward consequence of the latter result, one has 
\begin{align*}
 \mu_{2n}(\tau;\k) +\mu_{2n}(\tau;-\k) 
 &= \tfrac{2}{3}\sum_{j=0}^{\infty} 
\frac{\Gamma\left(\tfrac{2}{3}j+\tfrac{1}{3}n+\tfrac{1}{6}\right)}{\left(\tfrac{1}{2}\right)_j}\,\tau^j\,
 L_j^{(-1/2)}(-\tfrac{1}{4}{\k^2\tau^3})\\
 \mu_{2n}(\tau;-\k) -\mu_{2n}(\tau;\k) 
 &= \tfrac{2}{3}\k\tau^2\sum_{j=0}^{\infty}
\frac{\Gamma\left(\tfrac{2}{3}j+\tfrac{1}{3}n+\tfrac{1}{2}\right)}{\left(\tfrac32\right)_j}\tau^{j}\, L_j^{(1/2)}(-\tfrac{1}{4}{\k^2\tau^3}).
\end{align*}
Note that the series expansion obtained above, written in terms of Laguerre polynomials, could of course be written using Hermite polynomials. 

The expressions given in the latter result have a clear $3$-fold decomposition in $\tau$, and in fact \eqref{mu2n series2} reads as: 
 \begin{align*}
 \mu_{2n}(\tau;\k) 
 &= F_n(\tau;\k) +\tau G_n(\tau;\k) +\tau^2 H_n(\tau;\k), 
 \end{align*}
where 
\begin{align*}
 F_n(\tau;\k)
 &= \tfrac{1}{3} \sum_{\ell=0}^{\infty}\frac{\Gamma\left(2\ell+\tfrac{1}{3}n+\tfrac{1}{6}\right)\tau^{3\ell}}{\left(\tfrac32\right)_{3\ell+1}}
 \Bigg\{\tfrac32 ( 2 \ell+1) (6 \ell+1) L_{3\ell}^{(-1/2)}(\zeta)
 - \left( 2\ell+\tfrac{1}{3}n+\tfrac{1}{6}\right) 
 \k\tau^{3}\, L_{3\ell+1}^{(1/2)}(\zeta) \Bigg\} \\
 G_n(\tau;\k)
 &= \tfrac{1}{3} \sum_{\ell=0}^{\infty}\frac{\Gamma\left(2\ell+\tfrac{1}{3}n+\tfrac{5}{6}\right)\tau^{3\ell}}{\left(\tfrac32\right)_{3\ell+2}}
 \Bigg\{\tfrac32(2\ell+1)(6\ell+5) L_{3\ell+1}^{(-1/2)}(\zeta)
 - {\left(2\ell+\tfrac{1}{3}n+\tfrac{5}{6}\right)} \k\tau^{3} L_{3\ell+2}^{(1/2)}(\zeta)\Bigg\} \\
 H_n(\tau;\k)
 &= \tfrac{1}{3} \sum_{\ell=0}^{\infty}
\frac{\Gamma\left(2\ell+\tfrac{1}{3}n+\tfrac{1}{2}\right)\tau^{3\ell}}{\left(\tfrac{1}{2}\right)_{3\ell+2}}
 \Bigg\{\left(2\ell+\tfrac{1}{3}n+\tfrac{1}{2}\right) L_{3\ell+2}^{(-1/2)}(\zeta)
 - \tfrac{3}{4} (2\ell+1) \k\, L_{3\ell}^{(1/2)}(\zeta)\Bigg\} 
\end{align*}
with $\zeta=-\tfrac{1}{4}{\k^2\tau^3}$.

Besides, the latter result states that 
\[ 
\mu_{2n}(\tau;\k) 
= \tfrac{1}{3} \sum_{m=0}^{\infty} 
\sum_{j=\lceil m/2 \rceil}^m\frac{(-1)^m \k^{2 j-m}\tau ^{3 j-m}}{(2 j-m)!\, (m-j)!}\Gamma\left(\tfrac{1}{3}m+\tfrac{1}{3}n+\tfrac{1}{6}\right).
\] 
A swap of the order of summation gives
\[ 
\mu_{2n}(\tau;\k) =
 \tfrac{1}{3} \sum_{j=0}^{\infty} 
\sum_{m=j}^{2j}\frac{(-1)^m \k^{2 j-m}\tau ^{3 j-m}}{(2 j-m)!\, (m-j)!}\Gamma\left(\tfrac{1}{3}m+\tfrac{1}{3}n+\tfrac{1}{6}\right). 
\]
The change of variables $(j,m)\mapsto (\ell,3\ell-m)$ followed by a change in the order of summation corresponds to 
\begin{align*}
\mu_{2n}(\tau;\k) 
&= \tfrac{1}{3} \sum_{m=0}^{\infty} \sum_{\ell= \left\lceil m/2\right\rceil}^{m}\frac{(-\k)^{m-\ell}}{(m-\ell)!\, (2\ell-m)!} 
\Gamma\left(\ell-\tfrac{1}{3}{m}+\tfrac{1}{3}{n}+\tfrac{1}{6}\right)
\tau^m\\
 &= \tfrac{1}{3} \sum_{m=0}^{\infty} \sum_{\ell= 0}^{\left\lfloor m/2\right\rfloor} 
\frac{(-\k)^{\ell}}{\ell!\, (m-2\ell)!} 
\Gamma\left(-\ell +\tfrac{2}{3}{m}+\tfrac{1}{3}{n}+\tfrac{1}{6}\right)
\tau^m
\\
 &= \tfrac{1}{3} \sum_{m=0}^{\infty} \sum_{\ell= 0}^{\left\lfloor m/2\right\rfloor} 
\frac{(-1)^{\ell}2^{2\ell}\left(-\tfrac{1}{2}m\right)_{\!\ell}\left(-\tfrac{1}{2}m+\tfrac{1}{2}\right)_{\!\ell} (-\k)^{\ell}}
{\left(-\tfrac{1}{3}{n}-\tfrac{2}{3}{m}+\tfrac{5}{6}\right)_{\!\ell}\ell!}
\,\frac{\Gamma\left(\tfrac{2}{3}{m}+\tfrac{1}{3}{n}+\tfrac{1}{6}\right)}{m!}\,\tau^m, 
\end{align*}
where we used in the last identity 
\[
\frac{1}{(m-2\ell)!}=\frac{(-m)_{2\ell}}{m!}=\frac{2^{2\ell}\left(-\tfrac{1}{2}m\right)_{\!\ell}\left(-\tfrac{1}{2}m+\tfrac{1}{2}\right)_{\!\ell}}{m!}
\]
and 
\[
\Gamma\left(-\ell +\tfrac{2}{3}{m}+\tfrac{1}{3}{n}+\tfrac{1}{6}\right)
= \Gamma\left(\tfrac{2}{3}{m}+\tfrac{1}{3}{n}+\tfrac{1}{6}\right)\frac{(-1)^{\ell}}{\left(-\tfrac{1}{3}{n}-\tfrac{2}{3}{m}+\tfrac{5}{6}\right)_{\!\ell}}. 
\]
Hence, we obtain the series expansion 
\begin{align*}%\label{mun kappa int f1}
\mu_{2n}(\tau;\k) 
&= \tfrac{1}{3} \sum_{m=0}^{\infty} \pFq{2}{1}{-\tfrac{1}{2}m,-\tfrac{1}{2}m+\tfrac{1}{2}}{-\tfrac{1}{3}{n}-\tfrac{2}{3}{m}+\tfrac{5}{6}}{4\k} 
\,\frac{\Gamma\left(\tfrac{2}{3}{m}+\tfrac{1}{3}{n}+\tfrac{1}{6}\right)}{m!}\tau^m. 
\end{align*}
Using the Gauss formula for the hypergeometric function, see \cite[\S 15.4]{refNevai86}, the latter expression evaluated at $\k=\tfrac{1}{4}$ gives \eqref{sol:mun14}. Similarly, series expansions about $\k=\tfrac{1}{3}$ can be obtained, from which the expression in Lemma \ref{lem:mun13} appears as a particular case.

\section{Numerical computations}\label{Sec:NumComp}
In this section we plot numerically the coefficient $\b_n$ in the three-term recurrence relation \eqref{eq:srr} for the symmetric sextic Freud weight \eqref{Freud642k}. As we explain below, these computations were done in Maple using the discrete equation \eqref{eq:rr642}, treating it as an \textit{initial value problem}, sometimes referred to as the ``orthogonal polynomial method". The earlier calculations in the 1990s by Jurkiewicz \cite{refJurk91}, Sasaki and Suzuki \cite{refSS91} and S\'en\'echal \cite{refSen92} solved \eqref{eq:rr642} as a \textit{discrete boundary problem}. They use the cubic \eqref{eq:cubic} to provide an estimate for $\b_n$ for large $n$. At a similar time, Demeterfi \etal\ \cite{refDDJT} and Lechtenfeld \cite{refLech92,refLech92b,refLechRR} {also solved} \eqref{eq:rr642}, though as a discrete initial value problem, but were only able calculate $\b_n$ for small values of $n$, up to $n=25$. The development of computers and Maple during the subsequent years has meant that we are now able to calculate $\b_n$ for large values of $n$ through a discrete initial value problem.
{We remark that none of the results in this section, or the next two sections, have been proved rigorously. The discussion of the behaviour of the recurrence coefficients $\b_n$ is based solely on the numerical calculations.}

By Theorem \ref{FC}, for large values of $n$, we have $\b_{n\pm k} \sim \b(n)$, for $k=0,1,2$.
Setting $t=-\k\tau^2$ in the cubic \eqref{eq:cubic} gives
\beq 60\b^3-12\tau\b^2+2\k\tau^2\b=n.\label{eq:cubici}\eeq
Differentiating \eqref{eq:cubici} with respect to $n$ gives
\beq 2\left(90\b^2-12\tau\b+\k\tau^2\right)\deriv{\b}{n}=1,\label{eq:cubicia}\eeq
which can be written as 
\beq \left\{\left(\b-\frac{\tau}{15}\right)^{\!2} -\frac{1}{90} \left(\frac{2}{5} - \k\right)\tau^2\right\}\deriv{\b}{n}=\frac{1}{180}. 
\label{eq:cubicia2}\eeq
Differentiating \eqref{eq:cubicia} with respect to $n$ gives
\beq 2 \left(\b-\frac{\tau}{15}\right)\left(\deriv{\b}{n}\right)^{\!2}+ \left\{\left(\b-\frac{\tau}{15}\right)^{\!2} -\frac{1}{90} \left(\frac{2}{5} - \k\right)\tau^2\right\}\deriv[2]{\b}{n}=0.\label{eq:cubicib}\eeq
It is therefore clear that $\k=\tfrac{2}{5}$ is a critical point. 

For $\k<\tfrac{2}{5}$, then $\b(n)$ is multivalued for %$\b>0$ and 
$n_-<n<n_+$, where 
\beq 
n_\pm=\frac{2\tau^3}{225}\left[15\k - 4\pm\sqrt{2}(2-5\k)^{3/2}\right].\label{eq:npm}\eeq
We observe that for $\k>\tfrac3{10}$ then $ \b(n)$ intersects the line $n=0$ only at $(0,0)$. For values of $\k=\tfrac3{10}$, $ \b(n)$ intersects the line $n=0$ at $(0,0)$ and at $\left(0,\tfrac{1}{10}\right)$. 
When $\k<\tfrac3{10}$, then $\b(n)$ 
intersects $n=0$ at 
\[\left(0,\tfrac{1}{30} \left(3\pm\sqrt{30}\sqrt{\tfrac{3}{10}-\k}\right)\right).\]
Plots of the real solution $\b(n)$ of the cubic \eqref{eq:cubici}
for $\k$ such that $\tfrac{1}{4}\leq\k\leq\tfrac{2}{5}$ are given in Figure \ref{Fig:3.3}. 

\begin{figure}[ht!]
\[\begin{array}{c@{\quad}c@{\quad}c@{\quad}c}
\fig{2}{beta_cubic16a} & \fig{2}{beta_cubic15a} &\fig{2}{beta_cubic14a} \\
\tau=15,\enskip\k=\tfrac{1}{6} & \tau=15,\enskip\k=\tfrac{1}{5} & \tau=15,\enskip\k=\tfrac{1}{4} \\[5pt]
\fig{2}{beta_cubic310a} & \fig{2}{beta_cubic13a} & \fig{2}{beta_cubic25a} \\%\\
\tau=15,\enskip\k=\tfrac{3}{10} & \tau=15,\enskip\k=\tfrac{1}{3} & \tau=15,\enskip\k=\tfrac{2}{5} 
\end{array}\]
\caption{\label{Fig:3.3}Plots of the real solution $\b(n)$ of the cubic \eqref{eq:cubici} for $\k$ such that $\tfrac{1}{6}\leq\k\leq\tfrac{2}{5}$, {with $\tau=15$}. The vertical lines are $n_\pm$ given by \eqref{eq:npm}.}
\end{figure}

Next we classify the roots of $U(x;\tau,\k)$ given by \eqref{Uployk}.
The value $\k=\tfrac{1}{4}$ is a critical one, when $\tau>0$. 
As such, the following hold: 
\begin{enumerate}[(i)]
\item if $\tau>0$ and $\k>\tfrac{1}{4}$, then $U(x)$ has four complex roots and a double root at $x=0$;
\item if $\tau>0$ and $0<\k<\tfrac{1}{4}$, then $U(x)$ has four real roots and a double root at $x=0$;
\item if $\tau>0$ and $\k=\tfrac{1}{4}$, then $U(x)=x^2(x^2-\tfrac{1}{2}\tau)^2$ which has three double roots at $x=\pm{\sqrt{\tfrac{1}{2}\tau}}$ and $x=0$;
\item if $\tau>0$ and $\k =0$, then $U(x)= x^4(x^2 -\tau)$, which has two real roots and a quadruple root at $x=0$.
\item if $\tau>0$ and $\k<0$, then $U(x)$ has two real roots, two purely imaginary roots and a double root at $x=0$;
\item if $\tau=0$, then for $t>0$, then $U(x)$ has two real roots, two purely imaginary roots and a double root at $x=0$ and for $t<0$, $U(x)$ has four complex roots and a double root at $x=0$;
\item if $\tau< 0$, then for $\k>0$, $U(x)$ has four complex roots and a double root at $x=0$, for $\k=0$, $U(x)$ has two purely imaginary roots and a quadruple root at $x=0$ and for $\k<0$, then $U(x)$ has two real roots, two purely imaginary roots and a double root at $x=0$;
\item if $\tau=\k=0$, then $U(x)$ has a sextuple root at $x=0$.
\end{enumerate}

\begin{figure}[ht!]
\[\begin{array}{c@{\qquad\qquad}c}
\fig{2}{Freud642_Uploti}& \fig{2}{Freud642_Uplotiii} \\ \text{(i)},\quad\tau=10,\enskip\k=\tfrac{11}{40},\red{\tfrac{1}{3}},\blue{\tfrac{3}{8}} & \text{(iii)},\quad\tau=9,\red{10},\blue{11},\enskip\k=\tfrac{1}{4}
\\[10pt] 
\fig{2}{Freud642_Uplotii} & \fig{2}{Freud642_Uplotiva} \\ 
\text{(ii)},\quad\tau=10,\enskip\k=\tfrac{7}{40},\red{\tfrac{1}{5}},\blue{\tfrac{9}{40}}
& \text{(iv)},\quad\tau=10,\enskip\k=0,\red{-\tfrac{1}{6}},\blue{-\tfrac{1}{3}} 
\end{array}\]
\caption{Plots of the sextic polynomial \eqref{Upoly} in the cases when
(i), $\tau>0$ and $\k>\tfrac{1}{4}$, (ii), $\tau>0$ and $\k=\tfrac{1}{4}$, (iii), $\tau>0$ and $0<\k<\tfrac{1}{4}$, and (iv), $\tau>0$ and $\k\leq0$.}
\end{figure}

The numerical computations were done in Maple using the discrete equation \eqref{eq:rr642}. Solving \eqref{eq:rr642} with $t=-\k\tau^2$ for $\b_{n+2}$ gives
\begin{align} \b_{n+2}&=\frac{n-2\k\tau^2\b_{n}+4\tau\b_n(\b_{n-1}+\b_{n}+\b_{n+1})}{6\b_n\b_{n+1}}\nonumber\\ &\qquad\qquad
-\frac{6\big(\b_{n-2} \b_{n-1} + \b_{n-1}^2 + 2 \b_{n-1} \b_{n} + \b_{n-1} \b_{n+1} + \b_{n}^2 + 2 \b_{n}\b_{n+1} + \b_{n+1}^2 \big)}{6\b_{n+1}}.\label{bn2}\end{align}
The initial conditions are
\beq \b_{-1}=0,\qquad \b_0=0, \qquad \b_1=\frac{\mu_2}{\mu_0},\qquad \b_2=\frac{\mu_{0} \mu_{4}-\mu_{2}^{2}}{\mu_{0} \mu_{2}},\label{beta_ics}\eeq
where $\mu_k(\tau;\k)$ is the $k$th moment
\[\mu_k(\tau;\k)=\int_{-\infty}^{\infty} x^k\exp\big(-x^6+\tau x^4-\k\tau^2x^2\big)\,\dx ,\]
and so using \eqref{bn2} we can evaluate $\b_n$ for $n\geq3$. 
The moments $\mu_0$, $\mu_2$ and $\mu_4$ are computed numerically using Maple.
Since the discrete equation \eqref{bn2} is highly sensitive to the initial conditions then it is necessary to use a high number of digits in Maple, usually several hundred, sometimes thousands, to do the computations; {see \S\ref{subsec710} for a discussion of how sensitive the problem is with an illustration in Figure \ref{fig:tau40_pert}}.

In the following subsections, we discuss the behaviour of the recurrence coefficients {for} the various cases mentioned above. 

\subsection{\label{casei}Case (i): $\tau>0$ and $\k>\tfrac{1}{4}$} 
In this case $U(x)$ has four complex roots and a double root at $x=0$ and is sometimes known as the ``one-branch case", cf.~\cite{refSen92}.

\begin{figure}[ht!]
\[\begin{array}{c@{\quad}c@{\quad}c}
\fig{2}{Freud642_tau25_0260} & \fig{2}{Freud642_tau25_0265}&
\fig{2}{Freud642_tau25_0270}\\
\tau=25,\enskip \k=0.26 &\tau=25,\enskip \k=0.265 &\tau=25,\enskip \k=0.27\\[5pt]
\fig{2}{Freud642_tau25_0275} & \fig{2}{Freud642_tau25_0280}&
\fig{2}{Freud642_tau25_0300}\\
\tau=25,\enskip \k=0.275 &\tau=25,\enskip \k=0.28 &\tau=25,\enskip \k=0.3\\[5pt]
\fig{2}{Freud642_tau25_0325} & \fig{2}{Freud642_tau25_0350}& \fig{2}{Freud642_tau25_0375}\\
\tau=25,\enskip \k=0.325 &\tau=25,\enskip \k=0.35 &\tau=25,\enskip \k=0.375
\end{array}\]
\caption{\label{fig:casei_tau25}Plots of the recurrence coefficient $\b_n$ when $\tau=25$, for various $\k$ such that $\tfrac{1}{4}<\k<\tfrac{2}{5}$, together with the real solution of the cubic \eqref{eq:cubici} (dashed line).}
\end{figure}

Plots of the recurrence coefficient $\b_n$ when $\tau=25$, for various $\k$ such that $\tfrac{1}{4}<\k<\tfrac{2}{5}$ are given in Figure \ref{fig:casei_tau25}, 
together with the real solution of the cubic \eqref{eq:cubici}.
For both small values of $n$ and for large $n$, $\b_n$ is approximately given by the real solution of the cubic \eqref{eq:cubici}. Whilst it might be expected that $\b_n$ tends to the cubic \eqref{eq:cubici} for large $n$, it is rather surprising that the cubic also gives a good approximation for small values of $n$. We also note that as $\k$ increases, the size of the ``transition region" decreases {and the value of $n$ for which $\b_n$ does not follow the cubic increases as $\k$ increases. Further for $\k$ just above $\tfrac{1}{4}$, there is evidence of a three-fold structure.} 

When $\k=\tfrac{2}{5}$, then \eqref{eq:cubicia2} and \eqref{eq:cubicib} respectively give 
\[ \left(\b-\frac{\tau}{15}\right)^{\!2}\deriv{\b}{n}=\frac{1}{180},
\qquad %\text{and}\quad 
 \left(\b-\frac{\tau}{15}\right) \left\{2\left(\deriv{\b}{n}\right)^{\!2}+ \left(\b-\frac{\tau}{15}\right) \deriv[2]{\b}{n}\right\}=0. 
 \]
Hence a ``gradient catastrophe", i.e.\ the slope of the curve $\b(n)$ becomes infinite, occurs when
\[ \k=\frac{2}5, \qquad \b=\frac{\tau}{15}, \qquad n=\frac{4\tau^3}{225}.\] 
This is the case originally discussed by Br\'ezin \etal\ \cite{refBMP}. The recurrence coefficients $\b_n$ are plotted when $\tau=25$ in Figure \ref{Fig:casei2}(a).

When $\k>\tfrac{2}{5}$, from \eqref{eq:cubicia2} it follows that $\ds\deriv{\b}{n}>0$ for all $n\geq 0$. Note that $\ds\deriv[2]{\b}{n}=0$ when $\b(n)=\tfrac\tau{15}$ which, on account of \eqref{eq:cubici}, occurs when
$n=\frac{2}{15} \left(\k -\frac{4}{15}\right)\tau^3$. 
Hence, $\b(n)$ is a monotonically increasing function with an inflection point at $n=\frac{2}{15} \left(\k -\frac{4}{15}\right)\tau^3$. 
In the case when $\tau=25$ and 
$\k=0.425$, the recurrence coefficients $\b_n$ are plotted in Figure \ref{Fig:casei2}(b) and a plot of $\b_n-\b(n)$, where $\b(n)$ is the real solution of the cubic \eqref{eq:cubici}
is given in Figure \ref{Fig:casei2}(c).

\begin{figure}[ht!]
\[\begin{array}{c@{\quad}c@{\quad}c}
\fig{2}{Freud642_tau25_040} & \fig{2}{Freud642_tau25_0425} & \fig{2}{Freud642_tau25_0425diff}\\ 
\text{(a)},\quad\tau=25,\enskip\k=\tfrac{2}{5}& \text{(b)},\quad\tau=25,\enskip\k=0.425 & \text{(c)},\quad\b_n-\b(n)
\end{array}\]
\caption{\label{Fig:casei2}(a), A plot of the recurrence coefficient $\b_n$ for $\tau=25$ and $\k=\tfrac{2}{5}$. 
(b), A plot of the recurrence coefficient $\b_n$ when $\tau=25$ and $\k=0.425$. 
(c), A plot of $\b_n-\b(n)$ when $\tau=25$ and $\k=0.425$, where $\b(n)$ is the real solution of the cubic \eqref{eq:cubici}.} 
\end{figure}

\subsection{\label{caseii}Case (ii): $\tau>0$ and $0<\k<\tfrac{1}{4}$} In this case $U(x)$ has four real roots and a double root at $x=0$ and is sometimes known as the ``two-branch case", cf.~\cite{refSen92}.

To investigate this case, we set $\b_{2n}=u$, $\b_{2n+1}=v$ and $t=-\k\tau^2$ in \eqref{eq:rr642} which gives the system
\begin{subequations} \label{UVsys}
\begin{align}&6 u\!\left(u^{2}+6 u v +3 v^{2}\right)-4\tau u\!\left(u+2 v \right)+2\k\tau^2u=n,\label{UVsysa}\\
&6 v\!\left(3 u^{2}+6 u v +v^{2}\right)-4\tau v\!\left(2 u +v \right)+2\k\tau^2 v =n, \label{UVsysb}
\end{align}\end{subequations}
{i.e.\ we have replaced $\b_n$ with $n$ even by $u$ and $\b_n$ with $n$ odd by $v$}.
Letting $u=\xi-\eta$ and $v=\xi+\eta$, with $\eta\geq0$, in \eqref{UVsys} gives
\begin{subequations} %\label{sol:UVsys}
\begin{align}&144 \xi^{3}-72\tau \xi^{2}+4(2+3\k)\tau^2 \xi -2 \k\tau^3+3n=0, \label{sol:UVsysa}\\
&\eta^2 ={3 \xi^{2} -\tfrac{2}{3}\tau \xi +\tfrac{1}{6}\k\tau^2}.\label{sol:UVsysb}
\end{align}\end{subequations}
In Figure \ref{fig:uvplot} we plot the solutions $u(n)$ and $v(n)$ of the system \eqref{UVsys} for various $\k$, with $u(n)$ plotted in \blue{blue} and $v(n)$ in \red{red}. 
\begin{figure}[ht!]
\[\begin{array}{c@{\quad}c@{\quad}c}
\fig{2}{Freud642_uv_18a} &\fig{2}{Freud642_uv_16a} &\fig{2}{Freud642_uv_15a} \\
{\tau=10},\enskip\k=\tfrac{1}{8} & {\tau=10},\enskip\k=\tfrac{1}{6} & {\tau=10},\enskip\k=\tfrac{1}{5}
\end{array}\]
\caption{\label{fig:uvplot}Plots of solutions of the system \eqref{UVsys} for {$\tau=10$ in the cases when $\k=\tfrac{1}{8}$, $\k=\tfrac{1}{6}$ and $\k=\tfrac{1}{5}$},
with {$u(n)$} plotted in \blue{blue} and {$v(n)$} in \red{red}.} 
\end{figure}

\begin{figure}[ht!]
\[\begin{array}{c@{\quad}c@{\quad}c}
\fig{2}{Freud642_tau25_0025} & \fig{2}{Freud642_tau25_005} & \fig{2}{Freud642_tau25_0075}\\
\tau=25, \enskip\k=0.025 &\tau=25, \enskip\k=0.05
&\tau=25, \enskip\k=0.07 5\\[5pt]
\fig{2}{Freud642_tau25_010}&\fig{2}{Freud642_tau25_0125c}& \fig{2}{Freud642_tau25_015c} \\
\tau=25,\enskip \k=0.1 &
\tau=25, \enskip\k=0.125&\tau=25, \enskip\k=0.15\\[5pt]
\fig{2}{Freud642_tau25_0175} & \fig{2}{Freud642_tau25_020c} & \fig{2}{Freud642_tau25_0210}\\[5pt]
\tau=25,\enskip \k=0.175 &\tau=25,\enskip \k=0.2
&\tau=25,\enskip \k=0.21\\[5pt]
\fig{2}{Freud642_tau25_0220} & \fig{2}{Freud642_tau25_0230} & \fig{2}{Freud642_tau25_0240}\\[5pt]
\tau=25,\enskip \k=0.22 &\tau=25,\enskip \k=0.23
&\tau=25,\enskip \k=0.24
\end{array}\]
\caption{\label{Figure76}Plots of the recurrence coefficient $\b_n$ when $\tau=25$, for various $\k$ such that $0<\k<\tfrac{1}{4}$, solutions of the system \eqref{UVsys}, with $u(n)$ plotted in \blue{blue} and $v(n)$ in \red{red} and the real solution of the cubic \eqref{eq:cubici} (dashed line).}
\end{figure}

The discriminant of \eqref{sol:UVsysa}
\[ {\Delta}=36864{\tau^6}(1-3\k)^3-5038848n^2,\]
so $\Delta=0$ when
\beq
n_1=\tfrac{4}{9}\,(\tfrac{1}{3}-\k)^{3/2}\tau^3.\nonumber\eeq
Also $u=v=\xi$ when $\eta=0$, so from \eqref{sol:UVsysb}
\[ \xi=\tfrac{1}{9}\big(1+\tfrac{1}{2}\sqrt{4-18\k}\,\big)\tau,\]
and hence from \eqref{sol:UVsysa}
\beq n_2=\frac{\left[2(4-27\k)+(4-18\k)^{3/2}\right]\tau^3}{243}
.\nonumber\eeq
We note that $n_2=0$ when $\k=\tfrac{1}{6}$.
Also $n_1=n_2$ when
\[\frac{4\sqrt{3}}{81}\,(1-3\k)^{3/2}=\frac{2(4-27\k)+(4-18\k)^{3/2}}{243},\]
which has solution $\k=-\tfrac{2}{3}$.

Subtracting the equations in the system \eqref{UVsys} yields
\beq (u-v)\big[3(u^2+4uv+v^2)-2\tau(u+v)+\k\tau^2\big]=0,\label{UVsys1}\eeq
and multiplying \eqref{UVsysa} by $v$, \eqref{UVsysb} by $u$ and subtracting yields
\beq (u-v)\big[12uv(u+v)-4\tau uv-n\big]=0.\label{UVsys2}\eeq
Assuming $u\not=v$, solving \eqref{UVsys1} for $u$ gives
\[ u= -2v+\tfrac{1}{3}\tau\pm\tfrac{1}{3}\sqrt{27v^2-6\tau v+\tau^2(1-3\k)},\]
and then substituting this into \eqref{UVsys2} gives
\[180 v^{3}-36\tau v^{2}+4\tau^{2}(3 \k-1)v\pm4(\tau -9 v) v\sqrt{27v^2-6\tau v+\tau^2(1-3\k)} = 3 n.\]

{In Figure \ref{Figure76} plots of the recurrence coefficient $\b_n$ when $\tau=25$, for various $\k$ such that $0<\k<\tfrac{1}{4}$, solutions of the system \eqref{UVsys}, with $u(n)$ plotted in \blue{blue} and $v(n)$ in \red{red} and the real solution of the cubic \eqref{eq:cubici} (dashed line). Initially $\b_{2n}$ and $\b_{2n+1}$ follow the system \eqref{UVsys}, $\b_{2n}$ follows $u(n)$ and $\b_{2n+1}$ follow $v(n)$. After the ``transition region", $\b_n$ follows the cubic \eqref{eq:cubici}. The size of the ``transition region" decreases as $\k$ decreases. Analogous to the previous case, for $\k$ just below $\tfrac{1}{4}$, there is evidence of a three-fold structure.}

\subsection{\label{caseiii}Case (iii): $\tau>0$ and $\k=\tfrac{1}{4}$} 
In the previous two subsections we saw that the behaviour of $\b_n$ for small $n$ was quite different depending whether $\k>\tfrac{1}{4}$ or $0,\k<\tfrac{1}{4}$.
In the case when $t=-\tfrac{1}{4}\tau^2$ %and so 
the weight is given by
\beq \w(x;\tau)=\exp\left\{-x^2(x^2-\tfrac{1}{2}\tau)^2\right\},\label{w642qr}\eeq
and $U(x)= x^2(x^2-\tfrac{1}{2}\tau)^2$, which has three double roots at $x=0$, $x=\pm\sqrt{\tfrac{1}{2}\tau}$.
This case was discussed by S\'en\'echal \cite[Figures 5, 6]{refSen92} who noted that ``the upper branch contains twice as many points as the lower one"; see also Demeterfi \etal\ \cite[Figure 7]{refDDJT}, Boobna and Ghosh \cite[Figure 2]{refBG13}. 

In Figure \ref{fig1_caseiii} the recurrence coefficients $\b_n$ for the weight \eqref{w642qr} are plotted in the cases when $\tau=20$, $\tau=25$ and $\tau=30$. The recurrence coefficients $\b_{3n}$ are plotted in \blue{blue}, $\b_{3n+1}$ in \green{green} and $\b_{3n-1}$ in \red{red}. These show that initially the recurrence coefficients $\b_{3n}$ follow one curve whilst $\b_{3n+1}$ and $\b_{3n-1}$ appear to follow the same curve. Then for $n$ sufficiently large, all the recurrence coefficients $\b_n$ follow the same curve. {We remark that as $\tau$ increases it becomes more difficult to distinguish between the $\b_{3n+1}$ (in \green{green}) and $\b_{3n-1}$ (in \red{red}) recurrence coefficients.}
\begin{figure}[ht!]
\[ \begin{array}{c@{\quad}c@{\quad}c}
\fig{2}{Freud642_tau15_025gm} & 
\fig{2}{Freud642_tau20_025gm} & \fig{2}{Freud642_tau25_025gm}\\
\tau=15 & \tau=20 &\tau=25
\end{array} \]
\caption{\label{fig1_caseiii}Plots of the recurrence coefficients $\b_n$ for the weight \eqref{w642qr}, in the cases when $\tau=15$, $\tau=20$ and $\tau=25$. The recurrence coefficients $\b_{3n}$ are plotted in \blue{blue}, $\b_{3n+1}$ in \green{green} and $\b_{3n-1}$ in \red{red}.} 
\end{figure}

To investigate this case, we set $\b_{3n}=x$, $\b_{3n\pm1}=y$ and $t=-\tfrac{1}{4}\tau^2$ in \eqref{eq:rr642} which gives the system
\begin{subequations} %\label{UVsys3}
\begin{align}&6 x\!\left(x^{2}+4 x y +5 y^{2}\right)-4\tau x\!\left(x+2 y \right)+\tfrac{1}{2}\tau^2x=n,\label{UVsys3a}\\%\frac{n}{N},\\
&6 y\!\left(x^{2}+ 5x y +4y^{2}\right)-4\tau y\!\left(x +2y \right)+\tfrac{1}{2}\tau^2y =n,\label{UVsys3b}%\frac{n}{N}.
\end{align}\end{subequations}
{i.e.\ we have replaced $\b_n$, with $n$ divisible by $3$, by $x$ and the other $\b_n$ by $y$}.
Multiplying \eqref{UVsys3a} by $y$, \eqref{UVsys3b} by $x$ and subtracting gives
\beq (x-y)(6xy^2-n)=0.\label{Case3:eq1}\eeq
Also subtracting \eqref{UVsys3a} from \eqref{UVsys3b} gives
\beq (x-y)(2x+4y-\tau)(6x+12y-\tau)=0.\label{Case3:eq2}\eeq
If $x=y=\b$ then we obtain the cubic
\beq 60\b^3-12\tau \b^2+\tfrac{1}{2}\tau^2\b -n=0, \label{eq:cubic3}\eeq
which is \eqref{eq:cubici} with $\k=\tfrac{1}{4}$. 
{If $x\not= y$, then solving \eqref{Case3:eq2} for $x$ and substituting into \eqref{Case3:eq1} gives the two cubics
\begin{subequations}\label{cubicV}\begin{align} 12y^3-3\tau y^2+n&=0,\label{cubicVa}\\ 12y^3-\tau y^2+n&=0.\label{cubicVb}\end{align}\end{subequations}
Similarly, when $x\not=y$, solving \eqref{Case3:eq2} for $x$ and substituting into \eqref{Case3:eq1} gives the two cubics
\begin{subequations}\label{cubicU}\begin{align}
12x^3-12\tau x^2+3\tau^2x-8n=0,\label{cubicUa}\\ 
12x^3-4\tau x^2+\tfrac{1}{3}\tau^2x-8n=0.
\label{cubicUb}\end{align}\end{subequations}
The cubics \eqref{eq:cubic3}, \eqref{cubicVa} and \eqref{cubicUa} meet at the point $(\tau^3/36,\tau/6)$, as well as the origin, whereas the cubics \eqref{eq:cubic3}, \eqref{cubicVb} and \eqref{cubicUb} meet at the point $(\tau^3/972,\tau/18)$, as well as the origin. From equation \eqref{eq:cubicia} with $\k=\tfrac{1}{4}$, the cubic \eqref{eq:cubic3} has a positive gradient at $(\tau^3/36,\tau/6)$ whilst it has a negative gradient at $(\tau^3/972,\tau/18)$, so \eqref{cubicVa} and \eqref{cubicUa} are the relevant cubics. This is illustrated in Figure \ref{fig:uv3}(i).}

The real solutions of the cubics \eqref{eq:cubic3}, \eqref{cubicVa} and \eqref{cubicUa} are plotted in Figure \ref{fig:uv3}(ii). Plots of the recurrence coefficients $\b_n$ for the weight \eqref{w642qr}, in the cases when $\tau=20$, $\tau=25$ and $\tau=30$, together with the real solutions of the cubics \eqref{eq:cubic3}, \eqref{cubicV} and \eqref{cubicU}, are given in Figure \ref{fig2_caseiii}. {In these plots, it seems that the coefficients $\b_n$ lie on the curve \eqref{cubicU} when $n\equiv 0\mod 3$ and $\b_n$ lie on the curve \eqref{cubicV} when $n\not\equiv 0\mod 3$. The differences between those values are illustrated in Figure \ref{cubicVerror}.}
\begin{figure}[ht!]
\[\begin{array}{c@{\qquad}c}
\fig{2.75}{Freud642_3cubics} &\fig{2.75}{Freud642_uv_triple15}\\
\text{(i)} & \text{(ii)}
\end{array}\]
\caption{{(i), Plots of the real solutions of the cubics (a) \eqref{cubicVa}, (b) \eqref{cubicUa}, (c) \eqref{cubicVb}, (d) \eqref{cubicUb}, in \red{red}, \blue{blue}, \purple{purple} and \green{green}, respectively, together with the cubic \eqref{eq:cubic3} in black}.
(ii), Plots of the real solutions of the cubics \eqref{eq:cubic3}, \eqref{cubicVa} and \eqref{cubicUa}, in black, \red{red} and \blue{blue}, respectively. The solid lines are the sections of the cubics which the recurrence coefficients approximately follow and the dashed lines other sections of the cubics in the positive quadrant.}
\label{fig:uv3}
\end{figure}

\begin{figure}[ht!]
\[ \begin{array}{c@{\quad}c@{\quad}c}
\fig{2}{Freud642_tau20_025cgm} & \fig{2}{Freud642_tau25_025cgm} &\fig{2}{Freud642_tau30_025cgm}\\
\tau=20&\tau=25&\tau=30
\end{array} \]
\caption{\label{fig2_caseiii}Plots of the recurrence coefficients $\b_n$ for the weight \eqref{w642qr}, in the cases when $\tau=20$, $\tau=25$ and $\tau=30$, together with the real solutions of the cubics \eqref{eq:cubic3}, \eqref{cubicVa} and \eqref{cubicUa}, which are plotted in black, \red{red} and \blue{blue}, respectively. The recurrence coefficients $\b_{3n}$ are plotted in \blue{blue}, $\b_{3n+1}$ in \green{green} and $\b_{3n-1}$ in \red{red}.} 
\end{figure}

\begin{figure}[ht!]
\[ \begin{array}{c@{\quad}c@{\quad}c}
\fig{2}{Freud642_tau20_025detail} & \fig{2}{Freud642_tau25_025detail} &\fig{2}{Freud642_tau30_025detail}\\
\tau=20&\tau=25&\tau=30
\end{array} \]
\caption{{Plots of the recurrence coefficients $\b_n$ for the weight \eqref{w642qr}, in the cases when $\tau=20$, $\tau=25$ and $\tau=30$, together with the real solutions of the cubics \eqref{eq:cubic3}, \eqref{cubicVa} and \eqref{cubicUa}, which are plotted in black, \red{red} and \blue{blue}, respectively, showing in detail the region where the recurrence coefficients start following \eqref{eq:cubic3}. The recurrence coefficients $\b_{3n}$ are plotted in \blue{blue}, $\b_{3n+1}$ in \green{green} and $\b_{3n-1}$ in \red{red}.}}
\end{figure}

\begin{figure}[ht!]
\[ \begin{array}{c@{\quad}c@{\quad}c}
\fig{2}{Freud642_tau20_errgm} & \fig{2}{Freud642_tau25_errgm} &\fig{2}{Freud642_tau30_errgm}\\
\tau=20&\tau=25&\tau=30
\end{array} \]
\caption{\label{cubicVerror} Plots of the $\b_{3n+1}-y(3n+1)$ in \green{green} and $\b_{3n-1}-y(3n-1)$ in \red{red}, with $y(n)$ the real solution of \eqref{cubicV}, in the cases when $\tau=20$, $\tau=25$ and $\tau=30$.}
\end{figure}

\begin{figure}[ht!]
\[\begin{array}{c@{\quad}c@{\quad}c}
\fig{2}{Freud642_tau25_0251} & \fig{2}{Freud642_tau25_0253} &\fig{2}{Freud642_tau25_0255} \\
\tau=25,\enskip\k=0.251&\tau=25,\enskip\k=0.253 &\tau=25,\enskip\k=0.255\\[5pt]
\fig{2}{Freud642_tau25_0249} & \fig{2}{Freud642_tau25_0247} & \fig{2}{Freud642_tau25_0245}\\
\tau=25,\enskip\k=0.249 &\tau=25,\enskip\k=0.247 &\tau=25,\enskip\k=0.245
\end{array}\]
\caption{\label{Fig810}Plots of $\b_n$ for $\tau=25$, in the cases when $0.245\leq\k\leq0.255$. The recurrence coefficients $\b_{3n}$ are plotted in \blue{blue}, {$\b_{3n+1}$} in \green{green} and $\b_{3n-1}$ in \red{red}. As $|\k-\tfrac{1}{4}|$ increases the number of oscillations increases.}
\end{figure}

In Figure \ref{Fig810} plots of $\b_n$ for $\tau=25$, in the cases when $0.245\leq\k\leq0.255$. The recurrence coefficients $\b_{3n}$ are plotted in \blue{blue}, {$\b_{3n+1}$} in \green{green} and $\b_{3n-1}$ in \red{red}. As $|\k-\tfrac{1}{4}|$ increases we see that the number of oscillations increases. We note that in these plots, when $\k$ is close to $\tfrac{1}{4}$, then $\b_{3n+1}$ and $\b_{3n-1}$ essentially are interchanged as $\k$ passes through $\tfrac{1}{4}$.

\subsection{\label{caseiv}Case (iv): $\tau>0$ and $\k=0$}
In this case the weight is sextic-quartic Freud weight
\beq \w(x;\tau,0)=\exp\left(-x^6+\tau x^4\right),\label{w64}\eeq
with $\tau$ a parameter for which we obtained a closed form expressions for the moments in Lemmas \ref{qsmoment} and \ref{lem:mun13}. 

When $\k=0$ the cubic \eqref{eq:cubici} becomes
\[ 60\b^3-12\tau \b^2-n=0.\]
This is case (ii) with $\k=0$, i.e.\ $t=0$, discussed above. 
Setting $\k=0$ in \eqref{UVsys}, gives
\begin{subequations} \label{UVsys64} \begin{align}&6 u\!\left(u^{2}+6 u v +3 v^{2}\right)-4\tau u\!\left(u+2 v \right)=n,\\%\frac{n}{N},\\
&6 v\!\left(3 u^{2}+6 u v +v^{2}\right)-4\tau v\!\left(2 u +v \right)=n.%\frac{n}{N}.
\end{align}\end{subequations}
Letting $u=\xi-\eta$ and $v=\xi+\eta$, with $\eta\geq0$, in \eqref{UVsys64} gives
\begin{align*} &144 \xi^{3}-72\tau \xi^{2}+8\tau^2\xi +3n=0, \qquad \eta^2 ={3 \xi^{2} -\tfrac{2}{3}\tau \xi}. \end{align*}
This is illustrated in Figure \ref{Fig1:case_vi} in the cases when $\tau=20$, $\tau=25$ and $\tau=30$. In Figure \ref{Fig1:case_vi2} the ``transition region" is plotted in more detail showing a five-fold structure which becomes more prominent as $\tau$ increases.

\begin{figure}[ht!]
\[\begin{array}{c@{\quad}c@{\quad}c}
\fig{2}{Freud64_tau20gm1} &
\fig{2}{Freud64_tau25gm1} & \fig{2}{Freud64_tau30gm1}\\
\tau=20 &\tau=25 &\tau=30
\end{array}\]
\caption{\label{Fig1:case_vi}Plots of the recurrence coefficients $\b_n$ for the sextic-quartic Freud weight \eqref{w64} in the cases when %$\tau=15$, 
$\tau=20$, $\tau=25$ and $\tau=30$.}
\end{figure}

\begin{figure}[ht!]
\[\begin{array}{c@{\quad}c@{\quad}c}
\fig{2}{Freud64_tau20cg} & \fig{2}{Freud64_tau25cg} & \fig{2}{Freud64_tau30cg}\\
\tau=20 &\tau=25 &\tau=30
\end{array}\]
\caption{\label{Fig1:case_vi2}Plots of the recurrence coefficients $\b_n$ for the sextic-quartic Freud weight \eqref{w64} in the cases when %$\tau=15$, 
$\tau=20$, $\tau=25$ and $\tau=30$ showing that there is a five-fold structure in the ``transition region".}
\end{figure}

\subsection{\label{casev}Case (v): $\tau>0$ and $\k<0$} 
In this case $U(x)$ has two real roots, two purely imaginary roots and a double root at $x=0$. This case splits into two subcases: (a), when $-\tfrac{2}{3}<\k<0$; and (b), when $\k\leq-\tfrac{2}{3}$. This is due to when system \eqref{UVsys} has multivalued solutions as illustrated in Figure \ref{fig:uvplot2} where solutions of the system \eqref{UVsys} for {$\tau=10$ in the cases when} $\k=-\tfrac{1}{6}$, $\k=-\tfrac{2}{3}$ and $\k=-1$, are plotted.

\begin{enumerate}[(a)]
\item{If $-\tfrac{2}{3}<\k<0$, then as in \S\ref{caseii}, $\b_{2n}$ and $\b_{2n+1}$ follow the system \eqref{UVsys} until there is a ``transition region", which decreases in size as $\k$ decreases, then both follow the cubic \eqref{eq:cubici}.}

\item{If $\k\leq-\tfrac{2}{3}$, then there is no ``transition region", with $\b_{2n}$ and $\b_{2n+1}$ following the system \eqref{UVsys} until they switch to follow the cubic \eqref{eq:cubici}.}
\end{enumerate}

This is illustrated in Figure \ref{Fig715} where plots of the recurrence coefficients for $\tau=15$ and various $\k<0$ are given. We remark that in the cases when $\k=-\tfrac{2}{3}$ and $\k=-1$ there is no ``transition region".

\begin{figure}[ht!]
\[\begin{array}{c@{\quad}c@{\quad}c}
\fig{2}{Freud642_uv_16ma} &\fig{2}{Freud642_uv_23ma} &\fig{2}{Freud642_uv_1ma}\\
{\tau=10},\enskip\k=-\tfrac{1}{6} & {\tau=10},\enskip\k=-\tfrac{2}{3} & {\tau=10},\enskip\k=-1 %\k=-\tfrac32
\end{array}\]
\caption{\label{fig:uvplot2}Plots of solutions of the system \eqref{UVsys} for {$\tau=10$ in the cases} $\k=-\tfrac{1}{6}$, $\k=-\tfrac{2}{3}$ and $\k=-1$, with $u(n)$ plotted in \blue{blue} and $v(n)$ in \red{red}. The dashed line is the real solution of the cubic \eqref{eq:cubici}.}
\end{figure}
\begin{figure}[ht!]
\[\begin{array}{c@{\quad}c@{\quad}c}
\fig{2}{Freud642_tau15_18mg} & \fig{2}{Freud642_tau15_16mg}& \fig{2}{Freud642_tau15_14mg}\\
\tau=15,\enskip\k=-\tfrac{1}{8} & \tau=15,\enskip\k=-\tfrac{1}{6}& \tau=15,\enskip\k=-\tfrac{1}{4}\\%[5pt]
 \fig{2}{Freud642_tau15_13mg}& \fig{2}{Freud642_tau15_23mg}& \fig{2}{Freud642_tau15_1mg}\\
\tau=15,\enskip\k=-\tfrac{1}{3}& \tau=15,\enskip\k=-\tfrac{2}{3}& \tau=15,\enskip\k=-1
\end{array}\]
\caption{\label{Fig715}Plots of the recurrence coefficients for $\tau=15$ and various $\k<0$. Note that for $\k=-\tfrac{2}{3}$ and $\k=-1$ there is no ``transition region".}
\end{figure}

\subsection{\label{casevi}Case (vi): $\tau=0$ and $t\not=0$}
In this case the weight is quadratic-sextic Freud weight
\beq \w(x;0,t)=\exp\left(-x^6+tx^2\right),\label{Freud62}\eeq
with $t$ a parameter, which is a special case of the generalised sextic Freud weight discussed in \cite{refCJ21b}. For the weight \eqref{Freud62} we derived a closed form expression for the first moment in Lemma \ref{lemma:41}.

When $\tau=0$, the cubic \eqref{eq:cubic} becomes
\beq 60\b^3-2t \b-n=0.\label{eq:cubic62}\eeq
There are two scenarios for the recurrence coefficients, (a), $t>0$ and (b), $t<0$. 

\begin{enumerate}[(a)]\item 
When $t>0$, we set $\b_{2n}=u$, $\b_{2n+1}=v$ and $\tau=0$ in \eqref{eq:rr642}, giving the system
\begin{subequations}\label{UVsys62}
\begin{align} &6 u\!\left(u^{2}+6 u v +3 v^{2}\right)-2tu=n,\\ &6 v\!\left(3 u^{2}+6 u v +v^{2}\right)-2t v =n. 
\end{align}\end{subequations}
Then letting $u=\xi-\eta$ and $v=\xi+\eta$, with $\eta\geq0$, gives
%$\xi^{3}-4 t \xi + n =0$ and $\eta^2 ={3 \xi^{2} - \tfrac{1}{6} t}$.
\[48 \xi^{3}-4 t \xi + n =0, \qquad\eta^2 ={3 \xi^{2} - \tfrac{1}{6} t}.\]
The three functions $\b(n)$, $u(n)$ and $v(n)$ meet at the point
$\left(\tfrac{1}{9}(2t)^{3/2},\tfrac{1}{6}(2t)^{1/2}\right)$.
In Figure \ref{Fig716} the even recurrence coefficients $\b_{2n}$ are plotted in \blue{blue} and the odd recurrence coefficients $\b_{2n+1}$ are plotted in \red{red}, together with the real solutions of \eqref{UVsys62} and the real solution of the cubic \eqref{eq:cubic62}, when $t=30$, $t=40$ and $t=50$. These show that initially $\b_{2n}$ follow a curve approximated by $u(n)$, $\b_{2n+1}$ follow a curve approximated by $v(n)$ and for $n$ sufficiently large, all recurrence coefficients $\b_{n}$ follow a curve approximated by $\b(n)$.

\begin{figure}[ht!]
\[\begin{array}{c@{\quad}c@{\quad}c}
\fig{2}{Freud62_t100a}&\fig{2}{Freud62_t150a}& \fig{2}{Freud62_t200a}\\
\tau=0,\enskip t=100 &\tau=0,\enskip t=150 &\tau=0,\enskip t=200 
\end{array} \]
\caption{\label{Fig716}Plots of the recurrence coefficients $\b_{n}$ for the quadratic-sextic Freud weight \eqref{Freud62}, in the cases when %$t=50$, 
$t=100$, $t=150$ and $t=200$, together with the curves \eqref{UVsys62} and the cubic \eqref{eq:cubic62}. The recurrence coefficients $\b_{2n}$ are plotted in \blue{blue} and $\b_{2n+1}$ in \red{red}.}
\end{figure}
\item When $t<0$, then the recurrence coefficients $\b_n$ increase monotonically {and closely follow the real solution of the cubic \eqref{eq:cubic62}}. 
\end{enumerate}

\subsection{\label{casevii}Case (vii): $\tau<0$}
This case is similar to the previous case when $\tau=0$, except there are no closed form expressions for the moments.
There are two scenarios for the recurrence coefficients, (a), $\k<0$ (i.e.\ $t>0$) and (b), $\k>0$ (i.e.\ $t<0$). 
\begin{enumerate}[(a)]\item %(a), 
{When $\k<0$, 
$\b_{2n}$ and $\b_{2n+1}$ follow the system \eqref{UVsys} until they switch to follow the cubic \eqref{eq:cubici}.
This is illustrated in Figure \ref{Fig718}. }
\item When $\k\geq 0$, then the recurrence coefficients $\b_n$ increase monotonically and closely follow the real solution of the cubic \eqref{eq:cubici}.
\end{enumerate}
\begin{figure}[ht!]
\[\begin{array}{c@{\quad}c@{\quad}c}
\fig{2}{Freud642_tau10n_100} & \fig{2}{Freud642_tau10n_150} & \fig{2}{Freud642_tau10n_200}\\
\tau=-10,\enskip \k=-1 &\tau=-10,\enskip \k=-\tfrac32 &\tau=-10,\enskip \k=-2 
\end{array}\]
\caption{\label{Fig718}
Plots of the recurrence coefficients $\b_{n}$ for 
$\tau=-10$ in the cases when 
$\k=-1$, $\k=-\tfrac32$ and $\k=-2$, together with the curves \eqref{UVsys} and the cubic \eqref{eq:cubici}. The recurrence coefficients $\b_{2n}$ are plotted in \blue{blue} and $\b_{2n+1}$ in \red{red} and $\b(n)$ the dashed black line.}
\end{figure}

\subsection{\label{caseviii}Case (viii): $\tau=0$ and $t=0$}
In this case the weight is $\w(x;0,0)=\exp(-x^6)$, so the moments are given by
\[ \mu_{2k}=\imp x^{2k}\exp(-x^6)\,\dx = \tfrac{1}{3}\Gamma(\tfrac{1}{3}k+\tfrac{1}{6}),\qquad \mu_{2k+1}=0,\]
and hence the recurrence coefficients $\b_n$ are expressible in terms of Gamma functions.

%\newpage
\subsection{\label{subsec711}Large values of $\tau$}
In this subsection we illustrate the effect of increasing the value of $\tau$, keeping $\k$ fixed. The numerical computations suggest that the regions of quasi-periodicity are more prominent as $\tau$ increases.
In Figure \ref{Fig817} $\b_n$ is plotted for $\tau=40$, $45$, $50$, in the cases when $\k=0.275$, $\k=0.3$, $\k=0.335$ and $\k=0.365$.
In Figure \ref{Fig818} $\b_n$ is plotted for $\tau=30$, $35$, $40$, in the cases when $\k=\tfrac{1}{5}$, $\k=\tfrac{1}{6}$ and $\k=\tfrac{1}{8}$.
In Figure {\ref{Fig819}} $\b_n$ is plotted for $\tau=40$, $45$, $50$, in the cases when $\k=0.249$ and $\k=0.251$. The recurrence coefficients $\b_{3n}$ are plotted in \blue{blue}, {$\b_{3n+1}$} in \green{green} and $\b_{3n-1}$ in \red{red}. As $\tau$ increases we see that the number of oscillations increases and also that $\b_{3n+1}$ and $\b_{3n-1}$ interchange between the two cases.
\begin{figure}[ht!]
\[\begin{array}{c@{\quad}c@{\quad}c}
\fig{2}{Freud642_tau40_0275}& \fig{2}{Freud642_tau45_0275} & \fig{2}{Freud642_tau50_0275}\\
\tau=40,\enskip \k=0.275 &\tau=45,\enskip \k=0.275 &\tau=50,\enskip \k=0.275\\[5pt]
\fig{2}{Freud642_tau40_030}& \fig{2}{Freud642_tau45_030} & \fig{2}{Freud642_tau50_030}\\
\tau=40,\enskip \k=0.3 &\tau=45,\enskip \k=0.3 &\tau=50,\enskip \k=0.3\\[5pt]
\fig{2}{Freud642_tau40_0335}& \fig{2}{Freud642_tau45_0335} & \fig{2}{Freud642_tau50_0335}\\
\tau=40,\enskip \k=0.335&\tau=45,\enskip \k=0.335 &\tau=50,\enskip \k=0.335\\[5pt]
\fig{2}{Freud642_tau40_0365}& \fig{2}{Freud642_tau45_0365}& \fig{2}{Freud642_tau50_0365}\\
\tau=40,\enskip \k=0.365 &\tau=45,\enskip \k=0.365 &\tau=50,\enskip \k=0.365
\end{array}\]
\caption{\label{Fig817}Plots of $\b_n$ for $\tau=40$, $45$, $50$, in the cases when $\k=0.275$, $\k=0.3$, $\k=0.335$ and $\k=0.365$.}
\end{figure}

\begin{figure}[ht!]
\[\begin{array}{c@{\quad}c@{\quad}c}
\fig{2}{Freud642_tau30_020c}&\fig{2}{Freud642_tau35_020c} & \fig{2}{Freud642_tau40_020c}\\
\tau=30,\enskip \k=\tfrac{1}{5}&\tau=35,\enskip \k=\tfrac{1}{5} &\tau=40,\enskip \k=\tfrac{1}{5}\\%[5pt]
\fig{2}{Freud642_tau30_16}& \fig{2}{Freud642_tau35_16} & \fig{2}{Freud642_tau40_16}\\
\tau=30,\enskip \k=\tfrac{1}{6}&\tau=35,\enskip \k=\tfrac{1}{6} &\tau=40,\enskip \k=\tfrac{1}{6}\\%[5pt]
\fig{2}{Freud642_tau30_0125c} & \fig{2}{Freud642_tau35_0125c} & \fig{2}{Freud642_tau40_0125c}\\
\tau=30,\enskip \k=\tfrac{1}{8}&\tau=35,\enskip \k=\tfrac{1}{8} &\tau=40,\enskip \k=\tfrac{1}{8}
\end{array}\]
\caption{\label{Fig818}Plots of $\b_n$ for $\tau=30$, $35$, $40$, in the cases when $\k=\tfrac{1}{5}$, $\k=\tfrac{1}{6}$ and $\k=\tfrac{1}{8}$.}
\end{figure}
\begin{figure}[ht!]
\[\begin{array}{c@{\quad}c@{\quad}c}
\fig{2}{Freud642_tau40_0249} & \fig{2}{Freud642_tau45_0249} &
\fig{2}{Freud642_tau50_0249} \\
\tau=40,\enskip\k=0.249 &\tau=45,\enskip\k=0.249 &
\tau=50,\enskip\k=0.249 \\%[5pt]
\fig{2}{Freud642_tau40_0251} & \fig{2}{Freud642_tau45_0251} &
\fig{2}{Freud642_tau50_0251} \\
\tau=40,\enskip\k=0.251 &\tau=45,\enskip\k=0.251 &
\tau=50,\enskip\k=0.251
\end{array}\]
\caption{\label{Fig819}Plots of $\b_n$ for $\tau=40$, $45$, $50$, in the cases when $\k=0.249$ and $\k=0.251$. The recurrence coefficients $\b_{3n}$ are plotted in \blue{blue}, {$\b_{3n+1}$} in \green{green} and $\b_{3n-1}$ in \red{red}. As $\tau$ increases we see that the number of oscillations increases and also that $\b_{3n+1}$ and $\b_{3n-1}$ interchange between the two cases.}
\end{figure}

\subsection{\label{subsec79}Comparison between numerical calculations and closed formulae}
{As we have seen, the recurrence coefficients $\b_{n}(\tau;\k)$ can be expressed as Hankel determinants of the moments using the expressions in Corollary \ref{Cor:2.3}. In \S \ref{sec:closed moments} we derived closed formula expressions for the moments in the cases when $\k=\tfrac{1}{4}$, $\k=\tfrac{1}{3}$ and $\k=0$ via special functions, though closed formula expressions for other values of the parameters $(\k,\tau)$ are not known at present. The computation of such Hankel determinants is a highly challenging computational problem, even for matrices of relatively modest dimension. Such difficulty is more compounded when the moments, which are the determinant entries, are special functions.  
We remark that for the quartic Freud weight \eqref{Freud42}, where the recurrence coefficients are expressed in terms of parabolic cylinder functions \cite{refCJ18,refCJK}, Iserles and Webb \cite{refIW} comment that ``these explicit coefficients, unfortunately, cannot be computed easily and rapidly". 
%The behaviour of the recurrence coefficients for the quartic Freud weight \eqref{Freud42} is similar to that for the quadratic-sextic weight \eqref{Freud62} discussed in \S\ref{casevi}, see \cite{refCJ18,refCJK}.
In the cases when $\k=\tfrac{1}{4}$, $\k=\tfrac{1}{3}$ and $\k=0$, the numerical evaluation of the initial conditions \eqref{beta_ics} agrees exactly with the explicit expressions, to the accuracy used. However there is a significant difference in the time taken to compute $\b_1$ and $\b_2$ as illustrated in Table \ref{table71}, in the case when $\k=\tfrac{1}{3}$, for $\tau=30$, $\tau=40$ and $\tau=50$.}

\begin{table}[h!]
\centering
{\begin{tabular}{|c@{\enskip}|@{\enskip}r@{\enskip}|@{\enskip}r|}
\hline
$\tau$ & Exact & Numerical\\ \hline
30 & 1.75 & 34.92\\
40 & 4.97 & 276.98\\
50 & 235.65 & 805.43 \\ \hline
\end{tabular}}
\caption{\label{table71}{A comparison of the time taken (in seconds) to compute $\b_1$ and $\b_2$ using the closed form expressions and numerically in the case when $\k=\tfrac{1}{3}$, for $\tau=30$, $\tau=40$ and $\tau=50$.}}\end{table}

\comment{To illustrate how sensitive the problem is to the initial conditions, in Figure \ref{fig:tau40_pert}, in the case when $\tau=40$ and $\k=\tfrac{1}{3}$ where the initial condition $\b_1$ is (a), $\b_1=\ifrac{\mu_2}{\mu_0}$, (b), $\b_1=\ifrac{\mu_2}{\mu_0}+10^{-5}$ and (c), $\b_1=\ifrac{\mu_2}{\mu_0}+10^{-10}$,
with the other initial conditions as in \eqref{beta_ics}.
We see that even a small perturbation of the initial condition leads to a completely different behaviour.
\begin{figure}[ht!]
\[\begin{array}{c@{\quad}c@{\quad}c}
\fig{2}{Freud642_tau40_13exact} & \fig{2}{Freud642_tau40_13pert5} & \fig{2}{Freud642_tau40_13pert10}\\
\text{(a)},\enskip\b_1=\mu_2/\mu_0 & \text{(b)},\enskip\b_1=\mu_2/\mu_0+10^{-5} & 
\text{(c)},\enskip\b_1=\mu_2/\mu_0+10^{-10}
\end{array}\]
\caption{\label{fig:tau40_pert}{Recurrence coefficients for the sextic Freud weight with $\tau=40$ and $\k=\tfrac{1}{3}$. In (b) the value of $\b_1$ has been perturbed by $10^{-5}$ and in (c) by $10^{-10}$.}}
\end{figure}}

\subsection{\label{subsec710}Sensitivity to initial data: computational implications}
{We said earlier that the problem was highly sensitive to the initial conditions. 
A small perturbation of the initial conditions leads to a completely different behaviour, as illustrated in Figure \ref{fig:tau40_pert}(b) where $\b_1$ is perturbed by $10^{-10}$ with the other initial conditions as in \eqref{beta_ics} and the same number of digits as in \ref{fig:tau40_pert}(a).
Also if the number of the digits used is not sufficient then there is breakdown and some $\b_n$ are negative, for $n$ sufficiently large, as illustrated in Figure \ref{fig:tau40_pert}(c).}
\begin{figure}[ht!]
\[\begin{array}{c@{\quad}c@{\quad}c}
\fig{2}{Freud642_tau40_13exact} & \fig{2}{Freud642_tau40_13pert10}  & \fig{2}{Freud642_tau40_13digits}\\
\text{(a)},\enskip\b_1=\mu_2/\mu_0, \ \text{digits}=1200 & 
\text{(b)},\enskip\b_1=\mu_2/\mu_0+10^{-10} &
\text{(c)},\enskip\b_1=\mu_2/\mu_0, \ \text{digits}=800
\end{array}\]
\caption{\label{fig:tau40_pert}{Recurrence coefficients for the sextic Freud weight with $\tau=40$ and $\k=\tfrac{1}{3}$. 
%(a), With $\b_1=\mu_2/\mu_0$. 
In (b), the value of $\b_1$ has been perturbed by $10^{-10}$ and in (c), a smaller number of digits is used.}}
\end{figure}

{In fact, we believe that any choice for the initial conditions other than \eqref{beta_ics} will lead to a sequence $\big\{\b_n(\k,\tau)\big\}_{n\geq 0}$ with negative terms. More formally, we conjecture the following:
\begin{conjecture}{\rm
For all $\k,\tau\in\mathbb{R}$, there is a unique positive solution of the dP$_{\rm I}^{(2)}$ equation \eqref{eq:rr642} for which $\b_0(\k,\tau)=0$ and 
$\b_n(\k,\tau)>0$ for all $n\geq 1$, corresponding to the initial values \eqref{beta_ics}.
}\end{conjecture}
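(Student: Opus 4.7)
The plan is to establish the conjecture by separating it into existence and uniqueness claims.

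Existence is immediate from orthogonal polynomial theory: since the weight \eqref{Freud642} is strictly positive on $\mathbb{R}$, the normalising constants $h_n$ satisfy $h_n>0$, and hence $\b_n=h_n/h_{n-1}>0$ for all $n\geq 1$, while $\b_0=0$ by the convention $P_{-1}=0$. These coefficients satisfy \eqref{eq:rr642} by the first Lemma of \S\ref{sec:rec}, and the values of $\b_1,\b_2$ are given by \eqref{beta_ics} from direct evaluation of $h_0,h_1,h_2$ via \eqref{def:hn}, so a positive solution with the required initial data exists.

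For uniqueness, let $(\tilde\b_n)_{n\geq 0}$ be a solution of \eqref{eq:rr642} satisfying $\tilde\b_0=0$ and $\tilde\b_n>0$ for $n\geq 1$. Inspection shows the proof of the lemma immediately following Theorem \ref{FC} uses only positivity and the recurrence, so $\tilde\b_n/n^{1/3}\to 1/\sqrt[3]{60}$ as $n\to\infty$, whence $\sum\tilde\b_n^{-1/2}=+\infty$. Construct monic polynomials via $\tilde P_{n+1}(x)=x\tilde P_n(x)-\tilde\b_n\tilde P_{n-1}(x)$. Favard's theorem together with Carleman's determinacy criterion then produces a unique positive, symmetric Borel measure $d\tilde\sigma$ on $\mathbb{R}$ (up to total mass) orthogonalising the $\tilde P_n$, the symmetry following because the diagonal recurrence coefficients $\a_n$ vanish; by Corollary \ref{Cor:2.3} the associated Hankel determinants $\tilde\A_n,\tilde\B_n$ are positive. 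The decisive step is to show $d\tilde\sigma$ is a positive scalar multiple of $\w(x;\tau,t)\,dx$. Using the ladder-operator derivation of \eqref{eq:rr642} in reverse, combined with growth estimates from the $n^{1/3}$ asymptotic, one aims to show $d\tilde\sigma$ admits a smooth density $\tilde w$ for which \eqref{eq:rr642} forces the Pearson equation \eqref{eq:Pearson} with $\sigma(x)=x$ and $\vartheta(x)=-6x^6+4\tau x^4+2tx^2+1$, whose integrable positive solutions on $\mathbb{R}$ are exactly the scalar multiples of \eqref{Freud642}; determinacy then yields $\tilde\b_n=\b_n$ for all $n$.

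The main obstacle is this last step, namely inverting the passage from the weight equation to \eqref{eq:rr642}. The standard direction uses integration by parts against the Freud weight, but recovering the Pearson equation from \eqref{eq:rr642} for an abstractly-defined measure requires a priori regularity (existence of a smooth density) and sufficient decay to kill boundary contributions at $\pm\infty$; both must be bootstrapped from positivity and the asymptotic $\tilde\b_n\sim(n/60)^{1/3}$. An alternative route, in the spirit of \cite{refDKMVZa,refDKMVZb}, is a Riemann-Hilbert characterisation identifying the sextic Freud weight as the unique positive density whose recurrence coefficients satisfy \eqref{eq:rr642} with the prescribed initial data. Either approach must contend with the instability of \eqref{eq:rr642} as a forward recurrence, visible in Figure \ref{fig:tau40_pert}: it is precisely this instability that makes the uniqueness statement genuinely nontrivial, even though the moment-based solution is uniquely natural from the orthogonal polynomial viewpoint.
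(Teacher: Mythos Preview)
The statement you are trying to prove is presented in the paper as a \emph{conjecture}: the authors do not give a proof, and the remark following it cites only the (much simpler) second-order \dPI\ case as an analogue. So there is no proof in the paper to compare your proposal against; the question is whether your argument actually closes the problem.

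Your existence half is correct and standard. The uniqueness half, however, has a genuine gap, and you essentially concede it yourself in the final paragraph. Given a positive solution $(\tilde\b_n)$ with $\tilde\b_0=0$, Favard's theorem and Carleman's criterion do produce a determinate positive symmetric measure $d\tilde\sigma$ for which the $\tilde P_n$ are orthogonal. But nothing in that construction tells you $d\tilde\sigma$ is absolutely continuous, let alone that its density satisfies the Pearson equation \eqref{eq:Pearson}. The forward implication ``Pearson $\Rightarrow$ string equation \eqref{eq:rr642}'' uses integration by parts against a known smooth weight with known decay; reversing it requires you first to manufacture that smoothness and decay from nothing more than positivity and the $n^{1/3}$ asymptotic. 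You assert this can be ``bootstrapped'' but give no mechanism. The Riemann--Hilbert alternative you mention is not a proof either: the results of \cite{refDKMVZa,refDKMVZb} take the weight as input and analyse the polynomials, not the other way round, and extracting a uniqueness-of-weight statement from that machinery would itself be new work.

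Concretely, the conjecture is equivalent to: among all pairs $(\b_1,\b_2)\in(0,\infty)^2$, only the moment values \eqref{beta_ics} generate a forward orbit of \eqref{bn2} (with $\b_{-1}=\b_0=0$) that stays positive for all $n$. Your Favard argument, even if completed, would still need to rule out the possibility of a second positive orbit whose Favard measure is singular or is a different absolutely continuous weight also satisfying \eqref{eq:rr642}; the latter is exactly the inverse-Pearson step you have not supplied. As it stands, the proposal is a reasonable strategy outline but not a proof, and the conjecture remains open.
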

\begin{remark}{\rm It is well-known that the \dPI\ equation
\[\b_{n}(\b_{n+1}+\b_{n}+\b_{n-1} +K)=n,\]
with $K$ a constant and $\b_{0}=0$ has a unique positive solution \cite{refBNevai,refLewQ,refNevai83,refNevai84}.
}\end{remark}}

\section{Two-dimensional plots}\label{sec:2D}
In this section we analyse the evolution of the $\b_n$ compared to $\b_{n-1}$ as $n$ increases. The discussion in the previous section indicates how the plots in the $(\b_n,\b_{n-1})$-plane will behave when $n$ is small or $n$ is large. However, it is not so clear what the relationship between $\b_n$ and $\b_{n-1}$ is in the transition region. In fact, the plots of 
 $(\b_n,\b_{n-1})$ give further and different insight into the behaviour of the recurrence coefficient $\b_n$ as $n$ increases. 
 
\begin{figure}[ht!]
\[\begin{array}{c@{\qquad}c@{\qquad}c}
\tfig{4.4}{Freud642_tau40_0265a} & \tfig{4.4}{Freud642_tau40_0280a} & \tfig{4.4}{Freud642_tau40_0295a} \\
\tau=40, \enskip\k=0.265 &\tau=40, \enskip\k=0.28 &\tau=40, \enskip\k=0.295 \\[5pt]
\tfig{4.4}{Freud642_tau40_0305a} & \tfig{4.4}{Freud642_tau40_0315a} & \tfig{4.4}{Freud642_tau40_0325a} \\
\tau=40, \enskip\k=0.305 &\tau=40, \enskip\k=0.315 &\tau=40, \enskip\k=0.325\\[5pt]
\tfig{4.4}{Freud642_tau40_0335a} & \tfig{4.4}{Freud642_tau40_0355a} & \tfig{4.4}{Freud642_tau40_0365a} \\
\tau=40, \enskip\k=0.335 &\tau=40, \enskip\k=0.355 &\tau=40, \enskip\k=0.365
\end{array}\]
\caption{\label{Fig81}Plots of $(\b_n,\b_{n-1})$ for $\tau=40$ and $\tfrac{1}{4}<\k<\tfrac{2}{5}$.}
\end{figure}

{In Figure \ref{Fig81} we plot $(\b_n,\b_{n-1})$ for $\tau=40$ and various $\k$ with $\tfrac{1}{4}<\k<\tfrac{2}{5}$. In particular we note that the ``quasi-periodicity" for large $n$ varies for different $\k$. For example, when $\k=0.265$ it is three-fold,
when $\k=0.315$ it is ten-fold,
when $\k=0.335$ it is seven-fold and when $0.365$ it is four-fold. {One can compare with the plots in Figure \ref{Fig817} and observe this claim as $n$ increases.}}

In Figure \ref{Fig82}, we plot $(\b_n,\b_{n-1})$ for $\tau=40$, $\tau=50$ and $\tau=60$ with $\k=\tfrac{1}{3}$, illustrating what happens as $\tau$ increases. The ``quasi-periodic" region, which is seven-fold, in the centre becomes more prominent.
\begin{figure}[ht!]
\[\begin{array}{c@{\qquad}c@{\qquad}c}
\tfig{4.4}{Freud642_tau40_0333a} & \tfig{4.4}{Freud642_tau50_0333a} & \tfig{4.4}{Freud642_tau60_0333b} \\
\tau=40, \enskip\k=\tfrac{1}{3} &\tau=50, \enskip\k=\tfrac{1}{3} &\tau=60, \enskip\k=\tfrac{1}{3}
\end{array}\]
\caption{\label{Fig82}Plots of $(\b_n,\b_{n-1})$ for $\tau=40$, $50$, $60$ and $\k=\tfrac{1}{3}$.}
\end{figure}

Subtracting the equations in the system \eqref{UVsys}, and assuming $u\not=v$, yields
\beq 3(u^2+4uv+v^2)-2\tau(u+v)+\k\tau^2=0.\label{UVsys1a}\eeq
In Figure \ref{Fig83}, we plot $(\b_n,\b_{n-1})$ for $\tau=30$ and
$0\leq\k\leq0.1$, and the curve \eqref{UVsys1a}.
In Figure \ref{Fig84} plots of $(\b_n,\b_{n-1})$ for $\tau=40$ and $0.15\leq\k\leq0.24$ are given. As $\k$ increases the portion of the ``triangular region" is increasingly filled.
\begin{figure}[ht!]
\[\begin{array}{c@{\qquad}c@{\qquad}c}
\tfig{4.4}{Freud642_tau30_0000b} & \tfig{4.4}{Freud642_tau30_0020b} & \tfig{4.4}{Freud642_tau30_0040b}\\
\tau=30, \enskip\k=0 &\tau=30, \enskip\k=0.02 &\tau=30, \enskip\k=0.04 \\[5pt]
\tfig{4.4}{Freud642_tau30_0060b} & \tfig{4.4}{Freud642_tau30_0080b} & \tfig{4.4}{Freud642_tau30_0100b}\\
\tau=30, \enskip\k=0.06 &\tau=30, \enskip\k=0.08 &\tau=30, \enskip\k=0.1 
\end{array}\]
\caption{\label{Fig83}Plots of $(\b_n,\b_{n-1})$ for $\tau=30$ and $0\leq\k\leq0.1$ together with the curve \eqref{UVsys1a}, which is the red line.}
\end{figure}

\begin{figure}[ht!]
\[\begin{array}{c@{\qquad}c@{\qquad}c}
 \tfig{4.4}{Freud642_tau40_0125a} &\tfig{4.4}{Freud642_tau40_0150a} & \tfig{4.4}{Freud642_tau40_0175a} \\
\tau=40,\enskip \k=0.125 &\tau=40,\enskip \k=0.15 &\tau=40,\enskip \k=0.175 \\[5pt]
 \tfig{4.4}{Freud642_tau40_0200a} & \tfig{4.4}{Freud642_tau40_0210a} & \tfig{4.4}{Freud642_tau40_0240a}\\
\tau=40,\enskip \k=0.2 &\tau=40,\enskip \k=0.21 &\tau=40,\enskip \k=0.24 
\end{array}\]
\caption{\label{Fig84}Plots of $(\b_n,\b_{n-1})$ for $\tau=40$ and $0.15\leq\k\leq0.24$.}
\end{figure}

\begin{figure}[ht!]
\[\begin{array}{c@{\qquad}c@{\qquad}c}
 \tfig{4.4}{Freud642_tau40_0245b} & \tfig{4.4}{Freud642_tau40_0248b} & \tfig{4.4}{Freud642_tau40_0249b}\\
\tau=40,\enskip \k=0.245 &\tau=40,\enskip \k=0.248 &\tau=40,\enskip \k=0.249\\[5pt]
 \tfig{4.4}{Freud642_tau40_0251b} & \tfig{4.4}{Freud642_tau40_0252b} & \tfig{4.4}{Freud642_tau40_0255b}\\
\tau=40,\enskip \k=0.251 &\tau=40,\enskip \k=0.252 &\tau=40,\enskip \k=0.255
\end{array}\]
\caption{\label{Fig85}Plots of $(\b_n,\b_{n-1})$ for $\tau=40$ and $0.245\leq\k\leq0.255$, with $(\b_{3n},b_{3n-1})$ are plotted in \blue{blue}, $(\b_{3n+1},b_{3n})$ in \red{red} and $(\b_{3n+2},\b_{3n+1})$ in \green{green}.} 
\end{figure}
In Figure \ref{Fig85} plots of $(\b_n,\b_{n-1})$ for $\tau=40$ and $0.245\leq\k\leq0.255$ are given, with $(\b_{3n},\b_{3n-1})$ are plotted in \blue{blue}, $(\b_{3n+1},\b_{3n})$ in \red{red} and $(\b_{3n+2},\b_{3n+1})$ in \green{green}. We note that, as in Figure \ref{Fig810}, when $\k$ is close to $\tfrac{1}{4}$, then $\b_{3n+1}$ and $\b_{3n-1}$ essentially are interchanged as $\k$ passes through $\tfrac{1}{4}$. When $\k=\tfrac{1}{4}$ the plot of $(\b_n,\b_{n-1})$ just gives three lines, which seem to be straight and meet at a point. {This is entirely expected and follows directly from the discussion in \S\ref{caseiii}.}

\section{Volterra lattice hierarchy}\label{sec:Volterra}
The \textit{Volterra lattice hierarchy} is given by
\beq \pderiv{\b_n}{t_{2k}}= \b_n\Big(V_{n+1}^{(2k)}-V_{n-1}^{(2k)}\Big),\qquad k=1,2,\ldots,\label{eq:Volhier}\eeq
where $V_n^{(2k)}$ is a combination of various $\b_n$ evaluated at different points on the lattice, see, for example, \cite{refBDM25,refBM20} and the references therein. 

The first three flows $V_n^{(2k)}$ are given by
\begin{subequations}\label{VolHier}\begin{align}
V_{n}^{(2)} &= \b_{n},\\
V_{n}^{(4)} &=V_{n}^{(2)}\left(V_{n+1}^{(2)}+V_{n}^{(2)}+V_{n-1}^{(2)}\right) %\\ 
=\b_{n} \left(\b_{n+1}+\b_{n}+\b_{n-1}\right),\\
V_{n}^{(6)} 
 &= V_{n}^{(2)}\left( V_{n+1}^{(4)}+V_{n}^{(4)} + V_{n-1}^{(4)} + V_{n+1}^{(2)}V_{n-1}^{(2)} \right)\nonumber\\ 
&=\b_{n} \left( \b_{n+2}\b_{n+1}+\b_{n+1}^2+2\b_{n+1}\b_{n} + \b_{n}^2+2\b_{n}\b_{n-1} +\b_{n -1}^2+\b_{n-1}\b_{n-2} + \b_{n+1}\b_{n-1}\right).
\end{align}\end{subequations}
Higher order flows $V_n^{(2k)}$, with $k\geq 4$ can be obtained recursively based on the orthogonality of a polynomial sequence with respect to higher order Freud weights, as described in \cite[\S4(a)]{refCJK}. For the sake of illustration the next two flows are: 
\begin{align*}
V_{n\phantom{1}}^{(8)} &= V_{n\phantom{1}}^{(2)}\left( V_{n+1}^{(6)}+V_{n\phantom{1}}^{(6)} + V_{n-1}^{(6)} \right)+V_{n\phantom{1}}^{(4)}V_{n+1}^{(2)}V_{n-1}^{(2)} +V_{n+1}^{(2)}V_{n\phantom{1}}^{(2)}V_{n-1}^{(2)}\left(V_{n+2}^{(2)}+V_{n-2}^{(2)} \right),\\
 V_{n\phantom{1}}^{(10)}
 &= V_{n\phantom{1}}^{(2)} \left(V^{(8)}_{n+1}+V_{n\phantom{1}}^{(8)}+V^{(8)}_{n-1}\right)+V_{n\phantom{1}}^{(6)} V^{(2)}_{n+1} V^{(2)}_{n-1}
 +V^{(2)}_{n+1} V_{n\phantom{1}}^{(2)} V^{(2)}_{n-1} \left(V^{(4)}_{n+2}+V^{(4)}_{n-2}\right)\\
 &\qquad +V^{(2)}_{n+1} V^{(2)}_{n\phantom{1}} V^{(2)}_{n-1} \left\{\left(V^{(2)}_{n\phantom{1}}+V^{(2)}_{n-1}\right)V^{(2)}_{n+2}+\left(V^{(2)}_{n+1}+V^{(2)}_{n\phantom{1}}\right) V^{(2)}_{n-2}+V^{(2)}_{n+2} V^{(2)}_{n-2} \right\}.
\end{align*}
In fact, the Volterra lattice has an infinite hierarchy of commuting symmetries, as explained in \cite[\S2]{refCMW}. 

\begin{remark}{\rm
The discrete equation \eqref{eq:rr642} satisfied by the $\b_n$
can be written as
\[6V_n^{(6)}-4\tau V_n^{(4)} -2 t V_n^{(2)} = n,\]
and $\b_n$ satisfies the differential-difference equations
\begin{align*} \pderiv{\b_n}{t}&=\b_n \Big(V_{n+1}^{(2)}-V_{n-1}^{(2)}\Big)\\&=\b_n(\b_{n+1}-\b_{n-1}),\\[5pt] \pderiv{\b_n}{\tau}&=\b_n \Big(V_{n+1}^{(4)}-V_{n-1}^{(4)}\Big)\\&=\b_n\left[(\b_{n+2}+\b_{n+1}+\b_n)\b_{n+1} - (\b_{n}+\b_{n-1}+\b_{n-2})\b_{n-1}\right],
\end{align*}
recall \eqref{eq:langlat} and \eqref{eq:langlat4} in Lemma \ref{lem:34}, which are the first two equations in the Volterra hierarchy \eqref{eq:Volhier}.}
\end{remark}

Plots of $V_n^{(2)}(\tau;\k)$, $V_n^{(4)}(\tau;\k)$ and $V_n^{(6)}(\tau;\k)$, given by \eqref{VolHier}, are in Figure \ref{VolHier1} for $\tau=50$ and $\tfrac{1}{4}<\k<\tfrac{2}{5}$ 
and in Figure \ref{VolHier2}
for $\tau=40$ and $0<\k<\tfrac{1}{4}$.
These show that the plots of $V_n^{(2)}(\tau;\k)$, $V_n^{(4)}(\tau;\k)$ and $V_n^{(6)}(\tau;\k)$ for fixed $\tau$ and $\k$ have a very similar structure. The main difference is the asymptotics as $n\to\infty$ since 
from Lemma \ref{betan_asynp}, with $t=-\tau^2\k$, we have following formal series expansions
\[V_n^{(2)}(\tau;\k)=\b_n=\frac{n^{1/3}}{\ga}+\frac{\tau}{15}+\frac{(2-5\k)\tau^2\,\ga}{450\,n^{1/3}}
+\frac{2(4-15\k)\tau^3}{675\ga\,n^{2/3}}+\O\big(n^{-4/3}\big),\qquad\text{as}\quad n\to\infty,\]
with $\ga=\sqrt[3]{60}$, and so
\begin{align*}
V_n^{(4)}(\tau;\k)&\approx 3\b_n^2=\frac{\ga\,n^{2/3}}{20} +\frac{2\tau n^{1/3}}{5\,\ga}+\frac{(3-5\tau)\tau^2}{75}+\frac{2(1-3\k)\ga\tau^3}{675\,n^{1/3}}+\O\big(n^{-2/3}\big),\\ 
V_n^{(6)}(\tau;\k)&\approx 10\b_n^3=\frac{n}{6}+\frac{\tau\ga\,n^{2/3}}{30}+\frac{(4-5\k)\tau^2n^{1/3}}{15\,\ga}+\frac{(2-5\k)\tau^3}{75}+\O\big(n^{-1/3}\big),
\end{align*}
as $n\to\infty$.

\begin{figure}[ht!]
\[\begin{array}{c@{\quad}cc@{\quad}c}
V_n^{(2)} &V_n^{(4)} &V_n^{(6)} \\
\fig{2}{Freud642_tau50_0275_V2} & \fig{2}{Freud642_tau50_0275_V4} & 
\fig{2}{Freud642_tau50_0275_V6}\\
\tau=50,\quad \k=0.275 &\tau=50,\quad \k=0.275 &\tau=50,\quad \k=0.275\\[5pt]
\fig{2}{Freud642_tau50_03_V2} & \fig{2}{Freud642_tau50_03_V4} 
&\fig{2}{Freud642_tau50_03_V6}\\
\tau=50,\quad \k=0.3 &\tau=50,\quad \k=0.3 &\tau=50,\quad \k=0.3\\[5pt]
\fig{2}{Freud642_tau50_0325_V2} & \fig{2}{Freud642_tau50_0325_V4} 
&\fig{2}{Freud642_tau50_0325_V6}\\
\tau=50,\quad \k=0.325 &\tau=50,\quad \k=0.325 &\tau=50,\quad \k=0.325\\[5pt]
\fig{2}{Freud642_tau50_0350_V2} & \fig{2}{Freud642_tau50_0350_V4} 
&\fig{2}{Freud642_tau50_0350_V6}\\
\tau=50,\quad \k=0.35 &\tau=50,\quad \k=0.35 &\tau=50,\quad \k=0.35\\[5pt]
\fig{2}{Freud642_tau50_0375_V2} & \fig{2}{Freud642_tau50_0375_V4} 
&\fig{2}{Freud642_tau50_0375_V6}\\
\tau=50,\quad \k=0.375 &\tau=50,\quad \k=0.375 &\tau=50,\quad \k=0.375
\end{array}\]
\caption{\label{VolHier1}Plots of $V_n^{(2)}$ (black), $V_n^{(4)}$ (\blue{blue}) and $V_n^{(6)}$ (\red{red}) for $\tau=50$, with $\k=0.275$, $\k=0.3$, $\k=0.325$, $\k=0.35$
and $\k=0.375$.}
\end{figure}

\begin{figure}[ht!]
\[\begin{array}{c@{\quad}cc@{\quad}c}
V_n^{(2)} &V_n^{(4)} &V_n^{(6)} \\
\fig{2}{Freud642_tau40_0225_V2}& \fig{2}{Freud642_tau40_0225_V4}
& \fig{2}{Freud642_tau40_0225_V6}\\
\tau=40,\quad \k=0.225 &\tau=40,\quad \k=0.225 &\tau=40,\quad \k=0.225\\[5pt]
\fig{2}{Freud642_tau40_0200_V2}& \fig{2}{Freud642_tau40_0200_V4}
& \fig{2}{Freud642_tau40_0200_V6}\\
\tau=40,\quad \k=0.2 &\tau=40,\quad \k=0.2 &\tau=40,\quad \k=0.2\\[5pt]
\fig{2}{Freud642_tau40_0175_V2}& \fig{2}{Freud642_tau40_0175_V4}
& \fig{2}{Freud642_tau40_0175_V6}\\
\tau=40,\quad \k=0.175 &\tau=40,\quad \k=0.175 &\tau=40,\quad \k=0.175\\[5pt]
\fig{2}{Freud642_tau40_0150_V2}& \fig{2}{Freud642_tau40_0150_V4}
& \fig{2}{Freud642_tau40_0150_V6}\\
\tau=40,\quad \k=0.15 &\tau=40,\quad \k=0.15 &\tau=40,\quad \k=0.15\\[5pt]
\fig{2}{Freud642_tau40_0125_V2}& \fig{2}{Freud642_tau40_0125_V4}
& \fig{2}{Freud642_tau40_0125_V6}\\
\tau=40,\quad \k=0.125 &\tau=40,\quad \k=0.125 &\tau=40,\quad \k=0.125
\end{array}\]
\caption{\label{VolHier2}Plots of $V_n^{(2)}$ (black), $V_n^{(4)}$ (\blue{blue}) and $V_n^{(6)}$ (\red{red}) for $\tau=40$, with $\k=0.225$, $\k=0.2$ $\k=0.175$, $\k=0.15$ and $\k=0.125$.}
\end{figure}

\section{Discussion}
In this paper we have discussed the behaviour of the recurrence coefficient $\b_n$ in the three-term recurrence relation for the symmetric sextic Freud weight 
\beq \w(x;\tau,\k)=\exp\left\{-\left(x^6-\tau x^4+\k\tau^2x^2\right)\right\},\label{Freud642k1}\eeq
{supported on the real line, where $\tau$ and $\k$ are real parameters}. {In three cases, when $\tau>0$ and either $\k=0$, $\k=\tfrac{1}{4}$ or $\k=\tfrac{1}{3}$, we have obtained explicit expressions for the associated moments in terms generalised hypergeometric functions.}

The numerical computations show that there are three particular regions {in the $(\tau;\k)$-plane} of interest:
\begin{enumerate}[(i)]
\item if $\tau>0$ and $\tfrac{1}{4}<\k<\tfrac{2}{5}$ then the recurrence coefficient $\b_n$ approximately follows the cubic {curve}
\beq \label{eq:cubick} 
60\b^3-12\tau \b^2+2\k\tau^2\b -n=0,\eeq
both initially and for $n$ large;
\item if $\tau>0$ and $-\tfrac{2}{3}<\k<\tfrac{1}{4}$ then the recurrence coefficients $\b_{2n}$ and $\b_{2n+1}$ initially approximately follow the curves
\begin{align*}&6 u\!\left(u^{2}+6 u v +3 v^{2}\right)-4\tau u\!\left(u+2 v \right)+2\k\tau^2u=n,\\
&6 v\!\left(3 u^{2}+6 u v +v^{2}\right)-4\tau v\!\left(2 u +v \right)+2\k\tau^2 v =n,
\end{align*}
with $u=\b_{2n}$ and $v=\b_{2n+1}$, and {then} follow the cubic \eqref{eq:cubick} for $n$ large; 
\item if $\tau>0$ and $\k\approx \tfrac{1}{4}$ then the recurrence coefficients $\b_{3n}$ and $\b_{3n\pm1}$ approximately follow three curves initially and {then} follow the cubic {curve} \eqref{eq:cubick} for $n$ large. When $\tau>0$ and $\k= \tfrac{1}{4}$ then the recurrence coefficients $\b_{3n\pm1}$ approximately follow the same curve.
\end{enumerate}

In cases (i) and (ii) there is a ``transition region", which Jurkiewicz \cite{refJurk91} and S\'en\'echal \cite{refSen92} described as ``chaotic", though as discussed above, was subsequently shown not to be {the case} cf.~\cite{refBDE00,refBG24,refEynard09,refEynM11}. Our results support this point of view. The structure and size of the ``transition region", such as the value of $n$ when transition commences (an issue raised by S\'en\'echal \cite{refSen92}), appears to depend on both $\tau$ and $\k$ and it remains an open question to analytically describe this.

In case (iii), in the neighbourhood of $\k=\tfrac{1}{4}$, which is between the other cases, there is a ``three-fold" structure. The nature of this {mutation} between {the scenarios} (i) and (ii) is currently under investigation, and we do not pursue this further here.

Elsewhere the recurrence coefficient $\b_n$ either follows a curve, which is monotonically increasing, or $\b_{2n}$ and $\b_{2n+1}$ initially follow two curves which meet and then they follow the same curve, for example as shown in Figures \ref{Fig716} and \ref{Fig718}.

In a recent paper, Clarkson \etal\ \cite{refCDHM} studied the tertiary \dPI\ equation \cite{refGRP,refTGR}
\beq v_n(v_{n+1}+v_{n-1}+1) = (n+1)\ep,\qquad \ep>0, \label{tdpI} \eeq
with $v_{-1}=0$, which arises in quantum minimal surfaces \cite{refAHK,refHoppe} and found solutions in terms of the modified Bessel functions $I_{\pm1/6}(z)$ and $I_{\pm5/6}(z)$, i.e.\ the same modified Bessel functions that arise in the description of the moments of \eqref{Freud642k} when $\k=\tfrac14$. 
The tertiary \dPI\ \eqref{tdpI} also arises in connection with orthogonal polynomials in the complex plane with respect to the weight
\[w(z;t)=\exp\left\{-t|z|^2 +\tfrac13\rmi(z^3+\overline{z}^3)\right\},\]
with $t>0$ and $z=x+\rmi y\in\Com$ \cite{refFH}, with the \textit{same} unique solution as in \cite{refCDHM}.
Further Teodorescu \etal\ \cite{refTBAZW} show that
\eqref{tdpI} arises in the theory of random normal matrices \cite{refWZ}, which motivated the study in \cite{refFH}.
It is an interesting open question as to whether there is any relationship between these problems, though again we do not pursue this further here.

As remarked in \S\ref{Intro}, the symmetric sextic Freud weight \eqref{Freud642k1} is equivalent to the weight
\beq \label{Freud642N}
w(x)=\exp\left\{-N\!\left(g_6x^6+g_4 x^4+g_2x^2\right)\right\},\eeq
with parameters $N$, $g_2$, $g_4$ and $g_6>0$, which has been studied numerically by several authors, e.g.~\cite{refBM20,refBG13,refDDJT,refJurk91,refLech92,refLech92b,refLechRR,refSS91,refSen92}. For the weight \eqref{Freud642N} the critical quantity is $g_2g_6/g_4^2$ which {plays the role of $\k$} in \eqref{Freud642k1}. If $\tfrac{1}{4}<g_2g_6/g_4^2<\tfrac{2}{5}$, with $g_4<0$, then it is equivalent to the case discussed in \S\ref{casei} and if $0<g_2g_6/g_4^2<\tfrac{1}{4}$, with $g_4<0$, then it is equivalent to the case discussed in \S\ref{caseii}. The effect of increasing $N$ is equivalent to increasing $\tau$, {as discussed in \S\ref{subsec711} and illustrated in Figures \ref{Fig817}, \ref{Fig818} and \ref{Fig819}}.

\section*{Acknowledgements} 
We thank Ines Aniceto, Costanza Benassi, Marta Dell'Atti, Andy Hone, Chris Lustri, Antonio Moro, Walter Van Assche and Jing Ping Wang for their helpful comments and illuminating discussions.
PAC and KJ gratefully acknowledge the support of a Royal Society Newton Advanced Fellowship NAF$\backslash$R2$\backslash$180669. 
PAC and AL would like to thank the Isaac Newton Institute for Mathematical Sciences for support and hospitality during the programmes ``\textit{Applicable resurgent asymptotics: towards a universal theory}" and ``\textit{Dispersive hydrodynamics: mathematics, simulation and experiments, with applications in nonlinear waves}", supported by EPSRC grant number EP/R014604/1, and the satellite programme
``\textit{Emergent phenomena in nonlinear dispersive waves}", supported by EPSRC grant EP/V521929/1, when some of the work on this paper was undertaken. 
{We also thank the referees whose comments and suggestions have considerably improved the exposition.}

\subsection*{ORCID iDs} 
\begin{tabular}{ll}
Peter Clarkson & 0000-0002-8777-5284\\
Kerstin Jordaan & 0000-0002-1675-5366\\ 
Ana Loureiro & 0000-0002-4137-8822
\end{tabular}

\def\ams{American Mathematical Society}
\def\AAM{Acta Appl. Math.}
\def\ARMA{Arch. Rat. Mech. Anal.}
\def\AC{Acta Crystrallogr.}
\def\AM{Acta Metall.}
\def\ampa{Ann. Mat. Pura Appl. (IV)}
\def\AP{Ann. Phys., Lpz.}
\def\APNY{Ann. Phys., NY}
\def\APP{Ann. Phys., Paris}
\def\BAMS{Bull. Amer. Math. Soc.}
\def\CJP{Can. J. Phys.}
\def\CMP{Commun. Math. Phys.}
\def\CPAM{Commun. Pure Appl. Math.}
\def\CQG{Classical Quantum Grav.}
\def\CSF{Chaos, Solitons \&\ Fractals}
\def\DE{Diff. Eqns.}
\def\DU{Diff. Urav.}
\def\EJAM{Europ. J. Appl. Math.}
\def\FUNK{Funkcial. Ekvac.}
\def\IP{Inverse Problems}
\def\JAMS{J. Amer. Math. Soc.}
\def\JAP{J. Appl. Phys.}
\def\JCP{J. Chem. Phys.}
\def\JDE{J. Diff. Eqns.}
\def\JFM{J. Fluid Mech.}
\def\JJAP{Japan J. Appl. Phys.}
\def\JP{J. Physique}
\def\JPhCh{J. Phys. Chem.}
\def\JMAA{J. Math. Anal. Appl.}
\def\JMMM{J. Magn. Magn. Mater.}
\def\JMP{J. Math. Phys.}
\def\JNMP{J. Nonl. Math. Phys.}
\def\JPA{J. Phys. A: Math. Gen.}
\def\JPB{J. Phys. B: At. Mol. Phys.} 
\def\jpb{J. Phys. B: At. Mol. Opt. Phys.}
\def\JPC{J. Phys. C: Solid State Phys.} 
\def\JPCM{J. Phys: Condensed Matter} 
\def\JPD{J. Phys. D: Appl. Phys.}
\def\JPE{J. Phys. E: Sci. Instrum.}
\def\JPF{J. Phys. F: Metal Phys.}
\def\JPG{J. Phys. G: Nucl. Phys.} 
\def\jpg{J. Phys. G: Nucl. Part. Phys.}
\def\JSP{J. Stat. Phys.}
\def\JOSA{J. Opt. Soc. Am.}
\def\JPSJ{J. Phys. Soc. Japan}
\def\JQSRT{J. Quant. Spectrosc. Radiat. Transfer}
\def\LMP{Lett. Math. Phys.}
\def\LNC{Lett. Nuovo Cim.}
\def\NC{Nuovo Cim.}
\def\NIM{Nucl. Instrum. Methods}
\def\NL{Nonlinearity}
\def\NMJ{Nagoya Math. J.}
\def\NP{Nucl. Phys.}
\def\PL{Phys. Lett.}
\def\PMB{Phys. Med. Biol.}
\def\PR{Phys. Rev.}
\def\PRL{Phys. Rev. Lett.}
\def\PRS{Proc. R. Soc.}
\def\prsl{Proc. R. Soc. Lond. A}
\def\PRSL{Proc. R. Soc. A}
\def\PS{Phys. Scr.}
\def\PSS{Phys. Status Solidi}
\def\PTRS{Phil. Trans. R. Soc.}
\def\RMP{Rev. Mod. Phys.}
\def\RPP{Rep. Prog. Phys.}
\def\RSI{Rev. Sci. Instrum.}
\def\SAM{Stud. Appl. Math.}
\def\SSC{Solid State Commun.}
\def\SST{Semicond. Sci. Technol.}
\def\SUST{Supercond. Sci. Technol.}
\def\ZP{Z. Phys.}
\def\JCAM{J. Comput. Appl. Math.}

\def\OUP{Oxford University Press}
\def\CUP{Cambridge University Press}
\def\AMS{American Mathematical Society}
\def\SIGMA{SIGMA}% Symmetry Integrability Geom. Methods Appl.}

\def\bibitm{\vspace{-6pt}\bibitem}

\def\refjl#1#2#3#4#5#6#7{\bibitm{#1} \textrm{\frenchspacing#2}, #3, 
\textit{\frenchspacing#4}, \textbf{#5} (#6) #7.}
\def\refjnl#1#2#3#4#5#6#7{\bibitm{#1} \textrm{\frenchspacing#2}, %#3, 
\textit{\frenchspacing#4}, \textbf{#5} (#6) #7.}

\def\refpp#1#2#3#4{\bibitm{#1} \textrm{\frenchspacing#2}, \textrm{#3}, #4.}

\def\refjltoap#1#2#3#4#5#6#7{\bibitm{#1} \textrm{\frenchspacing#2}, \textrm{#3},
\textit{\frenchspacing#4} (#7). %, \textbf{#5}
#6.}% (#7).}

\def\refbk#1#2#3#4#5{\bibitm{#1} \textrm{\frenchspacing#2}, ``\textit{#3}", #4, #5.}

\def\refcf#1#2#3#4#5#6#7{\bibitm{#1} \textrm{\frenchspacing#2}, \textrm{#3},
in ``\textit{#4}", {\frenchspacing#5}, pp.\ #7, #6.}

\end{document}

\def\bibitm{\vspace{-6pt}\bibitem}

\def\refjl#1#2#3#4#5#6#7{\bibitm{#1} \textrm{\frenchspacing#2} #6 #3 \textit{\frenchspacing#4} \textbf{#5} #7}

\def\refpp#1#2#3#4#5{\bibitm{#1} \textrm{\frenchspacing#2} #3 \textrm{#4} (#5)}

\def\refbk#1#2#3#4#5{\bibitm{#1} \textrm{\frenchspacing#2} #5 \textit{#3} #4}

\def\refcf#1#2#3#4#5#6#7{\bibitm{#1} \textrm{\frenchspacing#2} #6 \textrm{#3} \textit{#4} {\frenchspacing#5} pp~#7}

\end{document}